\documentclass[journal,twocolumn,romanappendices,9pt]{IEEEtran}

 \usepackage{amsmath,epsfig,amssymb,algorithm,amsthm,cite,url,mathtools}

\usepackage{algpseudocode,tabularx}
\usepackage{hyperref}
\DeclarePairedDelimiter{\ceil}{\lceil}{\rceil}
\usepackage{mathtools}
\usepackage{lipsum}
\let\norm\undefined 
\DeclarePairedDelimiter\norm{\lVert}{\rVert}
\usepackage[font=scriptsize]{caption}
\newtheorem{theorem}{Theorem}
\newtheorem{lemma}[theorem]{Lemma}
\newtheorem{assumption}{Assumption~A-\kern-0pt}

\newtheorem{corollary}[theorem]{Corollary}
\newtheorem{remark}{Remark}
\newtheorem*{remark*}{Remark}

\usepackage{tikz,pgfplots}
\usetikzlibrary{patterns}
\usepgfplotslibrary{fillbetween}
\pgfplotsset{compat=1.7}
\usetikzlibrary{shapes.geometric, intersections,plotmarks}
\usetikzlibrary{calc}
\usepgfplotslibrary{groupplots}
\DeclareMathOperator*{\argmin}{\arg\!\min}

\newcommand{\asto}{\overset{\rm a.s.}{\longrightarrow}}
 
\setlength{\belowcaptionskip}{-10pt}
\usepackage{subfig}
  \usepackage{listings}
\usepackage{color} 
\definecolor{mygreen}{RGB}{28,172,0} 
\definecolor{mylilas}{RGB}{170,55,241}

\title{On the Precise Error Analysis of Support Vector Machines}
\author{Abla Kammoun and Mohamed-Slim Alouini
\thanks{A. Kammoun and M.S. Alouini are with the Computer, Electrical, and Mathematical Sciences and Engineering (CEMSE) Division, KAUST, Thuwal, Makkah Province, Saudi Arabia (e-mail: abla.kammoun@kaust.edu.sa, slim.alouini@kaust.edu.sa)}
}
\begin{document}

\maketitle
\begin{abstract}
This paper investigates the asymptotic behavior of the soft-margin and hard-margin support vector machine (SVM) classifiers for simultaneously high-dimensional and numerous data (large $n$ and large $p$ with $n/p\to\delta$) drawn from a Gaussian mixture distribution. Sharp predictions of the classification error rate  of the hard-margin and soft-margin SVM are provided, as well as asymptotic limits of as such important parameters as the margin and the bias. As a further outcome, the analysis allow for the  identification of the maximum number of training samples that the hard-margin SVM is able to separate. The precise nature of our results allow for an accurate performance comparison of the hard-margin and soft-margin SVM as well as a better understanding of the involved parameters (such as the number of measurements and the margin parameter) on the classification performance. Our analysis, confirmed by a set of numerical experiments, builds upon  the convex Gaussian min-max Theorem, and extends its scope to  new  problems never studied before by this framework.   

\end{abstract}
\section{Introduction}
With the advent of the era of big data, attention is now turned to modern classification problems that  require to solve  non-linear problems involving large and numerous data sets. Large margin classifiers constitute a typical example of these novel classification methods and include as particular cases support vector machines \cite{vapnik}, logistic regression \cite{McCullaghNelder89} and Adaboost \cite{FreundJCSS97}.  The performance of these methods is known to be very sensitive to some design parameters, the setting of which is considered as a critical step, as an inappropriate setting can lead to severe degradation in the performance of the underlying classification technique. To properly set these design parameters, cross validation is the standard approach that has been adopted in the machine learning research.  However, such an approach becomes rather computationally expensive in high dimensional settings, since it involves to design the classifier for each candidate  value of the design parameters. Recently, a new technique based on  large dimensional statistical analyses has been emerged to assist in the design of a set of machine learning algorithms including kernel clustering techniques \cite{couillet-first}, classification \cite{khalil,khalil-MLSP}, and regression. It is based on determining sharp performance characterizations  that can be assessed based on the foreknowledge of the data statistics or be approximated using training data. The advantages of this new technique are two-fold. First, it allows easy prediction of the performances for any set of design parameters, avoiding the  prohibitively high computational complexity of the cross-validation approach and paving the way towards optimal setting of the design parameters. Second, it is more instrumental to gain a deep understanding of the performances with respect to the data statistics and the different underlying parameters. However, the application of this approach has  been mainly concentrated on methods and algorithms in which the output possesses a closed-form expression,  as algorithms involving implicit formulation are  much less tractable. 

Recently, a line of research works has emerged that studies the performance of high-dimensional regression problems involving non-smooth convex optimization methods. 
The approaches that have thus far used can be classified into three main categories: a leave-one out approach proposed by El Karoui in \cite{elkaroui}, an approximate message passing based approach developed in \cite{Donoho2016} and finally the convex Gaussian min-max theorem (CGMT) based approach initiated by Stojnic \cite{stojnic} and further developed by Thrampoulidis {\it et al} in \cite{thrampoulidis-IT}. Out of these approaches, the CGMT has three main advantages: 1) it is  the  most direct approach in that it requires very little preliminary work; 2) it requires minimal  assumptions as compared to the other approaches ; 3) it allows for a unified approach to handle  generic problems, with less requirements on the structure of the objective function in the underlying optimization problem.

The present work focuses on the use of the CGMT for the asymptotic analysis of the popular support vector machines (SVM) \cite{Bishop}. Previous works considering the analysis of the SVM have been based on non-rigorous calculations using either the replica method \cite{Huang17} or a leave-one out based approach \cite{Mai}. It should be noted that the optimization problem involved in SVM could not be written as an instance of the general high-dimensional regression problem considered in \cite{thrampoulidis-IT}. Moreover, it raises several new challenges towards the direct application of the CGMT.  Although the considered setting assumes isotropically distributed Gaussian data which is less general than that of previous works in  \cite{Huang17} and   \cite{Mai}, the present work is to the best of our knowledge the first one that provides rigorous proofs for the analysis of SVM.  More specifically, our contributions lie on two levels. At the practical level, we establish a phase transition for the behavior of the hard-margin SVM, which shows that asymptotically the number of samples should be below a certain threshold for the hard-margin SVM to be feasible. If such a condition is satisfied, we provide asymptotic limits for the margin and the classification performance. Similarly, we provide sharp characterizations of the performance of the soft-margin SVM. Our analysis is confirmed by a set of numerical results which shows a good match even for finite dimensions. On the theoretical level, the present work makes a significant progress in contributing to the development of the CGMT framework. The consideration of SVM exemplifies a difficult situation in which the use of the CGMT poses several technical challenges, a list of which is presented in section \ref{sec:diff}. Our work develops new tools to handle these technicalities, which we believe will be key to extending the scope of CGMT to the asymptotic behavior of optimization-based classifiers in general.

The rest of the paper is organized as follows. Section \ref{sec:model} introduces the hard-margin and soft-margin SVM as well as the considered statistical model. Section \ref{sec:main_results} presents our main results along with  some important  implications. Numerical  illustrations are provided in section \ref{sec:numerical}. Finally, section \ref{sec:technical} is devoted to the development of the technical proofs. 


\section{Problem formulation}
\label{sec:model}
Assume we are at our disposition a set of training observations $\left\{({\bf x}_i, y_i)\right\}_{i=1}^n$ where for each ${\bf x}_i\in\mathbb{R}^{p}$ a given input vector, $y_i=1$ if ${\bf x}_i$ belongs to class $\mathcal{C}_1$ or ${y_i}=-1$ is ${\bf x}_i$ belongs to class $\mathcal{C}_{0}$.  We assume that there are $n_0$ observations in class $\mathcal{C}_0$ and $n_1$ observations in class $\mathcal{C}_1$, both of them are drawn from Gaussian distribution with different means and common covariance matrix equal to $\sigma^{2}{\bf  I}_p$. More specifically:
$$
i\in \mathcal{C}_k \ \ \Leftrightarrow \ \  {\bf x}_i\sim \mathcal{N}(\boldsymbol{\mu}_k,\sigma^2{\bf I}_p) \ \ . 
$$ 
As suggested by several previous studies \cite{houssem,liao,Elkhalil}, the performance of a classifier shall depend on the difference between the mean vectors $\boldsymbol{\mu}\triangleq\boldsymbol{\mu}_1-\boldsymbol{\mu}_0$ and the covariance matrix associated with each class, which is in our case equal to $\sigma^2{\bf I}_p$. 
Since the classification problem would not change upon a translation of all observations with the same vector, we will assume for technical reasons that $\boldsymbol{\mu}_0=-\boldsymbol{\mu}$ and $\boldsymbol{\mu}_1=\boldsymbol{\mu}$ without any loss of generality. Such an assumption has been made in the asymptotic analysis of SVM \cite{Huang17}
\subsection{Hard Margin SVM}
Given a set of training data  $\left\{({\bf x}_i, y_i)\right\}_{i=1}^n$ that is linearly separable, hard-margin SVM seeks for the affine plane that separates both classes with the maximum margin \cite{vapnik}. This amounts to solving the following optimization problem:
\begin{equation}\begin{array}{cc}
	\Phi^{(n)}\triangleq\displaystyle\min_{{\bf w},b} \ \ \norm{{\bf w}}_2^2\\
s.t. \ \  \forall i \in\left\{1,\dots,n\right\}, \ \ y_i\left({\bf w}^{T}{\bf x}_i+b\right)\geq 1 \ \ .
\end{array}
\label{P_hard_margin}
\end{equation}
Let $\hat{\bf w}_{H}$ and $\hat{b}_{H}$ solve the above problem, then the hard-margin classifier applied to an unseen observation ${\bf x}$  is given by ${L}_H({\bf x})={\rm sign}\left(\hat{\bf w}_{H}^{T}{\bf x}+\hat{b}_H\right)$. 
\subsection{Soft Margin SVM}
If the data are not linearly separable, the hard-margin optimization problem does not have a finite solution. Under such settings, one alternative is to use the soft-margin SVM which by construction tolerates that some training data are mis-classified but pays the cost of each misclassified observation by adding an upper bound on the number of the misclassified training observations. More formally, the soft-margin SVM is equivalent to solving the following optimization problem:
  \begin{equation}\begin{array}{cc}
\tilde{\Phi}\triangleq\displaystyle\min_{{\bf w},b,\left\{\xi_i\right\}_{i=1}^n} \norm{{\bf w}}_2^2 +\frac{\tilde{\tau}}{p}\sum_{i=1}^n \xi_i\\
s.t. \ \  \forall i \in\left\{1,\dots,n\right\}, \ \ y_i\left({\bf w}^{T}{\bf x}_i+b\right)\geq 1-\xi_i, \ \  \xi_i\geq 0\ \ ,
\end{array}
\label{eq:soft_margin}
\end{equation}
where $\tilde{\tau}$ is a strictly positive scalar, set beforehand by the user, and aims to make a trade-off between maximizing the margin and minimizing the training error. In this respect, a small $\tilde{\tau}$ tends to put more emphasize on the margin while a larger $\tau$ penalize the training error.   Let $\hat{\bf w}_S$ and $\hat{b}_{S}$ solve the above problem, then the soft-margin SVM classifier applied to an unseen observation ${\bf x}$ is given by $L_{S}({\bf x})={\rm sign}\left(\hat{\bf w}_{S}^{T}{\bf x}+\hat{b}_{S}\right)$. 
\section{Main results}
\label{sec:main_results}
The study of the statistical behavior of the hard-margin and soft-margin SVM is carried out under the following asymptotic regime:
\begin{assumption}
We shall assume the following
\begin{itemize}
\item $n$, $n_0$, $n_1$ and $p$ grow to infinity with $\frac{n}{p}\to \delta$, $\frac{n_0}{n}\to \pi_0$ and $\frac{n_1}{n}\to \pi_1$. 
\item  $\sigma^2$ is a  fixed strictly positive scalar, while $\norm{\boldsymbol{\mu}}_2\to \mu$.
\item The training samples ${\bf x}_1,\dots,{\bf x}_n$ are independent. Moreover, for $k\in\{0,1\}$, ${\bf x}_i\in\mathcal{C}_k$, if and only if ${\bf x}_i=y_k\boldsymbol{\mu}+{\bf z}_i$ with $y_k=1$ if $k=1$ and $y_k=-1$ if $k=0$. 
\end{itemize}
\label{ass:regime}
\end{assumption}

\subsection{Hard Margin SVM}
\label{sec:hard_margin}
In this section, we analyze the behavior of the hard-margin SVM under Assumption \ref{ass:regime}. 
\begin{theorem}
Let $\eta^\star(\rho)$ be the unique solution in $\eta$ to the following equation:
\begin{equation}
\eta=\frac{\pi_1\int_{\frac{\rho\mu}{\sigma}+\eta}^\infty (x-\frac{\rho \mu}{\sigma})Dx+\pi_0\int_{\frac{\rho\mu}{\sigma}-\eta}^\infty (\frac{\rho\mu}{\sigma}-x)}{\pi_1\int_{\frac{\rho\mu}{\sigma}+\eta}^\infty Dx+\pi_0\int_{\frac{\rho\mu}{\sigma}-\eta}^\infty Dx} \ \ .
\label{eta_star}
\end{equation}
	where $Dx=\frac{dx}{\sqrt{2\pi}}\exp(-\frac{x^2}{2})$. 
Assume that:
\begin{align}
	&\min_{ 0\leq \rho\leq 1}\frac{1}{1-\rho^2}\left(\pi_1\int_{\frac{\rho\mu}{\sigma}+\eta^\star(\rho)}^{\infty}(x-\frac{\rho\mu}{\sigma}-\eta^\star(\rho))^2Dx \right.\nonumber\\
	&\left.+\pi_0\int_{\frac{\rho\mu}{\sigma}-\eta^\star(\rho)}^\infty (x-\frac{\rho\mu}{\sigma}+\eta^\star(\rho))^2Dx\right)>\frac{1}{\delta}
\label{condition}
\end{align}
Then, under Assumption \ref{ass:regime}
$$
\mathbb{P}\left[\Phi=\infty, n \ \  \textnormal{large enough}\right]=1.
$$
\label{th:hardmargin_condition}
\end{theorem}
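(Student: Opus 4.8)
The plan is to identify $\{\Phi^{(n)}=\infty\}$ with the event that the training set is not strictly linearly separable and to control the latter through the convex Gaussian min-max theorem. Since $\min_{1\le i\le n} y_i(\mathbf{w}^T\mathbf{x}_i+b)$ is positively homogeneous of degree one in $(\mathbf{w},b)$, the data are separable if and only if
\[
\mathcal{S}^{(n)}:=\max_{\|\mathbf{w}\|_2^2+b^2=1}\ \min_{1\le i\le n} y_i(\mathbf{w}^T\mathbf{x}_i+b)>0,
\]
so that $\{\Phi^{(n)}=\infty\}=\{\mathcal{S}^{(n)}\le 0\}$, and it suffices to show that \eqref{condition} forces $\mathcal{S}^{(n)}<0$ for all large $n$, almost surely. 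Writing $\min_{1\le i\le n}(\cdot)=\min_{\boldsymbol{\lambda}\in\Delta_n}\sum_i\lambda_i(\cdot)$ over the probability simplex $\Delta_n$, substituting $\mathbf{x}_i=y_i\boldsymbol{\mu}+\mathbf{z}_i$, and using the invariance in law of the Gaussian matrix $\mathbf{Z}=[\mathbf{z}_1,\dots,\mathbf{z}_n]$ under column sign flips, $-\mathcal{S}^{(n)}$ takes the canonical primary form $\min_{\|\mathbf{w}\|_2^2+b^2=1}\max_{\boldsymbol{\lambda}\in\Delta_n}\big[\sigma\,\mathbf{w}^T\mathbf{G}\boldsymbol{\lambda}-\boldsymbol{\mu}^T\mathbf{w}-b\,\mathbf{y}^T\boldsymbol{\lambda}\big]$ with $\mathbf{G}$ standard Gaussian and all feasible sets compact. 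Because only a lower bound on this value is ultimately needed, it is enough to invoke the Gordon half of the CGMT (which requires compactness but not convexity), which replaces the problem by the auxiliary problem
\[
\phi_{AO}^{(n)}=\min_{\|\mathbf{w}\|_2^2+b^2=1}\ \max_{\boldsymbol{\lambda}\in\Delta_n}\Big[\sigma\|\mathbf{w}\|_2\,\mathbf{g}^T\boldsymbol{\lambda}+\sigma\|\boldsymbol{\lambda}\|_2\,\mathbf{h}^T\mathbf{w}-\boldsymbol{\mu}^T\mathbf{w}-b\,\mathbf{y}^T\boldsymbol{\lambda}\Big].
\]

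The next step is to scalarize $\phi_{AO}^{(n)}$. Splitting $\mathbf{w}$ into its component $\rho'$ along $\boldsymbol{\mu}/\|\boldsymbol{\mu}\|_2$ and its orthogonal part, the orthogonal direction enters only through $\sigma\|\boldsymbol{\lambda}\|_2\,\mathbf{h}^T\mathbf{w}$ with a nonnegative coefficient, hence is optimally anti-aligned with $\Pi_{\boldsymbol{\mu}^\perp}\mathbf{h}$ regardless of $\boldsymbol{\lambda}$; after showing, from the resulting $-\sigma\|\boldsymbol{\lambda}\|_2\|\Pi_{\boldsymbol{\mu}^\perp}\mathbf{h}\|_2$ term, that the optimizing $\boldsymbol{\lambda}$ has $\|\boldsymbol{\lambda}\|_2=O(n^{-1/2})$, the leftover rank-one term $\sigma\|\boldsymbol{\lambda}\|_2\rho'\,\mathbf{h}^T\boldsymbol{\mu}/\|\boldsymbol{\mu}\|_2$ is negligible. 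Using $-c\|\boldsymbol{\lambda}\|_2=\min_{\|\mathbf{u}\|_2\le c}\mathbf{u}^T\boldsymbol{\lambda}$ and interchanging the now bilinear, compact min-max, the inner problem becomes $\min_{\|\mathbf{u}\|_2\le c}\max_i(\sigma r g_i-by_i+u_i)$, whose value is the unique $t$ with $\sum_i(\sigma r g_i-by_i-t)_+^2=c^2$, where $r=\|\mathbf{w}\|_2$, $\rho'=\boldsymbol{\mu}^T\mathbf{w}/\|\boldsymbol{\mu}\|_2$, $c^2=\sigma^2 p(r^2-\rho'^2)$. Splitting the sum over the two classes, applying a law of large numbers uniformly over the compact ranges of the scalars $r,\rho',b$, and setting $\rho=\rho'/r$, $\eta=b/(\sigma r)$ (so that $b^2=1-r^2$ becomes $r=(1+\sigma^2\eta^2)^{-1/2}$), gives the deterministic limit
\[
\phi_{AO}^{(n)}\longrightarrow\phi^\star:=\min_{\eta\in\mathbb{R},\ \rho\in[-1,1]}\frac{\sigma\,m(\eta,\rho)-\rho\mu}{\sqrt{1+\sigma^2\eta^2}},
\]
where $m(\eta,\rho)$ is the unique solution of $\pi_1\int_{m+\eta}^{\infty}(x-m-\eta)^2Dx+\pi_0\int_{m-\eta}^{\infty}(x-m+\eta)^2Dx=\tfrac{1-\rho^2}{\delta}$.

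It then remains to show $\phi^\star>0\iff$ \eqref{condition}. Since the left-hand side of the defining equation for $m$ is strictly decreasing in $m$, one has $\phi^\star>0$ iff $\sigma m(\eta,\rho)>\rho\mu$ for all feasible pairs, i.e.\ iff for every $\rho$ that left-hand side evaluated at $m=\rho\mu/\sigma$, after minimizing over $\eta$, still exceeds $(1-\rho^2)/\delta$; differentiating in $\eta$ shows the optimal $\eta$ is precisely the root $\eta^\star(\rho)$ of \eqref{eta_star}, turning this requirement into exactly \eqref{condition} (the minimization over $\rho$ may be restricted to $[0,1]$ because the constraint depends only on $\rho^2$ and $\mu\ge 0$). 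Checking the limits $\eta\to\pm\infty$ and $\rho\to\pm1$ shows the infimum is attained in a compact set, so under \eqref{condition} one gets $\phi^\star>0$. Gaussian concentration of $\phi_{AO}^{(n)}$ then gives $\phi_{AO}^{(n)}>\phi^\star/2$ for all large $n$ a.s., and the Gordon inequality $\mathbb{P}\big(-\mathcal{S}^{(n)}<\phi^\star/2\big)\le 2\,\mathbb{P}\big(\phi_{AO}^{(n)}<\phi^\star/2\big)$, combined with the exponential concentration rate and the Borel-Cantelli lemma, yields $-\mathcal{S}^{(n)}>\phi^\star/2>0$, i.e.\ $\Phi^{(n)}=\infty$, for all large $n$ almost surely.

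The main obstacle is the CGMT step itself: the SVM feasibility problem is not presented in the min-max form required by the framework, so one must (i) carry out the scale-free reformulation that trades the unbounded dual cone for the compact simplex, (ii) handle the non-convex sphere constraint together with the $\|\boldsymbol{\lambda}\|_2\,\mathbf{h}^T\mathbf{w}$ coupling in the auxiliary objective — in particular justify discarding the rank-one residual via the a priori bound $\|\boldsymbol{\lambda}^\star\|_2=O(n^{-1/2})$ — and (iii) push through the uniform-in-parameters laws of large numbers that collapse the random auxiliary problem to $\phi^\star$. Once the deterministic problem is isolated, the equivalence with \eqref{condition} is a routine matter of monotonicity and first-order conditions.
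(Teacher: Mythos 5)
Your proposal is correct in its architecture and reaches the right deterministic condition, but it takes a genuinely different route from the paper. The paper stays with the Lagrangian dual of the quadratic program, whose multipliers live in the unbounded cone $\{{\bf u}\geq 0\}$; compactness is then manufactured by a triple truncation in $(r,B,\theta)$, the key estimate is that the AO cost grows linearly in the dual radius $\theta$ (forcing the PO to $+\infty$), and almost-sure statements are extracted through nested limits over $k,m$ together with a converse Borel--Cantelli argument that exploits independence of the AO events across $n$. You instead encode infeasibility as the sign of the normalized max-min margin over the compact set $\{\|{\bf w}\|_2^2+b^2=1\}\times\Delta_n$, which puts the problem in Gordon form with no truncation at all, handles the bias $b$ without the paper's $B$-bound, and lets you close the argument with Gaussian concentration of the Lipschitz functional $\phi_{AO}^{(n)}$ plus the ordinary Borel--Cantelli lemma; your scalar condition $\sigma m(\eta,\rho)>\rho\mu$ is the inverse-function restatement of the paper's $\overline{\ell}(\rho,\eta)>0$, and the first-order condition in $\eta$ recovers \eqref{eta_star} exactly as in the paper. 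Two points deserve more care than your sketch gives them. First, the discarded term $\sigma\|\boldsymbol{\lambda}\|_2\,\rho'\,{\bf h}^{T}\boldsymbol{\mu}/\|\boldsymbol{\mu}\|_2$ is only negligible \emph{relative to} $\sigma\|\boldsymbol{\lambda}\|_2\sqrt{r^2-\rho'^2}\,\|\Pi_{\boldsymbol{\mu}^{\perp}}{\bf h}\|_2$, and that comparison degenerates as $|\rho'|\to r$; rather than an a priori bound on $\|\boldsymbol{\lambda}^\star\|_2$, the cleaner fix (and the one the paper effectively uses) is to keep the term, absorb it into the coefficient of $\|\boldsymbol{\lambda}\|_2$, and note that $\frac{1}{\sqrt{p}}\rho\,{\bf h}^{T}\boldsymbol{\mu}/\|\boldsymbol{\mu}\|_2+\sqrt{1-\rho^2}\,\frac{1}{\sqrt{p}}\|{\bf h}_{\perp}\|_2\to\sqrt{1-\rho^2}$ uniformly in $\rho$. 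Second, concentration alone controls fluctuations about the mean; you still need the a.s.\ (or in-probability) convergence of $\phi_{AO}^{(n)}$ to pin the centering at $\phi^\star$ before summing the tails. Both are routine, so I regard your argument as a valid and in several respects simpler alternative to the paper's proof.
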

\begin{proof}
The proof is postponed to Section \ref{tech_hard_margin_cond}. 
\end{proof}
Theorem \ref{th:hardmargin_condition} establishes a phase transition phenomenon for the hard-margin SVM, according to which, the ratio between the number of samples and that of features should be less than a certain threshold for the hard-margin SVM to be capable of linearly separating the data. Equivalently, it can be used to have an idea of the minimum number of training samples that cannot be linearly separated without errors. Assuming that $n$ and $p$ are sufficiently large, if the  number of training samples is greater  than:
\begin{align*}
n>&p\left(\min_{1\leq \rho \leq 1}\frac{1}{1-\rho^2}\left(\pi_1\int_{\frac{\rho\mu}{\sigma}+\eta^\star(\rho)}^\infty(x-\frac{\rho\mu}{\sigma}-\eta^\star(\rho))^2Dx\right.\right.\\
&+\left.\left.\pi_0\int_{\frac{\rho\mu}{\sigma}-\eta^\star(\rho)}^\infty(x-\frac{\rho\mu}{\sigma}+\eta^\star(\rho))^2Dx\right) \right)^{-1}.
\end{align*}
then the hard margin SVM fails to linearly separate the training samples. 
To the best of our knowledge, a similar condition has never been established before, except from some works limited to the treatment of the one-dimensional case \cite{funaya}. The above result does not tell, however, as to when the hard-margin SVM guarantees perfect separation of the training samples.  This constitutes the objective of the following Theorem, which in addition to providing this condition, determines  almost sure limits of the margin, the bias, and the angle between the solution vector $\hat{\bf w}_H$ and vector $\boldsymbol{\mu}$.     
\begin{theorem}
Assume that:
\begin{align}
	&\min_{ 0\leq \rho\leq 1}\frac{1}{1-\rho^2}\left(\pi_0\int_{\frac{\rho\mu}{\sigma}+\eta^\star(\rho)}^{\infty}(x-\frac{\rho\mu}{\sigma}-\eta^\star(\rho))^2Dx\right.\nonumber\\
	&\left.+\pi_1\int_{\frac{\rho\mu}{\sigma}-\eta^\star(\rho)}^\infty (x-\rho\mu+\eta^\star(\rho))^2Dx\right)<\frac{1}{\delta}
\label{eq:cond_hard_margin}
\end{align}
	with $\eta^\star(\rho)$ given in \eqref{eta_star}. Define function ${D}_H$ in \eqref{eq:DH}:
	\begin{figure*}
\begin{align}
D_H(q_0,\rho,\eta)&\triangleq \sqrt{{\delta\pi_1}\int_{\frac{\rho \mu}{\sigma} +\eta-\frac{1}{q_0\sigma}}^\infty (x+\frac{1}{q_0\sigma}-\frac{\rho\mu}{\sigma}-\eta)^2Dx+ {\delta\pi_0}\int_{\frac{\rho\mu}{\sigma}-\frac{1}{q_0\sigma}-\eta}^\infty (x+\frac{1}{q_0\sigma}-\frac{\rho\mu}{\sigma}+\eta)Dx}-\sqrt{1-\rho^2} .
	\label{eq:DH}
\end{align}
\hrule
	\end{figure*}
Let $\beta:\mathbb{R}_{+}\to\mathbb{R}, q_0\mapsto \displaystyle{\min_{\substack{0\leq\rho\leq 1 \\ \eta\in\mathbb{R}}}D_{H}(q_0,\rho,\eta)}$. Then, function $\beta$ has a unique zero $q_0^\star$. Moreover, with probability $1$, for $n$ and $p$ large enough,
$$
\left\|\widehat{\bf w}_H\right\|\to q_0^\star. 
$$
Let $\rho^\star$ and $\eta^\star$ be such that $(\rho^\star,\eta^\star)=\displaystyle{\argmin_{\substack{0\leq \rho\leq 1\\ \eta\in\mathbb{R}}}D_H(q_0^\star,\rho,\eta)}$, then, with probability $1$, 
$$
 \frac{\hat{\bf w}_{H}^{T}\boldsymbol{\mu}}{\norm{\hat{\bf w}_H}\norm{\boldsymbol{\mu}}}\to \rho^\star, \ \ \textnormal{and} \ \ \hat{b}_H\to \eta^\star q_0^\star \sigma \ \ . 
$$
\label{th:convergence_hard}
\end{theorem}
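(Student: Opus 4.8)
The plan is to carry out the CGMT programme on the hard-margin problem. I would first turn \eqref{P_hard_margin} into a min--max problem of the form the CGMT requires. Writing ${\bf x}_i=y_i\boldsymbol{\mu}+{\bf z}_i$ with ${\bf z}_i\sim\mathcal N(0,\sigma^2{\bf I}_p)$ and using $y_i^2=1$, the margin constraints become ${\bf w}^T\boldsymbol{\mu}+{\bf w}^T\widetilde{\bf z}_i+y_ib\ge1$, where $\widetilde{\bf z}_i:=y_i{\bf z}_i$ are again i.i.d.\ $\mathcal N(0,\sigma^2{\bf I}_p)$; collecting $\widetilde{\bf Z}=[\widetilde{\bf z}_1,\dots,\widetilde{\bf z}_n]=\sigma{\bf G}$ with ${\bf G}$ having i.i.d.\ $\mathcal N(0,1)$ entries and dualizing the constraints with multipliers $\boldsymbol{\lambda}\ge0$ yields
\[
\Phi^{(n)}=\min_{{\bf w},b}\ \max_{\boldsymbol{\lambda}\ge0}\ \Big[\norm{{\bf w}}^2-\sigma\,{\bf w}^T{\bf G}\boldsymbol{\lambda}+\boldsymbol{\lambda}^T\big({\bf 1}-({\bf w}^T\boldsymbol{\mu}){\bf 1}-b\,{\bf y}\big)\Big],
\]
whose inner maximum is $+\infty$ on the non-separable event and $0$ otherwise. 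Up to flipping the (symmetric) sign of ${\bf G}$ this is a Primary Optimization with exactly the convex--concave structure ($\norm{{\bf w}}^2$ convex in $({\bf w},b)$, the remainder bilinear in $({\bf w},\boldsymbol{\lambda})$) that the CGMT needs.

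The \emph{main obstacle} is that the CGMT is stated for compact feasible sets, whereas here neither $({\bf w},b)$ nor $\boldsymbol{\lambda}$ is bounded and $\Phi^{(n)}$ may be $+\infty$; this is the kind of difficulty the paper flags for SVM. Working under condition \eqref{eq:cond_hard_margin} --- the complement of the phase-transition condition of Theorem \ref{th:hardmargin_condition} --- I would establish deterministic a priori bounds: a fixed radius $C$ such that, with probability $1$ for $n$ large, the primal optimizer $(\widehat{\bf w}_H,\widehat b_H)$ and an optimal dual lie in $\{\norm{{\bf w}}\le C,\ |b|\le C,\ \norm{\boldsymbol{\lambda}}\le C\}$, so that intersecting the Primary Optimization with that set does not change its value; this also yields $\Phi^{(n)}<\infty$ eventually, i.e.\ separability. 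Pinning down these bounds, and in particular controlling the dual norm without a compactness crutch, is where most of the technical work lies.

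Once localized, the CGMT replaces $\Phi^{(n)}$ by the Auxiliary Optimization
\[
\phi^{(n)}=\min_{{\bf w},b}\ \max_{\boldsymbol{\lambda}\ge0}\ \Big[\norm{{\bf w}}^2+\sigma\norm{{\bf w}}\,{\bf g}^T\boldsymbol{\lambda}+\sigma\norm{\boldsymbol{\lambda}}\,{\bf h}^T{\bf w}+\boldsymbol{\lambda}^T\big({\bf 1}-({\bf w}^T\boldsymbol{\mu}){\bf 1}-b\,{\bf y}\big)\Big],
\]
${\bf g}\in\mathbb R^n$, ${\bf h}\in\mathbb R^p$ independent standard Gaussian (signs again immaterial), which I would scalarize. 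Decompose ${\bf w}$ along $\boldsymbol{\mu}$ and its orthogonal complement, write $q_0=\norm{{\bf w}}$, $\rho={\bf w}^T\boldsymbol{\mu}/(\norm{{\bf w}}\norm{\boldsymbol{\mu}})$, $\eta=b/(\sigma q_0)$; the minimizing orthogonal direction of ${\bf w}$ is anti-aligned with ${\bf h}$, so ${\bf h}^T{\bf w}\approx-q_0\sqrt{1-\rho^2}\norm{{\bf h}}\approx-q_0\sqrt{1-\rho^2}\sqrt p$, and the inner maximization over $\boldsymbol{\lambda}\ge0$ then decouples over coordinates apart from the scalar $\norm{\boldsymbol{\lambda}}$, its value being $0$ when $\norm{(\sigma q_0{\bf g}+{\bf 1}-({\bf w}^T\boldsymbol{\mu}){\bf 1}-b{\bf y})_+}\le\sigma q_0\sqrt{1-\rho^2}\sqrt p$ (entrywise positive part) and $+\infty$ otherwise. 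Dividing by $\sigma q_0$, splitting the $n\approx\delta p$ coordinates according to $y_i$, and replacing empirical averages by Gaussian expectations (law of large numbers and Gaussian concentration) turns this inequality into $D_H(q_0,\rho,\eta)\le0$ with $D_H$ as in \eqref{eq:DH}; hence $\phi^{(n)}\to\min\{q_0^2:\ \beta(q_0)\le0\}$ with $\beta(q_0)=\min_{0\le\rho\le1,\,\eta}D_H(q_0,\rho,\eta)$, and one recognizes $\eta^\star(\rho)$ from \eqref{eta_star} as the stationarity point of $D_H$ in $\eta$ in the limit $q_0\to\infty$, so that \eqref{eq:cond_hard_margin} says precisely $\lim_{q_0\to\infty}\beta(q_0)<0$.

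It remains to analyze this scalar program and transfer back. Each integral in $D_H$ is decreasing in $q_0$ (since $1/(q_0\sigma)$ is), so $\beta$ is strictly decreasing; it is $+\infty$ as $q_0\to0^+$ and, by \eqref{eq:cond_hard_margin}, strictly negative as $q_0\to\infty$, hence it has a unique zero $q_0^\star$ and $\min\{q_0^2:\beta(q_0)\le0\}=(q_0^\star)^2$. Moreover $D_H(q_0^\star,\cdot,\cdot)$ is jointly convex in $(\rho,\eta)$ --- being a weighted $L^2$-norm of the positive part of an affine function of $(\rho,\eta)$, hence convex, minus the concave term $\sqrt{1-\rho^2}$ --- which I would use to show that its minimizer $(\rho^\star,\eta^\star)$ is unique. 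Thus $\phi^{(n)}\to(q_0^\star)^2$, and the CGMT value transfer gives $\Phi^{(n)}=\norm{\widehat{\bf w}_H}^2\to(q_0^\star)^2$, i.e.\ $\norm{\widehat{\bf w}_H}\to q_0^\star$. Finally, for the angle and the bias I would run the usual deviation argument: for any $\varepsilon>0$, when the Primary Optimization is intersected with $\{|{\bf w}^T\boldsymbol{\mu}/(\norm{{\bf w}}\norm{\boldsymbol{\mu}})-\rho^\star|\ge\varepsilon\}$, respectively $\{|b-\eta^\star q_0^\star\sigma|\ge\varepsilon\}$, the corresponding Auxiliary Optimization has, by the uniqueness of $(\rho^\star,\eta^\star)$, optimal value bounded away from $(q_0^\star)^2$, so the CGMT forces the optimizer to stay outside these sets with probability $1$ for $n$ large; this gives ${\widehat{\bf w}_H^T\boldsymbol{\mu}}/({\norm{\widehat{\bf w}_H}\norm{\boldsymbol{\mu}}})\to\rho^\star$ and $\widehat b_H\to\eta^\star q_0^\star\sigma$. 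The upgrade from convergence in probability to ``with probability $1$, for $n$ large enough'' is the standard Borel--Cantelli step using the exponential concentration rates of the CGMT.
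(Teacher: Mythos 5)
Your overall route coincides with the paper's: dualize \eqref{P_hard_margin} into a bilinear min--max, apply the CGMT, scalarize the auxiliary problem through $(q_0,\rho,\eta)$, reduce to the feasibility set $\{D_H\le 0\}$, and analyze the scalar function $\beta$ exactly as the paper does (monotonicity of $D_H$ in $q_0$, divergence at $0^+$, negativity at $\infty$ under \eqref{eq:cond_hard_margin}, convexity in $(\rho,\eta)$ for uniqueness of the minimizer). That deterministic part of your argument is essentially the paper's. However, you leave genuine gaps at precisely the two places the paper identifies as its main technical hurdles. First, your treatment of non-compactness rests on unproved a priori bounds: you assume one can exhibit a fixed radius $C$ such that the primal optimizer \emph{and an optimal dual multiplier} lie in a ball of radius $C$ with probability one. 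For hard-margin SVM this is not available off the shelf --- bounding $\norm{\hat{\bf w}_H}$ in advance is essentially equivalent to the boundedness of $\Phi^{(n)}$ that you are trying to establish, and the dual multipliers blow up as one approaches the separability threshold. The paper never produces such bounds; instead it writes $\Phi^{(n)}$ as a monotone double limit of compact problems $\Phi^{(n)}_{k,k,m}$ as in \eqref{eq:writing}, applies the CGMT to each member of the family, and exchanges the limits with the probabilities using monotonicity of the event sequences. You acknowledge that "pinning down these bounds \dots is where most of the technical work lies," but that work is exactly what is missing, and the route you sketch for it does not obviously close.

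Second, your upgrade from convergence in probability to the almost-sure statement invokes "the exponential concentration rates of the CGMT." The CGMT as used here provides only a comparison of probabilities between the (PO) and the (AO); the paper explicitly notes that the rate of decay of the (AO) probabilities cannot be easily characterized, which is why a direct Borel--Cantelli argument on the (PO) is not available. The paper's fix (Theorem \ref{th:new_CGMT}) is different: the events for distinct $n$ are independent because they are generated by fresh Gaussian vectors, so the \emph{converse} Borel--Cantelli lemma converts the almost-sure convergence of the (AO) into summability of its deviation probabilities, which the CGMT inequality then transfers to the (PO), after which the direct Borel--Cantelli lemma applies. Two smaller points: the replacement of the orthogonal component of ${\bf w}$ by $-{\bf h}_\perp/\norm{{\bf h}_\perp}$ inside a min--max whose objective is not convex--concave needs the justification of Lemma \ref{lem:prop} (it is legitimate only because that choice minimizes the objective \emph{uniformly} over the dual variable), and joint convexity of $D_H(q_0^\star,\cdot,\cdot)$ alone does not yield uniqueness of $(\rho^\star,\eta^\star)$ --- strictness comes from $-\sqrt{1-\rho^2}$ in the $\rho$ direction, while the $\eta$ direction requires the separate first-order-condition argument given in the paper.
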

\begin{proof}
The proof is postponed to Section \ref{technical_proof_hard_margin_asym}.
\end{proof}
The combination of the results of Theorem \ref{th:hardmargin_condition} and Theorem \ref{th:convergence_hard} provides a complete picture of the behavior of the hard-margin-SVM. Particularly, it entails from these results that for the hard-margin SVM to lead to perfect linear separation of the training samples, the number of samples should be strictly less than:
\begin{align*}
n< & p\left(\min_{1\leq \rho \leq 1}\frac{1}{1-\rho^2}\left(\pi_1\int_{\frac{\rho\mu}{\sigma}+\eta^\star(\rho)}(x-\frac{\rho\mu}{\sigma}-\eta^\star(\rho))^2Dx\right.\right.\\
&\left.\left.+\pi_0\int_{\frac{\rho\mu}{\sigma}-\eta^\star(\rho)}(x-\frac{\rho\mu}{\sigma}+\eta^\star(\rho))^2Dx\right) \right)^{-1}\ \ .
\end{align*}
In case $n$ is strictly greater than the right-hand side term  then the hard-margin SVM would asymptotically fail, but in case of equality, no conclusion can be drawn. This phase transition phenomenon is illustrated in  Figure \ref{fig:prediction}  $\pi_0=\pi_1=0.5$, which displays the failure and success regions with varying $\frac{\mu}{\sigma}$ and $\delta$. Interestingly, it is noteworthy to mention that as the factor $\frac{\mu}{\sigma}$ increases, the capabilities of the hard-margin SVM get significantly improved, the number of samples that can be linearly separated increases in an exponentially manner. 
\begin{figure}[t]
\centering{
	\begin{tikzpicture}[scale=0.85]
\begin{semilogyaxis}[xlabel=$\frac{\mu}{\sigma}$, ylabel=$\delta$,xmin=0.1,xmax=3,grid=both,ymax=320]
\addplot[name path=f,color=red] coordinates{
(0.100000,2.012767)(0.200000,2.051490)(0.300000,2.117461)(0.400000,2.212917)(0.500000,2.341184)(0.600000,2.506868)(0.700000,2.716147)(0.800000,2.977167)(0.900000,3.300529)(1.000000,3.700023)(1.100000,4.193603)(1.200000,4.804553)(1.300000,5.563407)(1.400000,6.510113)(1.500000,7.697465)(1.600000,9.195418)(1.700000,11.097548)(1.800000,13.529333)(1.900000,16.660553)(2.000000,20.722920)(2.100000,26.033251)(2.200000,33.030532)(2.300000,42.324067)(2.400000,54.769454)(2.500000,71.574029)(2.600000,94.456163)(2.700000,125.879762)(2.800000,169.404105)(2.900000,230.218260)(3.000000,315.921855)};
\path[name path=axis] (axis cs:0.1,320) -- (axis cs:3,320);
 \addplot[pattern=north east lines,gray!50] fill between[of=f and axis];
\node[] at (axis cs: 1,50){\parbox[c]{0.15\textwidth}{\small Failure of hard-margin SVM}};
\node[] at (axis cs: 2.3,6){\parbox[c]{0.15\textwidth}{\small Success of hard-margin SVM}};
\end{semilogyaxis}
\end{tikzpicture}}
\caption{Theoretical predictions of the regions of failure and success of hard-margin SVM when $\pi_0=\pi_1=0.5$. }
\label{fig:prediction}
\end{figure}
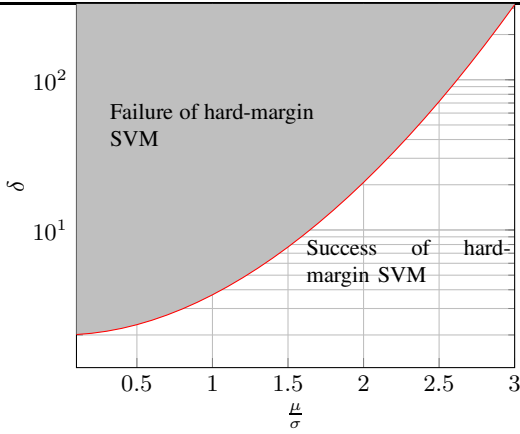
\begin{corollary}
	Let $L_H({\bf x})=\hat{\bf w}_H^{T}{\bf x}+\hat{b}_H$ be the hard-margin classifier, where $\widehat{\bf w}_H$ and $\hat{b}_H$ are solutions to \eqref{P_hard_margin}. 
	Under the assymptotic regime defined in Assumption \ref{ass:regime}, the  classification error rate associated with class $\mathcal{C}_0$ and $\mathcal{C}_1$ converges to:
	\begin{align*}
		\mathbb{P}\left[L_H({\bf x})>0| {\bf x}\in\mathcal{C}_0\right]\to Q(\frac{\rho^\star\mu}{\sigma}+\eta^\star)\\
		\mathbb{P}\left[L_H({\bf x})<0| {\bf x}\in\mathcal{C}_1\right]\to Q(\frac{\rho^\star\mu}{\sigma}-\eta^\star),
	\end{align*}
	where $Q(x)=\frac{1}{\sqrt{2\pi}}\int_{x}^\infty \exp(-\frac{t^2}{2})dt.$
\label{cor:hard_margin}
Let $\varepsilon$ denote the total classification error rate of the hard-margin SVM. It thus converges  to $\varepsilon^\star=\pi_0Q(\frac{\rho^\star\mu}{\sigma}+\eta^\star)+\pi_1Q(\frac{\rho^\star\mu}{\sigma}-\eta^\star)$. 
\end{corollary}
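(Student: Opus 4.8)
The plan is to obtain Corollary~\ref{cor:hard_margin} as a direct consequence of Theorem~\ref{th:convergence_hard}, using the fact that, conditionally on the training set, the generalization error of the \emph{linear} rule $L_H(\cdot)$ on Gaussian inputs has an exact closed form. Fix a realization of the training data $\{({\bf x}_i,y_i)\}_{i=1}^n$, so that $\widehat{\bf w}_H$ and $\hat b_H$ are deterministic, and let ${\bf x}$ be a fresh observation, independent of the training set, written as ${\bf x}=y_k\boldsymbol{\mu}+{\bf z}$ with ${\bf z}\sim\mathcal{N}({\bf 0},\sigma^2{\bf I}_p)$ when ${\bf x}\in\mathcal{C}_k$.

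First I would compute the two conditional misclassification probabilities. For ${\bf x}\in\mathcal{C}_0$ one has $L_H({\bf x})=-\widehat{\bf w}_H^{T}\boldsymbol{\mu}+\hat b_H+\widehat{\bf w}_H^{T}{\bf z}$, and since ${\bf z}$ is independent of $\widehat{\bf w}_H$ the conditional law of $\widehat{\bf w}_H^{T}{\bf z}$ is $\mathcal{N}(0,\sigma^2\norm{\widehat{\bf w}_H}^2)$; hence
$$\mathbb{P}\!\left[L_H({\bf x})>0\mid{\bf x}\in\mathcal{C}_0,\{({\bf x}_i,y_i)\}_{i=1}^{n}\right]=Q\!\left(\frac{\widehat{\bf w}_H^{T}\boldsymbol{\mu}-\hat b_H}{\sigma\norm{\widehat{\bf w}_H}}\right),$$
and, in the same way, $\mathbb{P}[L_H({\bf x})<0\mid{\bf x}\in\mathcal{C}_1,\cdots]=Q\big((\widehat{\bf w}_H^{T}\boldsymbol{\mu}+\hat b_H)/(\sigma\norm{\widehat{\bf w}_H})\big)$. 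This step only requires $\norm{\widehat{\bf w}_H}>0$, which holds for $n$ large since $\norm{\widehat{\bf w}_H}\to q_0^\star$ and the unique zero $q_0^\star$ of $\beta$ is strictly positive ($\beta(q_0)\to+\infty$ as $q_0\to0^+$).

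Next I would pass to the limit in these ratios, which are random only through the training set. Writing $\widehat{\bf w}_H^{T}\boldsymbol{\mu}/(\sigma\norm{\widehat{\bf w}_H})=\big(\widehat{\bf w}_H^{T}\boldsymbol{\mu}/(\norm{\widehat{\bf w}_H}\norm{\boldsymbol{\mu}})\big)\,\norm{\boldsymbol{\mu}}/\sigma$ and $\hat b_H/(\sigma\norm{\widehat{\bf w}_H})=\big(\hat b_H/(q_0^\star\sigma)\big)\big(q_0^\star/\norm{\widehat{\bf w}_H}\big)$, Theorem~\ref{th:convergence_hard} together with $\norm{\boldsymbol{\mu}}\to\mu$ gives, almost surely, $\widehat{\bf w}_H^{T}\boldsymbol{\mu}/(\sigma\norm{\widehat{\bf w}_H})\to\rho^\star\mu/\sigma$ and $\hat b_H/(\sigma\norm{\widehat{\bf w}_H})\to\eta^\star$. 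Since $Q$ is continuous, the conditional error rates converge almost surely to the expressions displayed in the statement. Finally, with $\varepsilon=\frac{n_0}{n}\,\mathbb{P}[L_H({\bf x})>0\mid{\bf x}\in\mathcal{C}_0,\cdot]+\frac{n_1}{n}\,\mathbb{P}[L_H({\bf x})<0\mid{\bf x}\in\mathcal{C}_1,\cdot]$, combining the two almost-sure limits with $n_0/n\to\pi_0$ and $n_1/n\to\pi_1$ from Assumption~\ref{ass:regime} yields $\varepsilon\to\varepsilon^\star$.

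Because all of the analytical effort already sits in Theorem~\ref{th:convergence_hard}, there is no genuine obstacle in this corollary; the points that deserve a little care are (i) the independence of the test point from $(\widehat{\bf w}_H,\hat b_H)$, which is precisely what makes the conditional law of $\widehat{\bf w}_H^{T}{\bf z}$ exactly $\mathcal{N}(0,\sigma^2\norm{\widehat{\bf w}_H}^2)$ and hence the $Q$-form exact, (ii) the non-degeneracy $q_0^\star>0$ that legitimizes dividing by $\norm{\widehat{\bf w}_H}$, and (iii) the routine use of the continuity of $Q$ and of the convergences $n_k/n\to\pi_k$ to turn the almost-sure limits of Theorem~\ref{th:convergence_hard} into the claimed limits for the error rates (the same conclusion holding for an empirical error over an independent test sample by a conditional law of large numbers).
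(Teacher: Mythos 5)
Your approach is the natural (and essentially the only) one, and it is the argument the paper implicitly relies on: the corollary is stated without proof as an immediate consequence of Theorem \ref{th:convergence_hard}, obtained exactly as you do by conditioning on the training set, using the exact Gaussian law of $\widehat{\bf w}_H^{T}{\bf z}$ to get a $Q$-function, and then substituting the almost-sure limits of $\norm{\widehat{\bf w}_H}$, $\widehat{\bf w}_H^{T}\boldsymbol{\mu}/(\norm{\widehat{\bf w}_H}\norm{\boldsymbol{\mu}})$ and $\hat b_H$. The conditional computation, the non-degeneracy remark $q_0^\star>0$, and the continuity argument are all fine.

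There is, however, one point you glossed over and should have confronted. Your own formulas give $\mathbb{P}[L_H({\bf x})>0\mid{\bf x}\in\mathcal{C}_0]\to Q\big(\tfrac{\rho^\star\mu}{\sigma}-\eta^\star\big)$ and $\mathbb{P}[L_H({\bf x})<0\mid{\bf x}\in\mathcal{C}_1]\to Q\big(\tfrac{\rho^\star\mu}{\sigma}+\eta^\star\big)$ (since $\hat b_H/(\sigma\norm{\widehat{\bf w}_H})\to\eta^\star$ enters with a minus sign in the class-$\mathcal{C}_0$ threshold and a plus sign in the class-$\mathcal{C}_1$ threshold), which is the \emph{opposite} pairing from the corollary as printed; yet you assert that these "converge to the expressions displayed in the statement" without reconciling the two. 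You cannot simultaneously be right about the intermediate formulas and about matching the printed statement. For what it is worth, your pairing is the internally consistent one: the paper's own remark after the corollary says that $\pi_1>\pi_0$ forces $\eta^\star>0$ and that the majority class then has the \emph{lower} error, which holds for your expressions (class $\mathcal{C}_1$ error $Q(\tfrac{\rho^\star\mu}{\sigma}+\eta^\star)$ is the smaller one) but fails for the corollary as written. So the printed corollary almost certainly has the two $Q$ arguments (and correspondingly the weights in $\varepsilon^\star$) swapped. Still, as submitted, your proof's final sentence contradicts its own displayed computation, and a careful write-up must either flag the typo in the statement or fix the claimed correspondence.
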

Some important remarks that can be drawn from Corollary \ref{cor:hard_margin} are in order. It is important to note that the  classification error rate depends on the bias and the alignment between $\boldsymbol{\mu}$ and $\hat{\bf w}_{H}$, capitalized by the quantity $\rho^\star$. Obviously, both quantities depend on the margin, but this dependence is not explicit in the asymptotic limits. In our case, the optimal Bayes separating hyperplane has direction aligned with $\mu$, hence $\rho^\star$ also represents the angle between the direction of SVM separating hyperplane and the Bayes optimal separating hyperplane.  
Finally,  it is important to note that the classification error rate is not the same for both classes, unless $\pi_0=\pi_1$ in which case it is easy to see that $\eta^\star=0$. Moreover, if $\pi_1>\pi_0$, it is easy to prove that $\eta^\star>0$. Hence, it is the class  with a higher number of training data  that presents the lowest misclassification error rate. 
\subsection{Soft Margin SVM}
The following Theorem characterizes the asymptotic behavior of the solution to the soft-margin SVM under the asymptotic regime defined in Assumption~\ref{ass:regime}. 
\begin{theorem}
\label{th:soft_margin}
 Let the map ${R}_{\tilde{\tau}}(x,\rho,\eta,\xi):\mathbb{R}_{>0}\times[0,1]\times \mathbb{R}\times \mathbb{R}_{>0}\to\mathbb{R}$ be defined  as:
\begin{align}
&{R}_{\tilde{\tau}}(x,\rho,\eta,\xi):={\tilde{\tau}\pi_1\delta}\int_{\frac{\tilde{\tau}}{\xi}+\eta+\frac{\rho\mu}{\sigma}-\frac{x}{\sigma}}^{\infty}\left(t+\frac{x}{\sigma}-\frac{\rho\mu}{\sigma}-\eta-\frac{\tilde{\tau}}{2\xi}\right) Dt\nonumber\\
&+{\tilde{\tau}\pi_0\delta}\int_{\frac{\tilde{\tau}}{\xi}-\eta+\frac{\rho\mu}{\sigma}-\frac{x}{\sigma}}^{\infty}\left(t+\frac{x}{\sigma}-\frac{\rho\mu}{\sigma}+\eta-\frac{\tilde{\tau}}{2\xi}\right)Dt\nonumber\\
&+\frac{\xi\pi_1\delta}{2}\int_{-\frac{x}{\sigma}+\frac{\rho\mu}{\sigma}+\eta}^{-\frac{x}{\sigma}+\frac{\rho\mu}{\sigma}+\eta+\frac{\tilde{\tau}}{\xi}}\left(t+\frac{x}{\sigma}-\frac{\rho\mu}{\sigma}-\eta\right)^2Dt\nonumber\\
	&+\frac{\xi\pi_0\delta}{2}\int_{-\frac{x}{\sigma}+\frac{\rho\mu}{\sigma}-\eta}^{-\frac{x}{\sigma}+\frac{\rho\mu}{\sigma}-\eta+\frac{\tilde{\tau}}{\xi}}\left(t+\frac{x}{\sigma}-\frac{\rho\mu}{\sigma}+\eta\right)^2Dt-\frac{\xi}{2}(1-\rho^2)\ \  ,
\end{align}
Define $\mathcal{D}_{S,\tilde{\tau}}(q_0,\rho,\eta,\xi)$:$\mathbb{R}_{>0}\times[0,1]\times \mathbb{R}\times \mathbb{R}_{>0}\to\mathbb{R}$ as:
$$
\mathcal{D}_{S,\tilde{\tau}}(q_0,\rho,\eta,\xi):=q_0^2+q_0{R}_{\tilde{\tau}}(\frac{1}{q_0},\rho,\eta,\xi).
$$
Then, the following convex-concave minimax optimization problem
\begin{equation}
\inf_{q_0> 0}\inf_{\eta\in\mathbb{R}} \min_{0\leq \rho\leq 1}\sup_{\xi> 0} \mathcal{D}_{S,\tilde{\tau}}(q_0,\rho,\eta,\xi)
\label{min_max_soft_margin}
\end{equation}
admits a unique solution $(q_0^\star,\rho^\star,\eta^\star)$. Moreover, with probability 1, the following convergences hold true:
\begin{align*}
&\norm{\hat{\bf w}_S}_2\asto q_0^\star, \ \ \frac{\hat{\bf w}_{S}^{T}\boldsymbol{\mu}}{\norm{\hat{\bf w}_{S}}_2\norm{\boldsymbol{\mu}}_2}\asto \rho^\star\ \ \textnormal{and} \ \ \hat{b}_S\asto \eta^\star q_0^\star \sigma .
\end{align*}
\end{theorem}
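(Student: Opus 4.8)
The plan is to bring the soft-margin problem into the bilinear form required by the convex Gaussian min--max theorem (CGMT), pass to its auxiliary optimization (AO), reduce the AO by scalarization to the four-scalar saddle-point problem \eqref{min_max_soft_margin}, and transfer the conclusions back. First I would eliminate the slacks $\xi_i$ in \eqref{eq:soft_margin} — equivalently, dualize the margin constraints with multipliers $0\le\lambda_i\le\tilde{\tau}/p$ — to write $\tilde{\Phi}=\min_{{\bf w},b}\max_{0\le\boldsymbol{\lambda}\le\tilde{\tau}/p}\norm{{\bf w}}_2^2+\sum_{i=1}^n\lambda_i\bigl(1-y_i({\bf w}^T{\bf x}_i+b)\bigr)$. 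By Assumption \ref{ass:regime}, ${\bf x}_i=y_i\boldsymbol{\mu}+{\bf z}_i$ with Gaussian ${\bf z}_i$, so $y_i({\bf w}^T{\bf x}_i+b)={\bf w}^T\boldsymbol{\mu}+\sigma({\bf G}{\bf w})_i+y_ib$ with ${\bf G}$ having i.i.d.\ $\mathcal{N}(0,1)$ entries, and the only random bilinear contribution in $(\boldsymbol{\lambda},{\bf w})$ is $-\sigma\boldsymbol{\lambda}^T{\bf G}{\bf w}$. Because the cost at $({\bf w},b)=(0,0)$ equals $\tilde{\tau}n/p=O(1)$ while it blows up for $|b|$ large, I would first show that the minimizer lies, with probability tending to one, inside a fixed ball; this (together with the already compact box for $\boldsymbol{\lambda}$) puts the problem in the scope of the CGMT, the objective being convex in $({\bf w},b)$ and concave (linear) in $\boldsymbol{\lambda}$.

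\textbf{Auxiliary problem and scalarization.} Applying the CGMT replaces $-\sigma\boldsymbol{\lambda}^T{\bf G}{\bf w}$ by $\sigma\bigl(\norm{{\bf w}^\perp}\,{\bf g}^T\boldsymbol{\lambda}+\norm{\boldsymbol{\lambda}}\,{\bf h}^T{\bf w}^\perp\bigr)$, where ${\bf w}^\perp$ is the component of ${\bf w}$ orthogonal to $\boldsymbol{\mu}$ and ${\bf g}\in\mathbb{R}^n$, ${\bf h}\in\mathbb{R}^{p-1}$ are independent Gaussian. I would then scalarize the AO by: introducing $q_0=\norm{{\bf w}}$ and the alignment $\rho=\tfrac{{\bf w}^T\boldsymbol{\mu}}{\norm{{\bf w}}\norm{\boldsymbol{\mu}}}$, so that $\norm{{\bf w}^\perp}=q_0\sqrt{1-\rho^2}$ and ${\bf w}^T\boldsymbol{\mu}\to\rho q_0\mu$; optimizing the direction of ${\bf w}^\perp$, which appears only through its norm and through ${\bf h}^T{\bf w}^\perp$ (using $\norm{{\bf h}}/\sqrt{p}\to1$); writing ${\bf w}=q_0\hat{\bf w}$ with $\hat{\bf w}$ a unit vector, which makes the regularizer $q_0^2$ and the loss $q_0$ times a function of $x=1/q_0$ (the source of the $q_0^2+q_0{R}_{\tilde{\tau}}(1/q_0,\cdot)$ structure); and carrying the bias through as $\eta=b/(q_0\sigma)$, which enters the class-$\mathcal{C}_0$ and class-$\mathcal{C}_1$ contributions with opposite signs and whose limit is therefore $\eta^\star q_0^\star\sigma$. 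The residual coupling of the coordinates of $\boldsymbol{\lambda}$, which is only through $\norm{\boldsymbol{\lambda}}$, is removed by the standard variational identity for the norm, which introduces the scalar $\xi$ (appearing as a supremum because $\boldsymbol{\lambda}$ lies on the maximization side); the maximization over $\boldsymbol{\lambda}$ then decouples across coordinates, the box constraint producing the piecewise branches. Splitting $\tfrac1n\sum_i$ into the sums over the two classes and invoking the law of large numbers turns each branch into a Gaussian integral, with thresholds $\tfrac{\tilde{\tau}}{\xi}\pm\eta+\tfrac{\rho\mu}{\sigma}-\tfrac{x}{\sigma}$ coming from the box; the AO value then converges, uniformly on compacts, to the deterministic saddle value of \eqref{min_max_soft_margin}.

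\textbf{Scalar analysis and transfer.} The scalar objective $\mathcal{D}_{S,\tilde{\tau}}$ inherits joint convexity in $(q_0,\rho,\eta)$ and concavity in $\xi$ as a pointwise limit of convex--concave functions, so a saddle point exists; I would obtain uniqueness of $(q_0^\star,\rho^\star,\eta^\star)$ from the strict convexity of $q_0\mapsto q_0^2$ together with the strict monotonicity of the Gaussian tail integrals in $\rho$ and in $\eta$. Finally, for any $\varepsilon>0$, I would add to the primary problem the constraint that one of $\bigl|\norm{{\bf w}}-q_0^\star\bigr|$, $\bigl|\tfrac{{\bf w}^T\boldsymbol{\mu}}{\norm{{\bf w}}\norm{\boldsymbol{\mu}}}-\rho^\star\bigr|$, $|b-\eta^\star q_0^\star\sigma|$ exceeds $\varepsilon$, check that the corresponding AO has a strictly larger deterministic limit, and invoke the ``optimizers'' part of the CGMT to conclude that the true minimizer avoids these sets almost surely for $n,p$ large; this gives $\norm{\hat{\bf w}_S}_2\asto q_0^\star$, $\tfrac{\hat{\bf w}_S^T\boldsymbol{\mu}}{\norm{\hat{\bf w}_S}_2\norm{\boldsymbol{\mu}}_2}\asto\rho^\star$ and $\hat b_S\asto\eta^\star q_0^\star\sigma$.

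\textbf{Main obstacle.} The hard part will be the unregularized bias and the unbounded dual/slack variables: one must produce a priori compactness bounds for $({\bf w},b)$ and for the multipliers that are uniform in $n,p$ before the exact two-sided CGMT can be applied, and the non-smoothness of the hinge loss forces the per-coordinate optimizations and the attendant concentration — which must be uniform in the scalar parameters — to be handled through piecewise Moreau-type expressions rather than by differentiation. Establishing that the limiting scalar problem is genuinely convex--concave with a \emph{unique} saddle point, which is what makes the stated convergences of the norm, the alignment and the bias meaningful, is the other substantial step.
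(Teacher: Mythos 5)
Your proposal follows the paper's proof essentially step for step: dualization to box-constrained multipliers, a priori compactness of $({\bf w},b)$, identification of the (PO)/(AO) pair, scalarization of the AO through $(q_0,\rho,\eta)$ with the direction of ${\bf w}_{\perp}$ optimized pointwise in the dual variable (so that no illegitimate min--max flip is needed), the square-root variational identity producing the scalar $\xi$ and the piecewise branches, the law of large numbers plus convexity-based uniform convergence to reach \eqref{min_max_soft_margin}, uniqueness of the saddle point, and transfer back through the ``optimizers'' form of the CGMT. Your compactness argument for ${\bf w}$ (comparison with the cost at the origin) is if anything more elementary than the paper's, which goes through the first-order optimality conditions and a random-matrix bound on $\|{\bf Z}/\sqrt{p}\|$.

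The one concrete gap is the mode of convergence. The theorem asserts almost sure convergence, but the CGMT and the optimizers corollary you invoke compare probabilities at each fixed $n$ and therefore only deliver $\mathbb{P}\bigl[\,\bigl|\norm{\hat{\bf w}_S}_2-q_0^\star\bigr|>\epsilon\,\bigr]\to 0$, i.e.\ convergence in probability. The paper closes this with a dedicated strengthening (Theorem \ref{th:new_CGMT}): since the deviation events of the AO for different $n$ are generated by independent Gaussian vectors $({\bf g},{\bf h})$, the converse of the Borel--Cantelli lemma upgrades the almost sure convergence of the AO costs into summability of the deviation probabilities, which then passes through Gordon's inequality to the PO and yields the claimed $\asto$ statements for the norm, the alignment and the bias. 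Without this (or an equivalent argument), your final step proves the right limits but only in the weaker, in-probability sense.
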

\begin{corollary}[Misclassification error rate]
Let ${L}_{S}({\bf x})=\widehat{\bf w}_S^{T}{\bf x}+\hat{b}_S$ be the soft margin SVM classifier, where $\widehat{\bf w}_{S}$ and $\hat{b}_S$ are solutions to \eqref{eq:soft_margin}. Under the asymptotic regime defined in Assumption \ref{ass:regime}, the classification error rate of the soft-margin SVM classifier associated with class $\mathcal{C}_0$ and class $\mathcal{C}_1$ converge to:
\begin{align*}
&\mathbb{P}\left[\hat{L}_S({\bf x})>0| {\bf x}\in\mathcal{C}_0\right]\to Q\left(\frac{\rho^\star\mu}{\sigma}+\eta^\star\right)\\
&\mathbb{P}\left[\hat{L}_S({\bf x})<0| {\bf x}\in\mathcal{C}_1\right]\to Q\left(\frac{\rho^\star\mu}{\sigma}-\eta^\star\right).
\end{align*}
where $\rho^\star$, $\eta^\star$ and $q_0^\star$ are the unique solutions to \eqref{min_max_soft_margin}.
Let $\varepsilon$ denote the total classification error rate of the soft-margin SVM. It thus converges to $\varepsilon^\star=\pi_0Q(\frac{\rho^\star\mu}{\sigma}+\eta^\star)+\pi_1Q(\frac{\rho^\star\mu}{\sigma}-\eta^\star)$.
\label{cor:soft_margin} 
\end{corollary}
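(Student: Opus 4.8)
The plan is to condition on the training sample, use that a fresh test point is independent of the learned classifier $(\hat{\bf w}_S,\hat b_S)$ so that its decision score is \emph{exactly} conditionally Gaussian, write each class-conditional error as a $Q$-function of an explicit ratio, and then let $n,p\to\infty$ with the three almost-sure limits of Theorem~\ref{th:soft_margin}. In other words, essentially all of the substance is already contained in Theorem~\ref{th:soft_margin}; what remains is to convert the geometric limits $\norm{\hat{\bf w}_S}_2\asto q_0^\star$, $\;\hat{\bf w}_S^{T}\boldsymbol{\mu}/(\norm{\hat{\bf w}_S}_2\norm{\boldsymbol{\mu}}_2)\asto\rho^\star$ and $\hat b_S\asto\eta^\star q_0^\star\sigma$ into an error rate.

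Concretely, I would first fix a realization of the training data, hence of $(\hat{\bf w}_S,\hat b_S)$, and take a test sample ${\bf x}\in\mathcal{C}_0$, i.e. ${\bf x}=-\boldsymbol{\mu}+{\bf z}$ with ${\bf z}\sim\mathcal{N}(0,\sigma^2{\bf I}_p)$ independent of the training set. Writing
$$
L_S({\bf x})=\hat{\bf w}_S^{T}{\bf x}+\hat b_S=-\hat{\bf w}_S^{T}\boldsymbol{\mu}+\hat b_S+\hat{\bf w}_S^{T}{\bf z},
$$
and noting that for any fixed nonzero vector $v$ one has $v^{T}{\bf z}/(\sigma\norm{v}_2)\sim\mathcal{N}(0,1)$ — while $\norm{\hat{\bf w}_S}_2\asto q_0^\star>0$ forces $\hat{\bf w}_S\neq 0$ almost surely for $n$ large — gives the exact conditional identity
$$
\mathbb{P}\big[\,L_S({\bf x})>0 \;\big|\; {\bf x}\in\mathcal{C}_0,\ \{({\bf x}_i,y_i)\}_{i=1}^{n}\,\big]=Q\!\left(\frac{\hat{\bf w}_S^{T}\boldsymbol{\mu}-\hat b_S}{\sigma\norm{\hat{\bf w}_S}_2}\right),
$$
and, symmetrically, $\mathbb{P}\big[\,L_S({\bf x})<0\mid{\bf x}\in\mathcal{C}_1,\ \{({\bf x}_i,y_i)\}_{i=1}^{n}\,\big]=Q\big((\hat{\bf w}_S^{T}\boldsymbol{\mu}+\hat b_S)/(\sigma\norm{\hat{\bf w}_S}_2)\big)$ for a test point from $\mathcal{C}_1$.

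It then remains to pass to the limit. Decomposing
$$
\frac{\hat{\bf w}_S^{T}\boldsymbol{\mu}}{\sigma\norm{\hat{\bf w}_S}_2}=\frac{1}{\sigma}\cdot\frac{\hat{\bf w}_S^{T}\boldsymbol{\mu}}{\norm{\hat{\bf w}_S}_2\norm{\boldsymbol{\mu}}_2}\cdot\norm{\boldsymbol{\mu}}_2\asto\frac{\rho^\star\mu}{\sigma}\qquad\text{and}\qquad\frac{\hat b_S}{\sigma\norm{\hat{\bf w}_S}_2}\asto\frac{\eta^\star q_0^\star\sigma}{\sigma q_0^\star}=\eta^\star,
$$
by Theorem~\ref{th:soft_margin} together with $\norm{\boldsymbol{\mu}}_2\to\mu$ from Assumption~\ref{ass:regime}, the continuity of $Q$ yields almost-sure convergence of the two conditional probabilities above to $Q$-values of the form $Q\!\left(\rho^\star\mu/\sigma\pm\eta^\star\right)$ — precisely the two quantities appearing in the statement. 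Since these conditional probabilities are uniformly bounded by $1$, bounded convergence lets me average over the training randomness and conclude that the unconditional class-conditional error rates converge to the same constants; the total-error expression $\varepsilon^\star=\pi_0 Q(\rho^\star\mu/\sigma+\eta^\star)+\pi_1 Q(\rho^\star\mu/\sigma-\eta^\star)$ then follows by the law of total probability. The only points that require any care — both routine — are the exact conditional Gaussianity of the test score (which hinges on the independence of the test point from $(\hat{\bf w}_S,\hat b_S)$) and the interchange of limit and expectation over the training data; I do not expect any genuine obstacle here beyond Theorem~\ref{th:soft_margin} itself, which is where the real work lies.
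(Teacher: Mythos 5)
Your method is the right one, and it is essentially the only argument available: the paper states this corollary without proof, as an immediate consequence of Theorem \ref{th:soft_margin}. The conditional-Gaussianity step (a fresh test point is independent of $(\hat{\bf w}_S,\hat b_S)$, so the score $\hat{\bf w}_S^{T}{\bf z}$ is exactly $\mathcal{N}(0,\sigma^2\norm{\hat{\bf w}_S}_2^2)$ given the training sample), the passage to the limit via continuity of $Q$ and the three almost-sure convergences, and the bounded-convergence/total-probability wrap-up are all sound.

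There is, however, one point you gloss over with the ``$\pm$'' notation, and it matters. Your own explicit formulas, once the limits $\hat{\bf w}_S^{T}\boldsymbol{\mu}/(\sigma\norm{\hat{\bf w}_S}_2)\to\rho^\star\mu/\sigma$ and $\hat b_S/(\sigma\norm{\hat{\bf w}_S}_2)\to\eta^\star$ are substituted, give
\begin{align*}
\mathbb{P}\left[L_S({\bf x})>0\,|\,{\bf x}\in\mathcal{C}_0\right]\to Q\Big(\frac{\rho^\star\mu}{\sigma}-\eta^\star\Big),\qquad
\mathbb{P}\left[L_S({\bf x})<0\,|\,{\bf x}\in\mathcal{C}_1\right]\to Q\Big(\frac{\rho^\star\mu}{\sigma}+\eta^\star\Big),
\end{align*}
i.e.\ the two $Q$-values of the printed statement with their class assignments interchanged; consequently your total error is $\pi_0Q(\frac{\rho^\star\mu}{\sigma}-\eta^\star)+\pi_1Q(\frac{\rho^\star\mu}{\sigma}+\eta^\star)$, which differs from the stated $\varepsilon^\star$ whenever $\eta^\star\neq 0$. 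You should not assert that this is ``precisely the two quantities appearing in the statement.'' For what it is worth, your signs are the ones consistent with the paper's own conventions (the change of variable $\eta=b/(\sigma q_0)$ and the fact that the class-$\mathcal{C}_1$ terms of $R_{\tilde\tau}$ carry $-\eta$ while the class-$\mathcal{C}_0$ terms carry $+\eta$) and with the Remark following the corollary (the majority class has the smaller error when $\eta^\star>0$), so the discrepancy is most plausibly a sign swap in the printed corollary; but a careful write-up must either flag this or track the signs explicitly rather than hide them behind a ``$\pm$''.
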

\begin{remark}
Similar to the hard margin SVM, in case of balanced classes ($\pi_0=\pi_1=0.5$), it is easy to see that $\eta^\star=0$. This confirms the intuition according to which, for the symmetric case $(\boldsymbol{\mu}_1=-\boldsymbol{\mu}_2$), it is best to separate the data with a hyperplane crossing the origin. Again it is easy to see that if $\pi_1>\pi_0$, $\eta^\star>0$, showing that the class with more training data is the one that presents the best misclassification performance. 
\end{remark}

\section{Numerical results}
\label{sec:numerical}
\subsection{Hard Margin SVM }
Figure \ref{fig:hard_margin_1} illustrates the impact of $\mu$ on the angle between the optimal Bayes separating hyperplane, (in our case aligned with $\boldsymbol{\mu}$) and $\hat{\bf w}_H$ the separating hyperplane of hard-margin SVM, as well as on the margin and the classification error rate. As can be seen, the alignment with   $\boldsymbol{\mu}$ improves rapidly in the small region of $\mu$, for which the inverse of the margin reaches very high values, being in the limit of feasibility of the hard-margin SVM. Moreover, the classification error rate decreases considerably as $\mu$ increases.  
Figure \ref{fig:hard_margin_2} describes the impact of $\delta$. We note that the use of more training samples tend to improve the alignment and at the same time decrease the margin. This does not imply a reduction in the classification error performance. On the contrary, the better alignment results in a higher classification performance, despite the decrease  of the margin value.  
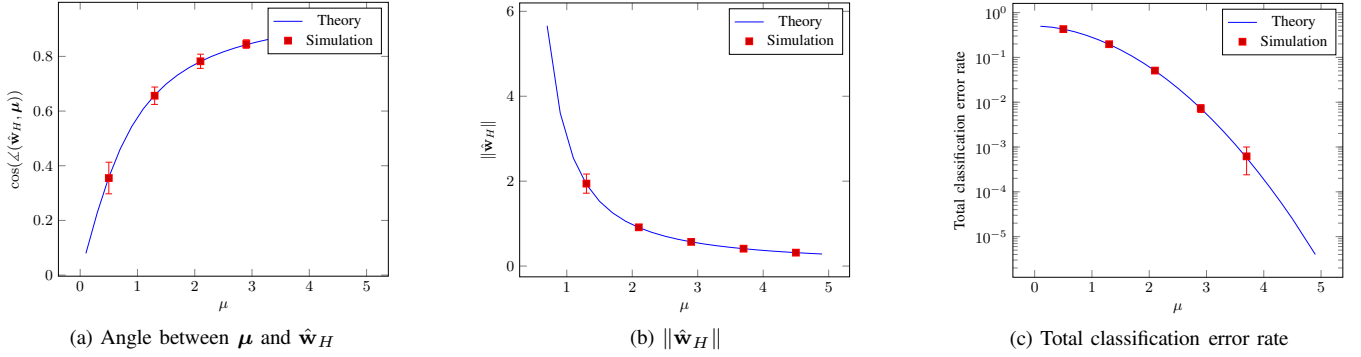
\begin{figure*}[ht]
\captionsetup[subfigure]{justification=centering}
\subfloat[Angle between $\boldsymbol{\mu}$ and $\hat{\bf w}_H$]{
\begin{minipage}[c]{0.3\textwidth}
\begin{tikzpicture}[scale=0.64] 
\begin{axis}[xlabel=$\mu$, ylabel=${\cos(\measuredangle(\hat{\bf w}_H,\boldsymbol{\mu}))}$]
\addplot[blue,no marks] coordinates{
(0.100000,0.079393)(0.300000,0.229386)(0.500000,0.358020)(0.700000,0.461860)(0.900000,0.543620)(1.100000,0.607703)(1.300000,0.658363)(1.500000,0.698891)(1.700000,0.731778)(1.900000,0.758847)(2.100000,0.781418)(2.300000,0.800448)(2.500000,0.816695)(2.700000,0.830669)(2.900000,0.842819)(3.100000,0.853459)(3.300000,0.862846)(3.500000,0.871180)(3.700000,0.878624)(3.900000,0.885312)(4.100000,0.891317)(4.300000,0.896787)(4.500000,0.901757)(4.700000,0.906318)(4.900000,0.910479)
};
\addlegendentry{Theory};
\addplot+[only marks,error bars/.cd,y dir=both,y explicit] coordinates{
(0.500000,0.355081)+-(-0.057828,0.057828)(1.300000,0.656146)+-(-0.031774,0.031774)(2.100000,0.782139)+-(-0.025958,0.025958)(2.900000,0.845220)+-(-0.015475,0.015475)(3.700000,0.877715)+-(-0.014778,0.014778)(4.500000,0.903221)+-(-0.011098,0.011098)
};
\addlegendentry{Simulation};
\end{axis} 
\end{tikzpicture}
\end{minipage}}\hfill
\subfloat[${\norm{\hat{\bf w}_H}}$]{
\begin{minipage}[c]{0.3\textwidth}
\begin{tikzpicture}[scale=0.64] 
\begin{axis}[xlabel=$\mu$, ylabel=${\norm{\hat{\bf w}_H}}$]
\addplot[blue,no marks] coordinates{
(0.700000,5.659959)(0.900000,3.602824)(1.100000,2.544868)(1.300000,1.924015)(1.500000,1.525285)(1.700000,1.251891)(1.900000,1.054920)(2.100000,0.907416)(2.300000,0.793481)(2.500000,0.703223)(2.700000,0.630202)(2.900000,0.570070)(3.100000,0.519800)(3.300000,0.477220)(3.500000,0.440745)(3.700000,0.409186)(3.900000,0.381638)(4.100000,0.357404)(4.300000,0.335935)(4.500000,0.316793)(4.700000,0.299630)(4.900000,0.284161)
};
\addlegendentry{Theory};
\addplot+[only marks,error bars/.cd,y dir=both,y explicit] coordinates{
(1.300000,1.942461)+-(-0.225534,0.225534)(2.100000,0.913011)+-(-0.057097,0.057097)(2.900000,0.569401)+-(-0.027160,0.027160)(3.700000,0.409894)+-(-0.014909,0.014909)(4.500000,0.316385)+-(-0.008516,0.008516)
};
\addlegendentry{Simulation};
\end{axis}
\end{tikzpicture}
\end{minipage}}\hfill
\subfloat[Total classification error rate]{
\begin{minipage}[c]{0.3\textwidth}
\begin{tikzpicture}[scale=0.64]
\begin{axis}[xlabel=$\mu$, ylabel=Total classification error rate, ymode=log]
\addplot[blue,no marks] coordinates{
(0.100000,0.496833)(0.300000,0.472568)(0.500000,0.428965)(0.700000,0.373233)(0.900000,0.312330)(1.100000,0.251916)(1.300000,0.196034)(1.500000,0.147242)(1.700000,0.106746)(1.900000,0.074678)(2.100000,0.050401)(2.300000,0.032808)(2.500000,0.020589)(2.700000,0.012455)(2.900000,0.007259)(3.100000,0.004076)(3.300000,0.002204)(3.500000,0.001148)(3.700000,0.000575)(3.900000,0.000277)(4.100000,0.000129)(4.300000,0.000058)(4.500000,0.000025)(4.700000,0.000010)(4.900000,0.000004)
};
\addlegendentry{Theory};
\addplot+[only marks,error bars/.cd,y dir=both,y explicit] coordinates{
(0.500000,0.429058)+-(-0.012297,0.012297)(1.300000,0.197248)+-(-0.013070,0.013070)(2.100000,0.050890)+-(-0.006015,0.006015)(2.900000,0.007336)+-(-0.001420,0.001420)(3.700000,0.000620)+-(-0.000380,0.000380)
};
\addlegendentry{Simulation};
\end{axis}
\end{tikzpicture}
\end{minipage}}
\caption{Effect of $\mu$ on hard-margin SVM performances, when $\delta=2$, $\pi_0=\pi_1=0.5$, $p=200$ and $\sigma=1$. The solid blue line corresponds to $\rho^\star$, $q_0^\star$ and $\varepsilon^\star$ as defined in Theorem \ref{th:convergence_hard} and Corollary \ref{cor:hard_margin}, while the squares and bars represent the mean and standard deviation of $\cos(\measuredangle(\boldsymbol{\mu},\hat{\bf w}_H)$, $\norm{\hat{\bf w}_H}_2$ and $\varepsilon$ based on $100$ simulated data sets. }
\label{fig:hard_margin_1}
\end{figure*}
\begin{figure*}[ht]
\captionsetup[subfigure]{justification=centering}
\subfloat[Angle between $\boldsymbol{\mu}$ and $\hat{\bf w}_H$]{
\begin{minipage}[c]{0.3\textwidth}
\begin{tikzpicture}[scale=0.64] 
\begin{axis}[xlabel=$\delta$, ylabel=${\cos(\measuredangle(\hat{\bf w}_H,\boldsymbol{\mu}))}$]
\addplot[blue,no marks] coordinates{
(0.100000,0.286106)(0.200000,0.367818)(0.300000,0.414463)(0.400000,0.445502)(0.500000,0.468072)(0.600000,0.485498)(0.700000,0.499455)(0.800000,0.511007)(0.900000,0.520750)(1.000000,0.529173)(1.100000,0.536536)(1.200000,0.543052)(1.300000,0.548844)(1.400000,0.554096)(1.500000,0.558866)(1.600000,0.563228)(1.700000,0.567238)(1.800000,0.570940)(1.900000,0.574377)(2.000000,0.577580)(2.100000,0.580575)(2.200000,0.583385)(2.300000,0.586029)(2.400000,0.588524)(2.500000,0.590884)(2.600000,0.593122)(2.700000,0.595248)(2.800000,0.597271)(2.900000,0.599200)(3.000000,0.601043)
};
\addlegendentry{Theory};
\addplot+[only marks,error bars/.cd,y dir=both,y explicit] coordinates{
(0.100000,0.292169)+-(-0.068311,0.068311)(0.500000,0.471335)+-(-0.046313,0.046313)(0.900000,0.513079)+-(-0.046942,0.046942)(1.300000,0.553376)+-(-0.041451,0.041451)(1.700000,0.569304)+-(-0.037837,0.037837)(2.100000,0.586522)+-(-0.044094,0.044094)(2.500000,0.588257)+-(-0.037694,0.037694)(2.900000,0.599839)+-(-0.040312,0.040312)
};
\addlegendentry{Simulation};
\end{axis} 
\end{tikzpicture}
\end{minipage}}\hfill
\subfloat[${\norm{\hat{\bf w}_H}}$]{
\begin{minipage}[c]{0.3\textwidth}
\begin{tikzpicture}[scale=0.64] 
\begin{axis}[xlabel=$\delta$, ylabel=${\norm{\hat{\bf w}_H}}$]
\addplot[blue,no marks] coordinates{
(0.100000,0.317814)(0.200000,0.455894)(0.300000,0.570062)(0.400000,0.674977)(0.500000,0.776231)(0.600000,0.876761)(0.700000,0.978453)(0.800000,1.082711)(0.900000,1.190710)(1.000000,1.303535)(1.100000,1.422261)(1.200000,1.548003)(1.300000,1.681977)(1.400000,1.825536)(1.500000,1.980226)(1.600000,2.147846)(1.700000,2.330519)(1.800000,2.530784)(1.900000,2.751724)(2.000000,2.997120)(2.100000,3.271683)(2.200000,3.581366)(2.300000,3.933803)(2.400000,4.338971)(2.500000,4.810150)(2.600000,5.365445)(2.700000,6.030192)(2.800000,6.840974)(2.900000,7.852653)(3.000000,9.151430)
};
\addlegendentry{Theory};
\addplot+[only marks,error bars/.cd,y dir=both,y explicit] coordinates{
(0.100000,0.319913)+-(-0.015755,0.015755)(0.500000,0.779787)+-(-0.048421,0.048421)(0.900000,1.200604)+-(-0.100904,0.100904)(1.300000,1.675963)+-(-0.155774,0.155774)(1.700000,2.303410)+-(-0.252017,0.252017)(2.100000,3.368251)+-(-0.608712,0.608712)(2.500000,5.055271)+-(-1.811203,1.811203)(2.900000,8.641536)+-(-3.940439,3.940439)
};
\addlegendentry{Simulation};
\end{axis}
\end{tikzpicture}
\end{minipage}}\hfill
\subfloat[Total classification error rate]{
\begin{minipage}[c]{0.3\textwidth}
\begin{tikzpicture}[scale=0.64]
\begin{axis}[xlabel=$\delta$, ylabel=Total classification error rate, ymode=log]
\addplot[blue,no marks] coordinates{
(0.100000,0.387398)(0.200000,0.356505)(0.300000,0.339267)(0.400000,0.327979)(0.500000,0.319866)(0.600000,0.313662)(0.700000,0.308730)(0.800000,0.304673)(0.900000,0.301270)(1.000000,0.298343)(1.100000,0.295794)(1.200000,0.293547)(1.300000,0.291556)(1.400000,0.289757)(1.500000,0.288126)(1.600000,0.286640)(1.700000,0.285276)(1.800000,0.284020)(1.900000,0.282856)(2.000000,0.281774)(2.100000,0.280764)(2.200000,0.279817)(2.300000,0.278928)(2.400000,0.278090)(2.500000,0.277299)(2.600000,0.276550)(2.700000,0.275839)(2.800000,0.275163)(2.900000,0.274520)(3.000000,0.273906)
};
\addlegendentry{Theory};
\addplot+[only marks,error bars/.cd,y dir=both,y explicit] coordinates{
(0.100000,0.388054)+-(-0.027462,0.027462)(0.500000,0.321232)+-(-0.017746,0.017746)(0.900000,0.305498)+-(-0.017779,0.017779)(1.300000,0.290670)+-(-0.015193,0.015193)(1.700000,0.285058)+-(-0.014186,0.014186)(2.100000,0.279372)+-(-0.016972,0.016972)(2.500000,0.278726)+-(-0.014126,0.014126)(2.900000,0.273488)+-(-0.014389,0.014389)
};
\addlegendentry{Simulation};
\end{axis}
\end{tikzpicture}
\end{minipage}}
\caption{Effect of $\delta$ on hard-margin SVM performances, when $\mu=1$, $\pi_0=\pi_1=0.5$, $p=200$ and $\sigma=1$. The solid blue line corresponds to $\rho^\star$, $q_0^\star$ and $\varepsilon^\star$ as defined in Theorem \ref{th:convergence_hard} and Corollary \ref{cor:hard_margin}, while the squares and bars represent the mean and standard deviation of $\cos(\measuredangle(\boldsymbol{\mu},\hat{\bf w}_H)$, $\norm{\hat{\bf w}_H}_2$ and $\varepsilon$ based on $100$ simulated data sets. }
\label{fig:hard_margin_2}
\end{figure*}
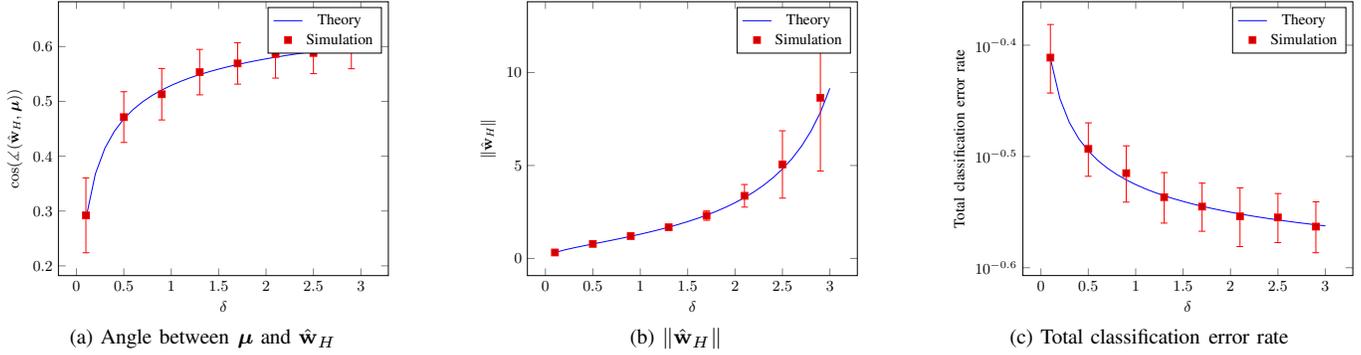

\subsection{Soft Margin SVM}
Figure \ref{fig:soft_margin} investigates the impact of $\mu$ on the angle between the optimal Bayes separating hyperplane aligned with $\boldsymbol{\mu}$ and $\hat{\bf w}_S$ the separating hyperplane of SVM, as well as on the inverse of the margin and the classification error rate. It shows that the alignment significantly improves as $\mu$ increases fast when $\mu<2$. The increase then becomes less important for high $\mu$. We also note that curiously the margin tends to decrease in the range of small $\mu$. This can be explained by the fact that in this region the alignment with the mean vector $\boldsymbol{\mu}$ is weak, causing the margin to decrease when $\mu$ is small. In the region of large $\mu$, the margin increases rapidly ($\norm{\hat{\bf w}}_2$ decreases). 

Figure \ref{fig:soft_margin_delta} investigates the impact of the number of samples on the classification performances. As expected, as more training data are used, a better alignment with the mean vector $\boldsymbol{\mu}$ is noted. However, this results also in a decrease in the margin which does not hopefully translates into a loss in classification performances, these latter being determined by only how good is the alignment with  $\boldsymbol{\mu}$.

Finally, we investigate in \ref{fig:soft_margin_tau} the impact of $\tau$ on the performances. As seen, the alignment with $\boldsymbol{\mu}$ and the margin decrease significantly when $\tau$ is greater than a certain threshold value, suggesting to use smallest values of $\tau$. Such an observation is in agreement with the simulations of \cite{Huang17}, where it was suggested to use the threshold value  since using too tiny values for $\tau$ is known to pose numerical difficulties in solving the optimization problem.

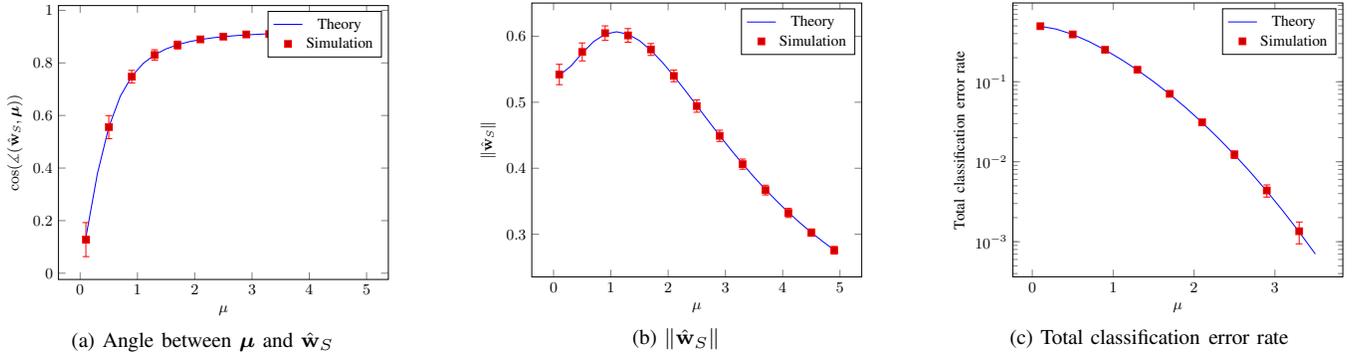
\begin{figure*}[ht]
\captionsetup[subfigure]{justification=centering}
\subfloat[Angle between $\boldsymbol{\mu}$ and $\hat{\bf w}_S$]{
\begin{minipage}[c]{0.3\textwidth}
\begin{tikzpicture}[scale=0.64] 
\begin{axis}[xlabel=$\mu$, ylabel=${\cos(\measuredangle(\hat{\bf w}_S,\boldsymbol{\mu}))}$]
\addplot[blue,no marks] coordinates{
(0.100000,0.135963)(0.300000,0.377844)(0.500000,0.555454)(0.700000,0.673082)(0.900000,0.748750)(1.100000,0.797941)(1.300000,0.830743)(1.500000,0.853235)(1.700000,0.869068)(1.900000,0.880469)(2.100000,0.888840)(2.300000,0.895096)(2.500000,0.899852)(2.700000,0.903539)(2.900000,0.906458)(3.100000,0.908830)(3.300000,0.910819)(3.500000,0.912548)(3.700000,0.914110)(3.900000,0.915572)(4.100000,0.916986)(4.300000,0.918392)(4.500000,0.919815)(4.700000,0.921273)(4.900000,0.922776)
};
\addlegendentry{Theory};
\addplot+[only marks,error bars/.cd,y dir=both,y explicit] coordinates{
(0.100000,0.127772)+-(-0.064730,0.064730)(0.500000,0.555762)+-(-0.043863,0.043863)(0.900000,0.747662)+-(-0.023757,0.023757)(1.300000,0.830099)+-(-0.020030,0.020030)(1.700000,0.868012)+-(-0.014900,0.014900)(2.100000,0.889516)+-(-0.009208,0.009208)(2.500000,0.899936)+-(-0.008778,0.008778)(2.900000,0.908229)+-(-0.009037,0.009037)(3.300000,0.909839)+-(-0.008482,0.008482)(3.700000,0.914640)+-(-0.008125,0.008125)(4.100000,0.917212)+-(-0.006883,0.006883)(4.500000,0.919855)+-(-0.007405,0.007405)(4.900000,0.922943)+-(-0.008281,0.008281)
};
\addlegendentry{Simulation};
\end{axis} 
\end{tikzpicture}
\end{minipage}}\hfill
\subfloat[${\norm{\hat{\bf w}_S}}$]{
\begin{minipage}[c]{0.3\textwidth}
\begin{tikzpicture}[scale=0.64] 
\begin{axis}[xlabel=$\mu$, ylabel=${\norm{\hat{\bf w}_S}}$]
\addplot[blue,no marks] coordinates{
(0.100000,0.540732)(0.300000,0.554618)(0.500000,0.574409)(0.700000,0.592032)(0.900000,0.603233)(1.100000,0.606809)(1.300000,0.603152)(1.500000,0.593344)(1.700000,0.578678)(1.900000,0.560437)(2.100000,0.539772)(2.300000,0.517652)(2.500000,0.494854)(2.700000,0.471972)(2.900000,0.449448)(3.100000,0.427591)(3.300000,0.406606)(3.500000,0.386620)(3.700000,0.367696)(3.900000,0.349858)(4.100000,0.333098)(4.300000,0.317387)(4.500000,0.302683)(4.700000,0.288936)(4.900000,0.276091)};
\addlegendentry{Theory};
\addplot+[only marks,error bars/.cd,y dir=both,y explicit] coordinates{
(0.100000,0.542057)+-(-0.015421,0.015421)(0.500000,0.576317)+-(-0.013595,0.013595)(0.900000,0.604751)+-(-0.011028,0.011028)(1.300000,0.601306)+-(-0.010323,0.010323)(1.700000,0.579985)+-(-0.008994,0.008994)(2.100000,0.539963)+-(-0.008864,0.008864)(2.500000,0.494191)+-(-0.009318,0.009318)(2.900000,0.449139)+-(-0.008544,0.008544)(3.300000,0.406217)+-(-0.007712,0.007712)(3.700000,0.366913)+-(-0.007391,0.007391)(4.100000,0.332364)+-(-0.007014,0.007014)(4.500000,0.302432)+-(-0.005633,0.005633)(4.900000,0.275671)+-(-0.005795,0.005795)
};
\addlegendentry{Simulation};
\end{axis}
\end{tikzpicture}
\end{minipage}}\hfill
\subfloat[Total classification error rate]{
\begin{minipage}[c]{0.3\textwidth}
\begin{tikzpicture}[scale=0.64]
\begin{axis}[xlabel=$\mu$, ylabel=Total classification error rate, ymode=log]
\addplot[blue,no marks] coordinates{
(0.100000,0.494576)(0.300000,0.454875)(0.500000,0.390611)(0.700000,0.318764)(0.900000,0.250195)(1.100000,0.190044)(1.300000,0.140079)(1.500000,0.100298)(1.700000,0.069782)(1.900000,0.047174)(2.100000,0.030981)(2.300000,0.019761)(2.500000,0.012236)(2.700000,0.007353)(2.900000,0.004285)(3.100000,0.002421)(3.300000,0.001325)(3.500000,0.000702)};
\addlegendentry{Theory};
\addplot+[only marks,error bars/.cd,y dir=both,y explicit] coordinates{
(0.100000,0.495752)+-(-0.005544,0.005544)(0.500000,0.391237)+-(-0.009954,0.009954)(0.900000,0.251228)+-(-0.007699,0.007699)(1.300000,0.141395)+-(-0.007083,0.007083)(1.700000,0.070668)+-(-0.004143,0.004143)(2.100000,0.031154)+-(-0.002209,0.002209)(2.500000,0.012306)+-(-0.001323,0.001323)(2.900000,0.004374)+-(-0.000760,0.000760)(3.300000,0.001353)+-(-0.000412,0.000412)};
\addlegendentry{Simulation};
\end{axis}
\end{tikzpicture}
\end{minipage}}
\caption{Effect of $\mu$ when when $\tau=2$, $\delta=2$, $p=200$, $\sigma=1$, $\pi_1=\pi_0=0.5$. The solid blue line corresponds to $\rho^\star$, $q_0^\star$ and $\varepsilon^\star$ as defined in Theorem \ref{th:soft_margin}, while the squares and bars represent the mean and standard deviation of $\cos(\measuredangle(\boldsymbol{\mu},\hat{\bf w}_S)$, $\norm{\hat{\bf w}_S}_2$ and $\varepsilon$ based on $100$ simulated data sets.}
\label{fig:soft_margin}
\end{figure*}

\begin{figure*}
\captionsetup[subfigure]{justification=centering}
\subfloat[Angle between $\boldsymbol{\mu}$ and $\hat{\bf w}_S$]{
\begin{minipage}[c]{0.3\textwidth}
\begin{tikzpicture}[scale=0.64] 
\begin{axis}[xlabel=$\delta$, ylabel=${\cos(\measuredangle(\hat{\bf w}_S,\boldsymbol{\mu}))}$]
\addplot[blue,no marks] coordinates{
(0.500000,0.536832)(1.000000,0.645639)(1.500000,0.705460)(2.000000,0.745151)(2.500000,0.774032)(3.000000,0.796261)(3.500000,0.814035)(4.000000,0.828648)(4.500000,0.840920)(5.000000,0.851391)};
\addlegendentry{Theory};
\addplot+[only marks,error bars/.cd,y dir=both,y explicit] coordinates{
(0.500000,0.535724)+-(-0.043998,0.043998)(1.000000,0.645911)+-(-0.034209,0.034209)(1.500000,0.706376)+-(-0.028138,0.028138)(2.000000,0.749615)+-(-0.024462,0.024462)(2.500000,0.774751)+-(-0.023879,0.023879)(3.000000,0.798500)+-(-0.019165,0.019165)(3.500000,0.812315)+-(-0.017253,0.017253)(4.000000,0.829047)+-(-0.014283,0.014283)(4.500000,0.840631)+-(-0.016247,0.016247)(5.000000,0.850802)+-(-0.015865,0.015865)
};
\addlegendentry{Simulation}
\end{axis}
\end{tikzpicture}
\end{minipage}}\hfill
\subfloat[${\norm{\hat{\bf w}_S}}$]{
\begin{minipage}[c]{0.3\textwidth}
\begin{tikzpicture}[scale=0.64] 
\begin{axis}[xlabel=$\delta$, ylabel=${\norm{\hat{\bf w}_S}}$]
\addplot[blue,no marks] coordinates{
(0.500000,0.519515)(1.000000,0.648842)(1.500000,0.728905)(2.000000,0.786883)(2.500000,0.832170)(3.000000,0.869184)(3.500000,0.900368)(4.000000,0.927221)(4.500000,0.950730)(5.000000,0.971587)
};
\addplot+[only marks,error bars/.cd,y dir=both,y explicit] coordinates{
(0.500000,0.520970)+-(-0.011723,0.011723)(1.000000,0.647284)+-(-0.013539,0.013539)(1.500000,0.728338)+-(-0.015010,0.015010)(2.000000,0.785275)+-(-0.017545,0.017545)(2.500000,0.833578)+-(-0.016254,0.016254)(3.000000,0.868152)+-(-0.016594,0.016594)(3.500000,0.899201)+-(-0.018513,0.018513)(4.000000,0.926154)+-(-0.019137,0.019137)(4.500000,0.946909)+-(-0.016327,0.016327)(5.000000,0.973289)+-(-0.018476,0.018476)
};
\end{axis}
\end{tikzpicture}
\end{minipage}}\hfill
\subfloat[Total classification error rate]{
\begin{minipage}[c]{0.3\textwidth}
\begin{tikzpicture}[scale=0.64]
\begin{axis}[xlabel=$\delta$, ylabel=Total classification error rate]
\addplot[blue,no marks] coordinates{
(0.500000,0.295692)(1.000000,0.259256)(1.500000,0.240262)(2.000000,0.228090)(2.500000,0.219456)(3.000000,0.212940)(3.500000,0.207813)(4.000000,0.203652)(4.500000,0.200196)(5.000000,0.197276)
};
\addplot+[only marks,error bars/.cd,y dir=both,y explicit] coordinates{
(0.500000,0.296155)+-(-0.015549,0.015549)(1.000000,0.259971)+-(-0.012631,0.012631)(1.500000,0.240475)+-(-0.009477,0.009477)(2.000000,0.227144)+-(-0.007828,0.007828)(2.500000,0.219108)+-(-0.008964,0.008964)(3.000000,0.212145)+-(-0.006607,0.006607)(3.500000,0.208775)+-(-0.006367,0.006367)(4.000000,0.204582)+-(-0.005460,0.005460)(4.500000,0.200653)+-(-0.006218,0.006218)(5.000000,0.197336)+-(-0.006211,0.006211)
};
\end{axis}
\end{tikzpicture}
\end{minipage}}
\caption{Effect of $\delta$  when $\tau=2$, $\mu=1$, $p=200$, $\sigma=1$, $\pi_1=\pi_0=0.5$. The solid blue line corresponds to $\rho^\star$, $q_0^\star$ and $\varepsilon^\star$ as defined in Theorem \ref{th:soft_margin}, while the squares and bars represent the mean and standard deviation of $\cos(\measuredangle(\boldsymbol{\mu},\hat{\bf w}_S)$, $\norm{\hat{\bf w}_S}_2$ and $\varepsilon$ based on $100$ simulated data sets.}
\label{fig:soft_margin_delta}
\end{figure*}

\begin{figure*}
\captionsetup[subfigure]{justification=centering}
\subfloat[Angle between $\boldsymbol{\mu}$ and $\hat{\bf w}_S$]{
\begin{minipage}[c]{0.3\textwidth}
\begin{tikzpicture}[scale=0.64] 
\begin{axis}[xlabel=$\tau$, ylabel=${\cos(\measuredangle(\hat{\bf w}_S,\boldsymbol{\mu}))}$,xmode=log]
\addplot[blue,no marks] coordinates{
(0.062500,0.707106)(0.066986,0.707106)(0.071794,0.707106)(0.076947,0.707106)(0.082469,0.707106)(0.088388,0.707106)(0.094732,0.707106)(0.101532,0.707106)(0.108819,0.707106)(0.116629,0.707106)(0.125000,0.707106)(0.133972,0.707106)(0.143587,0.707106)(0.153893,0.707106)(0.164938,0.707106)(0.176777,0.707106)(0.189465,0.707106)(0.203063,0.707106)(0.217638,0.707106)(0.233258,0.707106)(0.250000,0.707106)(0.267943,0.707105)(0.287175,0.707098)(0.307786,0.707078)(0.329877,0.707022)(0.353553,0.706893)(0.378929,0.706640)(0.406126,0.706211)(0.435275,0.705562)(0.466516,0.704670)(0.500000,0.703525)(0.535887,0.702132)(0.574349,0.700502)(0.615572,0.698650)(0.659754,0.696590)(0.707107,0.694340)(0.757858,0.691916)(0.812252,0.689330)(0.870551,0.686596)(0.933033,0.683725)(1.000000,0.680729)(1.071773,0.677617)(1.148698,0.674398)(1.231144,0.671080)(1.319508,0.667670)(1.414214,0.664175)(1.515717,0.660602)(1.624505,0.656957)(1.741101,0.653244)(1.866066,0.649470)(2.000000,0.645639)(2.143547,0.641757)(2.297397,0.637828)(2.462289,0.633856)(2.639016,0.629846)(2.828427,0.625803)(3.031433,0.621730)(3.249010,0.617633)(3.482202,0.613514)(3.732132,0.609381)(4.000000,0.605236)(4.287094,0.601085)(8.574188,0.560717)
};
\addplot+[only marks,error bars/.cd,y dir=both,y explicit] coordinates{
(0.062500,0.706375)+-(-0.030535,0.030535)(0.125000,0.707649)+-(-0.030839,0.030839)(0.250000,0.706600)+-(-0.030893,0.030893)(0.500000,0.704184)+-(-0.030511,0.030511)(1.000000,0.681354)+-(-0.031926,0.031926)(2.000000,0.645165)+-(-0.034643,0.034643)(4.000000,0.604909)+-(-0.037977,0.037977)(8.574188,0.561436)+-(-0.040773,0.040773)
};
\end{axis}
\end{tikzpicture}
\end{minipage}}\hfill
\subfloat[${\norm{\hat{\bf w}_S}}$]{
\begin{minipage}[c]{0.3\textwidth}
\begin{tikzpicture}[scale=0.64] 
\begin{axis}[xlabel=$\tau$, ylabel=${\norm{\hat{\bf w}_S}}$,xmode=log]
\addplot[blue,no marks] coordinates{
(0.062500,0.044194)(0.066986,0.047366)(0.071794,0.050766)(0.076947,0.054409)(0.082469,0.058315)(0.088388,0.062500)(0.094732,0.066986)(0.101532,0.071794)(0.108819,0.076947)(0.116629,0.082469)(0.125000,0.088388)(0.133972,0.094732)(0.143587,0.101532)(0.153893,0.108819)(0.164938,0.116629)(0.176777,0.125000)(0.189465,0.133972)(0.203063,0.143587)(0.217638,0.153893)(0.233258,0.164938)(0.250000,0.176774)(0.267943,0.189452)(0.287175,0.203007)(0.307786,0.217443)(0.329877,0.232689)(0.353553,0.248589)(0.378929,0.264909)(0.406126,0.281401)(0.435275,0.297858)(0.466516,0.314154)(0.500000,0.330229)(0.535887,0.346077)(0.574349,0.361720)(0.615572,0.377195)(0.659754,0.392545)(0.707107,0.407812)(0.757858,0.423039)(0.812252,0.438263)(0.870551,0.453520)(0.933033,0.468842)(1.000000,0.484259)(1.071773,0.499798)(1.148698,0.515485)(1.231144,0.531344)(1.319508,0.547395)(1.414214,0.563660)(1.515717,0.580159)(1.624505,0.596909)(1.741101,0.613929)(1.866066,0.631235)(2.000000,0.648842)(2.143547,0.666766)(2.297397,0.685021)(2.462289,0.703618)(2.639016,0.722571)(2.828427,0.741889)(3.031433,0.761581)(3.249010,0.781656)(3.482202,0.802117)(3.732132,0.822967)(4.000000,0.844206)(4.287094,0.865830)(8.574188,1.097813)
};
\addplot+[only marks,error bars/.cd,y dir=both,y explicit] coordinates{
(0.062500,0.044106)+-(-0.001887,0.001887)(0.125000,0.088228)+-(-0.003881,0.003881)(0.250000,0.176745)+-(-0.007533,0.007533)(0.500000,0.330313)+-(-0.009509,0.009509)(1.000000,0.484738)+-(-0.010136,0.010136)(2.000000,0.648612)+-(-0.013585,0.013585)(4.000000,0.844087)+-(-0.020165,0.020165)(8.574188,1.094519)+-(-0.035541,0.035541)
};
\end{axis}
\end{tikzpicture}
\end{minipage}}\hfill
\subfloat[Total classification error rate]{
\begin{minipage}[c]{0.3\textwidth}
\begin{tikzpicture}[scale=0.64] 
\begin{axis}[xlabel=$\tau$, ylabel=Total classification error rate,xmode=log,ymode=log]
\addplot[blue,no marks] coordinates{
(0.062500,0.239750)(0.066986,0.239750)(0.071794,0.239750)(0.076947,0.239750)(0.082469,0.239750)(0.088388,0.239750)(0.094732,0.239750)(0.101532,0.239750)(0.108819,0.239750)(0.116629,0.239750)(0.125000,0.239750)(0.133972,0.239750)(0.143587,0.239750)(0.153893,0.239750)(0.164938,0.239750)(0.176777,0.239750)(0.189465,0.239750)(0.203063,0.239750)(0.217638,0.239750)(0.233258,0.239750)(0.250000,0.239750)(0.267943,0.239751)(0.287175,0.239753)(0.307786,0.239759)(0.329877,0.239776)(0.353553,0.239816)(0.378929,0.239895)(0.406126,0.240028)(0.435275,0.240230)(0.466516,0.240508)(0.500000,0.240864)(0.535887,0.241299)(0.574349,0.241807)(0.615572,0.242386)(0.659754,0.243030)(0.707107,0.243734)(0.757858,0.244495)(0.812252,0.245308)(0.870551,0.246169)(0.933033,0.247074)(1.000000,0.248021)(1.071773,0.249007)(1.148698,0.250029)(1.231144,0.251085)(1.319508,0.252172)(1.414214,0.253289)(1.515717,0.254434)(1.624505,0.255604)(1.741101,0.256800)(1.866066,0.258017)(2.000000,0.259256)(2.143547,0.260515)(2.297397,0.261793)(2.462289,0.263087)(2.639016,0.264398)(2.828427,0.265722)(3.031433,0.267060)(3.249010,0.268409)(3.482202,0.269768)(3.732132,0.271136)(4.000000,0.272511)(4.287094,0.273892)(8.574188,0.287495)
};
\addplot+[only marks,error bars/.cd,y dir=both,y explicit] coordinates{
(0.062500,0.240654)+-(-0.009597,0.009597)(0.125000,0.240278)+-(-0.009738,0.009738)(0.250000,0.240804)+-(-0.009863,0.009863)(0.500000,0.243471)+-(-0.010400,0.010400)(1.000000,0.248911)+-(-0.010249,0.010249)(2.000000,0.260199)+-(-0.011373,0.011373)(4.000000,0.273318)+-(-0.012690,0.012690)(8.574188,0.287949)+-(-0.014022,0.014022)
};
\end{axis}
\end{tikzpicture}
\end{minipage}
}
\caption{Effect of $\tau$ when $\mu=1$, $p=200$, $\delta=1$, $\sigma=1$ and $\pi_0=\pi_1=0.5$. The solid blue line corresponds to $\rho^\star$, $q_0^\star$ and $\varepsilon^\star$ as defined in Theorem \ref{th:soft_margin}, while the squares and bars represent the mean and standard deviation of $\cos(\measuredangle(\boldsymbol{\mu},\hat{\bf w}_S)$, $\norm{\hat{\bf w}_S}_2$ and $\varepsilon$ based on $2000$ simulated data sets.}
\label{fig:soft_margin_tau}
\end{figure*}
\section{Technical Proofs}
\label{sec:technical}
\subsection{CGMT Framework}
Our technical proofs builds upong the CGMT framework, rooted in the works of Stojnic \cite{stojnic} and further mathematically formulated in the works of Thrampoulidis et al in \cite{thrampoulidis-IT}  and \cite{Thrampoulidis_conf}. The CGMT can be regarded as a generalization of a classical Gaussian comparison dating back to the early works of Gordon in 1988 \cite{Gor88}. This inequality allows to provide a high-probability lower bound of the optimal cost function of any optimization problem that can be written in the following form:
\begin{equation}
\Phi^{(n)}({\bf G}):=\min_{{\bf w}\in\mathcal{S}_{\bf w}}\max_{{\bf u}\in\mathcal{S}_{\bf u}} {\bf u}^{T}{\bf Gw}+\psi({\bf w},{\bf u}),
\label{eq:primary}
\end{equation}
where ${\bf G}\in\mathbb{R}^{n\times p}$ is a standard gaussian matrix, $\mathcal{S}_{\bf w}$ and $\mathcal{S}_{\bf u}$ are two compact sets in $\mathbb{R}^{p}$ and $\mathbb{R}^{n}$ and $\psi:\mathbb{R}^{p}\times \mathbb{R}^n\to\mathbb{R}$ is continuous on $\mathcal{S}_{\bf w}\times \mathcal{S}_{\bf u}$, possibly random but independent of ${\bf G}$. The optimization problem in \eqref{eq:primary} is identified as a primary optimization problem (PO), the asymptotic behavior of which cannot be directly studied in general, due to the coupling between vectors ${\bf w}$ and ${\bf u}$ in the bilinear term. To this end, based on Gaussian comparison inequalities \cite{gordon85}, we associate with it the following optimization problem 
\begin{equation}
\phi^{(n)}({\bf g},{\bf h}):=\min_{{\bf w}\in\mathcal{S}_{\bf w}}\max_{{\bf u}\in\mathcal{S}_{\bf u}} \norm{{\bf w}}_2{\bf g}^{T}{\bf u} +{\norm{\bf u}}_2{\bf h}^{T}{\bf w}+\psi({\bf w},{\bf u}),
\end{equation}
where ${\bf g}\in\mathbb{R}^n$ and ${\bf u}\in\mathbb{R}^{p}$ are standard gaussian vectors. According to Gordon's comparison inequality, for any $c\in\mathbb{R}$, it holds that:
\begin{equation}
\mathbb{P}\left[\Phi^{(n)}({\bf G})<c\right]\leq 2\mathbb{P}\left[\phi^{(n)}({\bf g},{\bf h})<c\right].
\label{ineq_gordon}
\end{equation}
Particularly, if  a high-probability lower bound of the (AO) can be found, then by \eqref{ineq_gordon}, this lower bound translates also into  a high-probability lower bound of the (PO).
This result is remarkable since so far it does not require any assumption on the convexity of function $\psi$ or the sets  $\mathcal{S}_{\bf w}$ and $\mathcal{S}_{\bf u}$, and most importantly it allows to relate the (PO) to a seemingly unrelated (AO) problem which presents the advantage of being in general much easier to analyze than the (PO) problem, as the bilinear term is now decoupled into two independent quantities involving respectively vectors ${\bf g}$ and ${\bf h}$. Combining the Gordon's original result with convexity, it was  shown that this result can be strenghened to a more precise characterization of the asymptotic behavior of the (PO), \cite{StoLASSO,COLT15}. Particularly, if the sets   $\mathcal{S}_{\bf w}$ and 
$\mathcal{S}_{\bf u}$ are additionally convex and $\psi$ is convex-concave on $\mathcal{S}_{\bf w}\times \mathcal{S}_{\bf u}$, then, for any $\kappa\in\mathbb{R}$, and $t>0$,
$$
\mathbb{P}\left[\left|\Phi^{(n)}({\bf G})-\kappa\right|>t\right] \leq 2\mathbb{P}\left[|\phi^{(n)}({\bf g},{\bf h})-\kappa|>t\right].
$$
A direct consequence of this inequality is that if the optimal cost of the (AO) problem converges to $\kappa$ then the optimal cost of the (PO) converges also to the same constant. However, in most cases, the ultimate goal is not to characterize the optimal cost of the PO but rather properties of the minimizer of $\Phi^{(n)}$ which we denote by ${\bf w}_{\boldsymbol{\Phi}}$. Although not directly obvious, this can be related to a question on the evaluation of the optimal cost as shown in the following Theorem.  
\begin{theorem}(CGMT,\cite{symbol_error})
Let $\mathcal{S}$ be  an arbitrary open subset of $\mathcal{S}_{\bf w}$ and $\mathcal{S}^{c}=\mathcal{S}_{\bf w}\backslash \mathcal{S}$. Denote $\phi_{\mathcal{S}^{c}}^{(n)}({\bf g},{\bf h})$ the optimal cost of \eqref{eq:primary} when the optimization is constrained over ${\bf w}\in\mathcal{S}^{c}$. Suppose there exists constants $\overline{\phi}$ and $\overline{\phi}_{\mathcal{S}_c}$ such that (i) $\phi^{(n)}({\bf g},{\bf h})\to \overline{\phi}$ in probability, (ii) $\phi_{\mathcal{S}^{c}}^{(n)}({\bf g},{\bf h})\to \overline{\phi}_{\mathcal{S}^{c}}$ in probability, (iii) $\phi< \overline{\phi}_{\mathcal{S}^{c}}$. Then, $\lim_{n\to\infty}\mathbb{P}\left[{\bf w}_{\Phi}\in\mathcal{S}\right]=1$, where ${\bf w}_{\Phi}$ is a minimizer of \eqref{eq:primary}.
\label{th:original_CGMT} 
\end{theorem}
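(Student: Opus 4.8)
The plan is to squeeze the minimizer ${\bf w}_{\Phi}$ of \eqref{eq:primary} into $\mathcal{S}$ by comparing the optimal cost of the full (PO) against that of the (PO) restricted to $\mathcal{S}^{c}$, transporting both comparisons through Gordon's inequalities. First I would control the global cost: since $\mathcal{S}_{\bf w}$ and $\mathcal{S}_{\bf u}$ are convex and compact and $\psi$ is convex--concave on $\mathcal{S}_{\bf w}\times\mathcal{S}_{\bf u}$, the two-sided (convex) version of Gordon's inequality recalled above gives $\mathbb{P}[\,|\Phi^{(n)}({\bf G})-\overline{\phi}|>t\,]\le 2\,\mathbb{P}[\,|\phi^{(n)}({\bf g},{\bf h})-\overline{\phi}|>t\,]$ for every $t>0$, so hypothesis (i) yields $\Phi^{(n)}({\bf G})\to\overline{\phi}$ in probability. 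Using (iii), fix $t>0$ small enough that $\overline{\phi}+t<\overline{\phi}_{\mathcal{S}^{c}}-t$; then the event $\mathcal{E}_{1}=\{\Phi^{(n)}({\bf G})<\overline{\phi}+t\}$ has probability tending to $1$.

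Next I would lower bound the restricted cost $\Phi^{(n)}_{\mathcal{S}^{c}}({\bf G})$, that is, the value of \eqref{eq:primary} when ${\bf w}$ is constrained to $\mathcal{S}^{c}$. The point I would stress here is that $\mathcal{S}^{c}=\mathcal{S}_{\bf w}\setminus\mathcal{S}$ is \emph{closed} in the compact set $\mathcal{S}_{\bf w}$, hence compact, so the restricted problem is again a legitimate instance of \eqref{eq:primary}; and since the \emph{one-sided} inequality \eqref{ineq_gordon} needs no convexity, it applies even though $\mathcal{S}^{c}$ is in general not convex. Applying \eqref{ineq_gordon} to the restricted pair, $\mathbb{P}[\,\Phi^{(n)}_{\mathcal{S}^{c}}({\bf G})<\overline{\phi}_{\mathcal{S}^{c}}-t\,]\le 2\,\mathbb{P}[\,\phi^{(n)}_{\mathcal{S}^{c}}({\bf g},{\bf h})<\overline{\phi}_{\mathcal{S}^{c}}-t\,]$, and hypothesis (ii) makes the right-hand side vanish; thus $\mathcal{E}_{2}=\{\Phi^{(n)}_{\mathcal{S}^{c}}({\bf G})>\overline{\phi}_{\mathcal{S}^{c}}-t\}$ also has probability tending to $1$.

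It remains to close the argument deterministically on $\mathcal{E}_{1}\cap\mathcal{E}_{2}$. By continuity of $\psi$ and compactness of the feasible sets, $\Phi^{(n)}$ is attained at some minimizer ${\bf w}_{\Phi}$, and if ${\bf w}_{\Phi}\in\mathcal{S}^{c}$ then the unconstrained minimum is already achieved inside $\mathcal{S}^{c}$, so $\Phi^{(n)}_{\mathcal{S}^{c}}({\bf G})=\Phi^{(n)}({\bf G})$. But on $\mathcal{E}_{1}\cap\mathcal{E}_{2}$ one has $\Phi^{(n)}_{\mathcal{S}^{c}}({\bf G})>\overline{\phi}_{\mathcal{S}^{c}}-t>\overline{\phi}+t>\Phi^{(n)}({\bf G})$, a contradiction; hence ${\bf w}_{\Phi}\in\mathcal{S}$ on that event, and since $\mathbb{P}[\mathcal{E}_{1}\cap\mathcal{E}_{2}]\to 1$ we conclude $\mathbb{P}[{\bf w}_{\Phi}\in\mathcal{S}]\to 1$. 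I expect the only delicate point to be the bookkeeping that makes the restricted (PO) and (AO) honest instances of \eqref{eq:primary} — compactness of $\mathcal{S}^{c}$, measurability and finiteness of the restricted value, and the harmless non-uniqueness of ${\bf w}_{\Phi}$ (the comparison applies to any minimizer) — once that is granted, the proof is just the displayed chain of inequalities.
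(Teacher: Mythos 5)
Your proof is correct and is essentially the standard argument: the paper itself states Theorem~\ref{th:original_CGMT} as a cited result without proof, but the deterministic core of your argument (the implication $\Phi^{(n)}_{\mathcal{S}^{c}}({\bf G})>\Phi^{(n)}({\bf G})\Rightarrow {\bf w}_{\Phi}\in\mathcal{S}$, combined with the convex two-sided comparison for the global cost and the convexity-free one-sided Gordon bound for the restricted cost) is exactly the one the paper reuses in its proof of the almost-sure strengthening, Theorem~\ref{th:new_CGMT}, which only adds a converse Borel--Cantelli step on top. Your handling of the delicate points (compactness of $\mathcal{S}^{c}$, reading hypothesis (iii) as $\overline{\phi}<\overline{\phi}_{\mathcal{S}^{c}}$, and the harmlessness of non-unique minimizers) is sound.
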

Theorem~\ref{th:original_CGMT} allows to characterize a set in which lies the minimizer of \eqref{eq:primary} with probability approaching $1$.  The main ingredient is to compare the asymptotic limit of the AO optimal costs on the set of interest and its complementary. These asymptotic limits are not needed as per Theorem \ref{th:original_CGMT} to be in the almost sure sense, although it is often the case that it is this stronger convergence that effectively holds. The reason is that Gordon's result involve comparison  between probabilities associated with the AO and PO. The rate of convergence of  the probability corresponding to the (AO) cannot be easily characterized, which does not allow to transfer directly the almost sure convergence  of the (AO) into that of the (PO).  Using a converse version of the Borel Cantelli Lemma we show that it is possible to prove the almost sure convergence of the (PO) given that of the (AO), which strengthens the original CGMT. The result is presented in the following Theorem:
 \begin{theorem}
Let $\mathcal{S}$ be an arbitrary open subset of $\mathcal{S}_{\bf w}$ and $\mathcal{S}^{c}=\mathcal{S}_{\bf w}\backslash \mathcal{S}$. Denote $\phi_{\mathcal{S}^{c}}^{(n)}({\bf g},{\bf h})$ the optimal cost of \eqref{eq:primary} when the optimization is constrained over ${\bf w}\in\mathcal{S}^{c}$. Suppose there exists constants $\overline{\phi}$ and $\overline{\phi}_{\mathcal{S}_c}$ such that (i) $\phi^{(n)}({\bf g},{\bf h})\to \overline{\phi}$ almost surely, (ii) $\phi_{\mathcal{S}^{c}}^{(n)}({\bf g},{\bf h})\to \overline{\phi}_{\mathcal{S}^{c}}$ almost surely, (iii) $\phi< \overline{\phi}_{\mathcal{S}^{c}}$. Then, $\mathbb{P}\left[{\bf w}_{\Phi}\in\mathcal{S}, \textnormal{i.o.} \right]=1$, where ${\bf w}_{\Phi}$ is a minimizer of \eqref{eq:primary}.
\label{th:new_CGMT} 
\end{theorem}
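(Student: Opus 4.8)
The plan is to deduce Theorem~\ref{th:new_CGMT} from the in-probability statement of Theorem~\ref{th:original_CGMT} by a short measure-theoretic step; the ``converse Borel--Cantelli'' ingredient is, concretely, nothing more than the reverse Fatou inequality for a probability measure. First I would note that almost-sure convergence implies convergence in probability, so hypotheses (i) and (ii) of Theorem~\ref{th:new_CGMT}, together with (iii), are exactly the hypotheses of Theorem~\ref{th:original_CGMT}; the latter then gives $\lim_{n\to\infty}\mathbb{P}\big[{\bf w}_{\Phi}^{(n)}\in\mathcal{S}\big]=1$, i.e., writing $B_n:=\{{\bf w}_{\Phi}^{(n)}\in\mathcal{S}^c\}$ for the ``bad'' event, $\mathbb{P}[B_n]\to 0$. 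If a self-contained version of this step is wanted, one re-runs the proof of Theorem~\ref{th:original_CGMT}: fix $c\in(\overline{\phi},\overline{\phi}_{\mathcal{S}^c})$; the one-sided Gordon inequality~\eqref{ineq_gordon} applied to the program restricted to $\mathcal{S}^c$ gives $\mathbb{P}[\Phi_{\mathcal{S}^c}^{(n)}<c]\le 2\,\mathbb{P}[\phi_{\mathcal{S}^c}^{(n)}<c]$, while its two-sided (convexity-based) refinement applied to the unrestricted program gives $\mathbb{P}[\Phi^{(n)}\ge c]\le 2\,\mathbb{P}\big[\,|\phi^{(n)}-\overline{\phi}|\ge c-\overline{\phi}\,\big]$; since $\phi_{\mathcal{S}^c}^{(n)}\asto\overline{\phi}_{\mathcal{S}^c}>c$ and $\phi^{(n)}\asto\overline{\phi}<c$, the indicator functions on the right converge to $0$ almost surely, hence in $L^1$ by bounded convergence, so both left-hand probabilities vanish; and on $\{\Phi^{(n)}<c\}\cap\{\Phi_{\mathcal{S}^c}^{(n)}\ge c\}$ the minimiser of~\eqref{eq:primary} cannot belong to $\mathcal{S}^c$, whence $\mathbb{P}[B_n]\le\mathbb{P}[\Phi^{(n)}\ge c]+\mathbb{P}[\Phi_{\mathcal{S}^c}^{(n)}<c]\to 0$.

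The remaining step upgrades $\mathbb{P}[B_n]\to 0$ to an almost-sure statement. Here I would invoke the elementary inequality $\mathbb{P}\big[\limsup_n B_n^c\big]\ge\limsup_n\mathbb{P}[B_n^c]$, valid for any sequence of events (it is continuity of measure along the decreasing sets $\bigcup_{m\ge n}B_m^c$, equivalently Fatou's lemma applied to the indicators $\mathbf{1}_{B_n}$ on the complementary side). Since $\mathbb{P}[B_n^c]=1-\mathbb{P}[B_n]\to 1$, this forces $\mathbb{P}\big[\limsup_n B_n^c\big]=1$; and $\limsup_n B_n^c$ is by definition the event $\{\,{\bf w}_{\Phi}^{(n)}\in\mathcal{S}\ \text{for infinitely many }n\,\}$. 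This is the claimed conclusion $\mathbb{P}[{\bf w}_{\Phi}\in\mathcal{S},\ \text{i.o.}]=1$.

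There is no deep obstacle in the argument; two points simply need care. The first is conceptual: one cannot replace ``infinitely often'' by ``eventually''. Gordon's comparison~\eqref{ineq_gordon} only controls \emph{probabilities}, so, lacking a quantitative (say exponential) decay rate for $\mathbb{P}\big[\,|\phi^{(n)}-\overline{\phi}|>t\,\big]\to 0$, one cannot establish $\sum_n\mathbb{P}[B_n]<\infty$ and appeal to the direct Borel--Cantelli lemma; hence ``i.o.'' is the sharp conclusion at this level of generality, and it is the device that is iterated over a countable family of sets $\mathcal{S}$ (for instance rational-radius neighbourhoods of the candidate limits) in the proofs of Theorems~\ref{th:convergence_hard} and~\ref{th:soft_margin}. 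The second is technical and already present in Theorem~\ref{th:original_CGMT}: the program restricted to $\mathcal{S}^c$ is not assumed convex, so only the one-sided form of~\eqref{ineq_gordon} may be used for it, whereas convexity of the unrestricted program is what licenses the two-sided comparison needed to place $\Phi^{(n)}$ below the threshold~$c$.
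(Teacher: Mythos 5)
Your argument is internally correct and does prove the statement exactly as it is literally written: Theorem~\ref{th:original_CGMT} (whose hypotheses follow from (i)--(iii) since a.s.\ convergence implies convergence in probability) gives $\mathbb{P}[{\bf w}_{\Phi}^{(n)}\in\mathcal{S}]\to 1$, and the reverse Fatou inequality $\mathbb{P}[\limsup_n A_n]\geq\limsup_n\mathbb{P}[A_n]$ then yields $\mathbb{P}[{\bf w}_{\Phi}\in\mathcal{S},\ \text{i.o.}]=1$. But this is a genuinely different and strictly weaker route than the paper's, and it misses the point of the theorem. Note that your proof never uses the almost-sure hypotheses except through their in-probability consequences; that is a sign that you are not extracting what the a.s.\ assumption is there for.

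The paper's proof establishes the stronger conclusion that $\{{\bf w}_{\Phi}\notin\mathcal{S}\}$ occurs only \emph{finitely} often, i.e.\ ${\bf w}_{\Phi}\in\mathcal{S}$ for all sufficiently large $n$ almost surely (the final sentence of the paper's proof contains a typo --- it is $\{{\bf w}_{\Phi}\notin\mathcal{S}\}$ that does not occur infinitely often). This stronger form is what the downstream results actually require: to conclude $\norm{\hat{\bf w}}_2\to q_0^\star$ almost surely in Theorems~\ref{th:convergence_hard} and~\ref{th:soft_margin}, one needs $|\norm{\hat{\bf w}}_2-q_0^\star|<\epsilon$ \emph{eventually} for each $\epsilon$; the ``i.o.''\ version only makes $q_0^\star$ an a.s.\ cluster point of the sequence. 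Your closing remark --- that without a quantitative decay rate one cannot obtain $\sum_n\mathbb{P}[B_n]<\infty$, so ``i.o.''\ is sharp --- is exactly what the paper refutes, and this is its advertised contribution. The mechanism is: the pair $({\bf g},{\bf h})$ is drawn independently for each $n$, so the AO events $\mathcal{G}_n=\{\phi^{(n)}\geq\overline{\phi}+\eta\}$ and $\mathcal{R}_n=\{\phi^{(n)}_{\mathcal{S}^c}\leq\overline{\phi}_{\mathcal{S}^c}-\eta\}$ are \emph{independent across $n$}; almost-sure convergence says $\mathbb{P}[\limsup_n\mathcal{G}_n]=0$, and the second (converse) Borel--Cantelli lemma --- for independent events, divergence of $\sum_n\mathbb{P}[\mathcal{G}_n]$ would force $\mathbb{P}[\limsup_n\mathcal{G}_n]=1$ --- therefore delivers $\sum_n\mathbb{P}[\mathcal{G}_n]<\infty$ with no rate whatsoever. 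This summability is transferred through the Gordon/CGMT probability comparisons to the PO events $\mathcal{K}_n^c$, and the direct Borel--Cantelli lemma then gives eventual containment. So the converse Borel--Cantelli ingredient is not ``nothing more than reverse Fatou'': it is the device that converts qualitative a.s.\ convergence of the AO into summability, which is the whole reason the theorem assumes almost-sure rather than in-probability convergence. You should rework the second half of your argument along these lines.
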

\begin{proof}
Let $\eta=\frac{\overline{\phi}_{\mathcal{S}^{c}}-\overline{\phi}}{3}>0$. Then, $\overline{\phi}_{\mathcal{S}^{c}}-\eta=\overline{\phi}+2\eta$. The event $\mathcal{G}_n:=\left\{\phi^{(n)}({\bf g},{\bf h})\geq \overline{\phi}+\eta\right\}$ does not occur infinitely often, hence,
$$
\mathbb{P}\left[\phi^{(n)}({\bf g},{\bf h})\geq \overline{\phi}+\eta,i.o.\right]=0.
$$
Since $\mathcal{G}_n$ are independent, each event being generated by independent vectors ${\bf g}$ and ${\bf h}$, the converse of Borel-Cantelli Lemma implies that $\sum_{n=1}^\infty  \mathbb{P}(\mathcal{G}_n)<\infty$. Similarly, we can prove that $\mathcal{R}_n:=\left\{\phi_{\mathcal{S}^c}^{(n)}({\bf g},{\bf h})\leq \overline{\phi}_{\mathcal{S}^{c}}-\eta\right\}$ satisfy $\sum_{n=1}^\infty \mathbb{P}[\mathcal{R}_n]<\infty$.
Let $\Phi_{\mathcal{S}^{c}}^{(n)}({\bf G})$ be the optimal cost of the PO problem when the minimization is constrained over ${\bf w}\in\mathcal{S}^{c}$. Consider now the event:
$$
\mathcal{K}_n=\left\{\Phi_{\mathcal{S}^{c}}^{(n)}({\bf G})\geq \overline{\phi}_{\mathcal{S}^c}-\eta,\ \Phi({\bf G})^{(n)}\leq \overline{\phi}+\eta\right\}
$$
In this event, we have  $\Phi_{\mathcal{S}^{c}}^{(n)}({\bf G})\geq \overline{\phi}_{\mathcal{S}^c}-\eta=\overline{\phi}+2\eta$. Hence, $\Phi_{\mathcal{S}^{c}}^{(n)}({\bf G})> \overline{\Phi}({\bf G})$, which implies that ${\bf w}_{\Phi}\in\mathcal{S}$. As a consequence, 
$$
\mathbb{P}\left[{\bf w}_{\Phi}\notin\mathcal{S}\right]\leq \mathbb{P}(\mathcal{K}_n^{c})
$$
where $\mathcal{K}_n^{c}$ is the complementary event of $\mathcal{K}_n$. From the union bound, 
$$
\mathbb{P}\left[\mathcal{K}_n^c\right]\leq \mathbb{P}[\mathcal{R}_n]+\mathbb{P}[\mathcal{G}_n].
$$
Hence, 
$$
\sum_{n=1}^\infty \left[\mathcal{K}_n^c\right]<\infty,
$$ 
which proves that $\mathcal{K}_n^{c}$ and thus $\left\{{\bf w}_{\Phi}\in\mathcal{S}\right\}$ does not occur infinitely often. 
\end{proof}

 

\subsection{Hard Margin SVM}
\subsubsection{Identification of the  (PO) and (AO) Problems}
The max-margin solution is obtained by solving the following optimization problem
$$
\min_{{\bf w},b}\max_{\substack{\tilde{u}_i\geq 0\\ i=1,\cdots,n}}{\bf w}^{T}{\bf w}+\sum_{i=1}^n\tilde{u}_i\left(1-y_i{\bf w}^{T}\left(y_i\boldsymbol{\mu}+\sigma{\bf z}_i\right)-y_ib\right).
$$
Let $\tilde{\bf u}=\left[\tilde{u}_1,\cdots,\tilde{u}_n\right]^{T}$ and $\left\{j_i\right\}$ be the vector indexing the observations belonging to class $\mathcal{C}_i$. Let ${\bf Z}=\left[-y_1 {\bf z}_1,\cdots, -y_n {\bf z}_n\right]^{T}$. We need thus to solve the following optimization problem
$$
\min_{{\bf w},b}\max_{\tilde{\bf u}\geq 0} {\bf w}^{T}{\bf w}+{\bf 1}^{T}\tilde{\bf u}-{\bf w}^{T}\boldsymbol{\mu}{\bf 1}^{T}\tilde{\bf u}-({\bf j}_1^{T}-{\bf j}_0^{T})\tilde{\bf u}\  b +\sigma\tilde{\bf u}^{T}{\bf Z}{\bf w}.
$$ 
Performing the change of variable ${\bf u}=\sigma\sqrt{p}\tilde{\bf u}$ leads to the following primary optimization problem 
$$
\Phi^{(n)}\triangleq \min_{{\bf w},b}\max_{{\bf u}\geq 0} 
\frac{1}{\sqrt{p}}{\bf u}^{T}{\bf Z}{\bf w} +\psi({\bf w},{\bf u}).
$$ 
with $\psi({\bf w},{\bf u})\triangleq {\bf w}^{T}{\bf w}+\frac{1}{\sqrt{p}\sigma}{\bf 1}^{T}{\bf u}-\frac{1}{\sqrt{p}\sigma}{\bf w}^{T}\boldsymbol{\mu}{\bf 1}^{T}{\bf u}-\frac{1}{\sqrt{p}\sigma}({\bf j}_1^{T}-{\bf j}_0^{T}){\bf u}\  b$. 
The CGMT requires the feasibility sets of the optimization variables to be compact. Obviously, this is not satisfied since the feasibility set associated with ${\bf w}$ and ${\bf u}$ are not compact. To solve this issue, we write $\Phi^{(n)}$ as:
\begin{equation}
\Phi^{(n)}=\inf_{r,B\geq 0}\sup_{\theta \geq 0}\min_{\substack{{\bf w}\in \mathbb{R}^{p}\\ {\norm{\bf w}}_2\leq r\\ b\leq B}} 
	\frac{1}{\sqrt{p}}{\bf u}^{T}{\bf Z}{\bf w} +\psi({\bf w},{\bf u})
\label{eq:writing}
\end{equation}
and call $\Phi_{r,B}$ for $r,B, \theta\geq 0$ and $\Phi_{r,B,\theta}$  the following optimization problems:
$$
\Phi_{r,B}^{(n)}=\sup_{\theta\geq 0}\min_{\substack{{\bf w}\in \mathbb{R}^{p}\\ {\norm{\bf w}}_2\leq r\\ |b|\leq B}} \max_{\substack{{\bf u}\geq 0 \\ {\norm{\bf u}}_2\leq \theta}} \frac{1}{\sqrt{p}}{\bf u}^{T}{\bf Z}{\bf w} +\psi({\bf w},{\bf u})
$$
\begin{equation}
\Phi_{r,B,\theta}^{(n)}=\min_{\substack{{\bf w}\in \mathbb{R}^{p}\\ {\norm{\bf w}}_2\leq r\\ |b|\leq B}} \max_{\substack{{\bf u}\geq 0 \\ {\norm{\bf u}}_2\leq \theta}} \frac{1}{\sqrt{p}}{\bf u}^{T}{\bf Z}{\bf w} +\psi({\bf w},{\bf u})
\label{eq:cgmt_phi_rtheta}
\end{equation}

We identified thus a family of primiary problems indexed by $(r,B,\theta)$, each of which admits the desired format and satisfies the compactness conditions required by the CGMT theorem.
Particularly, we can easily distinguish the  the bilinear form $\frac{1}{\sqrt{p}}{\bf u}^{T}{\bf Z}{\bf w}$ and the  function $\psi({\bf w},{\bf u})$ which is convex in ${\bf w}$ and linear thus concave in ${\bf u}$.  
 We associate thus with each one of them the following auxiliary optimization (AO) problem  which can be written as:
$$
\phi_{r,B,\theta}^{(n)}=\min_{\substack{{\bf w}\in \mathbb{R}^{p}\\ {\norm{\bf w}}_2\leq r \\ |b|\leq B}} \max_{\substack{{\bf u}\geq 0 \\ {\norm{\bf u}}_2\leq \theta}} \frac{1}{\sqrt{p}}{\norm{\bf w}}_2{\bf g}^{T}{\bf u}-\frac{1}{\sqrt{p}}{\norm{\bf u}}_2{\bf h}^{T}{\bf w}+\psi({\bf w},{\bf u})
$$
Now that we have identified the (AO) problems, we wish to solve them and infer their asymptotic behavior. To this end, we proceed in two steps. First, we simplify the (AO) problems by reducing them to problems that involve only optimization over a few number of scalars. In doing so, the asymptotic behavior of the AO problems is much simplified and is carried out in the second step.   
\subsubsection{Simplification of the (AO)  Problems}
One major step towards the simplification of the (AO) problems is to reduce them to problems that involve only few scalar optimization parameters.  Obviously, the objective function of the AO lends itself to this kind of simplification, vector ${\bf w}$ appearing only through its norm or its scalar product ${\bf w}^{T}\boldsymbol{\mu}$ and ${\bf w}^{T}{\bf h}$. In light of this observation,  we decompose ${\bf w}$ as:
$$
{\bf w}=\alpha_1 \frac{\boldsymbol{\mu}}{\norm{\boldsymbol{\mu}}_2} +\alpha_2 {\bf w}_{\perp},
$$
where ${\bf w}_{\perp}$ is a unit norm vector orthogonal to $\boldsymbol{\mu}$. With these notations at hand, we write the (AO) as:
\begin{align*}
	&\phi_{r,B, \theta}^{(n)}\!=\!\min_{\substack{{\bf w}\in \mathbb{R}^{p}\\ {\norm{\bf w}}_2\leq r \\ |b|\leq B}} \!\!\max_{\substack{{\bf u}\geq 0 \\ {\norm{\bf u}}_2\leq \theta}} \frac{1}{\sqrt{p}} \sqrt{\alpha_1^2+\alpha_2^2}{\bf g}^{T}{\bf u}- \frac{1}{\sqrt{p}} {\norm{\bf u}}_2 (\alpha_1 \frac{{\bf h}^{T}\boldsymbol{\mu}}{{\norm{\boldsymbol{\mu}}}_2}\\
	&+{\alpha_2} {\bf h}^{T}{\bf w}_{\perp}) +\alpha_1^2+\alpha_2^2 +\frac{1}{\sqrt{p}\sigma}{\bf 1}^{T}{\bf u}-\alpha_1\norm{\boldsymbol{\mu}}_2 \frac{1}{\sqrt{p}\sigma}{\bf 1}^{T}{\bf u}\\
	&- \frac{1}{\sqrt{p}\sigma} ({\bf j}_1^{T}-{\bf j}_0^{T}){\bf u}  b \ \ .
\end{align*}
We will now prove that optimizing over ${\bf w}$ reduces to optimizing over the set of scalars $(\alpha_1,\alpha_2)$. Note here, that flipping the min-max is not permitted since the objective function is not convex in ${\bf w}$ and concave in ${\bf u}$. One however is tempted to replace ${\bf w}_{\perp}$ by ${\rm sign}(\alpha_2)\frac{{\bf h}_\perp }{{\norm{\bf h}_\perp}_2}$, where ${\bf h}_{\perp}$ is the orthogonal projection of ${\bf h}$ onto the subspace orthogonal to $\boldsymbol{\mu}$,  since this would minimize the objective function for any ${\bf u}$. This property, that the vector ${\bf w}_{\perp}= {\rm sign}(\alpha_2)\frac{{\bf h}_{\perp} }{{\norm{{\bf h}_\perp}}_2}$ minimizes the objective function for any ${\bf u}$ allows us,   using Lemma \ref{lem:prop} proven in the Appendix, to show that ${\phi}_{r,B,\theta}$ is also given by:
\begin{align*}
	&\phi_{r,B,\theta}^{(n)}\!\!=\!\!\min_{\substack{\alpha_1,\alpha_2\in\mathbb{R} \\ \alpha_1^2+\alpha_2^2\leq r^2\\ |b|\leq B}} \max_{\substack{{\bf u}\geq 0 \\ {\norm{\bf u}}_2\leq \theta}} \frac{1}{\sqrt{p}} \sqrt{\alpha_1^2+\alpha_2^2}{\bf g}^{T}{\bf u}\\
	&- \frac{1}{\sqrt{p}} {\norm{\bf u}}_2 (\alpha_1 \frac{{\bf h}^{T}\boldsymbol{\mu}}{{\norm{\boldsymbol{\mu}}}_2}+{|\alpha_2|} {\norm{{\bf h}_\perp}_2}) +\alpha_1^2+\alpha_2^2 +\frac{1}{\sqrt{p}\sigma}{\bf 1}^{T}{\bf u}\\
	&-\alpha_1\norm{\boldsymbol{\mu}}_2 \frac{1}{\sqrt{p}\sigma}{\bf 1}^{T}{\bf u}- \frac{1}{\sqrt{p}\sigma} ({\bf j}_1^{T}-{\bf j}_0^{T}){\bf u}  b\ \ .
\end{align*}
obtained by replacing ${\bf w}_{\perp}$ by ${\rm sign}(\alpha_2)\frac{{\bf h} }{{\norm{\bf h}}_2}$. We emphasize here on the fact that in this operation, we do not perform any permutation of the order of the min-max. In the sequel, it is convenient to perform the optimization over $q_0=\sqrt{\alpha_1^2+\alpha_2^2}$ and $\alpha_1$. With this notation at hand, $\phi_{r,B,\theta}$ is simplified in \eqref{eq:sim} where \eqref{eq:phi_1} follows from decomposing the maximization over ${\bf u}$ into the maximization over its direction and its magnitude, \eqref{eq:phi_2} is obtained by applying lemma \ref{lem:opt_norm} and \eqref{eq:AO} is derived  by performing the change of variable $\rho=\frac{\alpha_1}{q_0}$.  
\begin{figure*}
	\begin{subequations}\label{eq:sim}
\begin{align}
	&\phi_{r,B,\theta}^{(n)}\!\!=\!\!\min_{\substack{0\leq q_0\leq r\\|\alpha_1|\leq q_0\\ |b|\leq B}} \max_{\substack{{\bf u}\geq 0 \\ {\norm{\bf u}}_2\leq \theta}} \frac{1}{\sqrt{p}} q_0{\bf g}^{T}{\bf u}- \frac{1}{\sqrt{p}} {\norm{\bf u}}_2 (\alpha_1 \frac{{\bf h}^{T}\boldsymbol{\mu}}{{\norm{\boldsymbol{\mu}}}_2}+{\sqrt{q_0^2-\alpha_1^2}} {\norm{{\bf h}_\perp}_2}) +\alpha_1^2+\alpha_2^2 +\frac{1}{\sqrt{p}\sigma}{\bf 1}^{T}{\bf u}-\alpha_1\norm{\boldsymbol{\mu}}_2 \frac{1}{\sqrt{p}\sigma}{\bf 1}^{T}{\bf u}\nonumber\\
	&- \frac{1}{\sqrt{p}\sigma} ({\bf j}_1^{T}-{\bf j}_0^{T}){\bf u}  b, \label{eq:phi_1}\\
	&=\min_{\substack{q_0\leq r\\|\alpha_1|\leq q_0\\ |b|\leq B}} \max_{\theta \geq m\geq 0} \max_{{\norm{\bf u}}_2=m} {\bf u}^{T}\left(q_0 {\bf g} +\frac{1}{\sqrt{p}\sigma}{\bf 1}-\alpha_1{\norm{\bf u}}_2\frac{1}{\sqrt{p}\sigma}{\bf 1}-\frac{1}{\sqrt{p}\sigma}({\bf j}_1-{\bf j}_0)b\right) +q_0^2-m\left(\frac{1}{\sqrt{p}}\alpha_1\frac{{\bf h}^{T}\boldsymbol{\mu}}{\norm{\boldsymbol{\mu}}_2}+\sqrt{q_0^2-\alpha_1^2}\frac{1}{\sqrt{p}}\norm{{\bf h}_{\perp}}_2\right)\label{eq:phi_2}\\
	&=\min_{\substack{0\leq q_0\leq r\\|\alpha_1|\leq q_0\\ |b|\leq B}}\max_{\theta\geq m\geq 0}q_0^2+m\sqrt{\frac{1}{p}\sum_{i\in\mathcal{C}_1}\big(q_0 g_i+\frac{1-\alpha_1\norm{\boldsymbol{\mu}}_2-b}{\sigma}\big)_{+}^2+\frac{1}{p}\sum_{i\in\mathcal{C}_0}\big(q_0 g_i+\frac{1-\alpha_1{\norm{\boldsymbol{\mu}}}_2+b}{\sigma}\big)_{+}^2}\nonumber\\
	&-m\left(\frac{1}{\sqrt{p}}\alpha_1\frac{{\bf h}^{T}\boldsymbol{\mu}}{\norm{\boldsymbol{\mu}}_2}+\sqrt{q_0^2-\alpha_1^2}\frac{1}{\sqrt{p}}\norm{{\bf h}_{\perp}}_2\right),\\
	&=\min_{\substack{0\leq q_0\leq r\\-1\leq \rho\leq 1\\ |b|\leq B}}q_0^2+\theta q_0\Big(\sqrt{\frac{1}{p}\sum_{i\in\mathcal{C}_1}\big( g_i+\frac{1}{q_0\sigma}-\frac{\rho\norm{\boldsymbol{\mu}}_2}{\sigma}-\frac{b}{q_0\sigma}\big)_{+}^2+\frac{1}{p}\sum_{i\in\mathcal{C}_0}\left(g_i+\frac{1}{q_0\sigma}-\rho\frac{\norm{\boldsymbol{\mu}}_2}{\sigma}+\frac{b}{\sigma q_0}\right)_{+}^2}\nonumber\\
	&-\big(\frac{1}{\sqrt{p}}\rho\frac{{\bf h}^{T}\boldsymbol{\mu}}{\norm{\boldsymbol{\mu}}_2}+\sqrt{1-\rho^2}\frac{1}{\sqrt{p}}\norm{{\bf h}_{\perp}}_2\big)\Big)_{+}\ \ .\label{eq:AO}
\end{align}
	\end{subequations}
\hrule
\end{figure*}

The above simplification of the auxiliary problem follows through a deterministic analysis that does not involve any asymptotic approximations. Contrary to the original writing of the AO, this new simplification is more handy towards understanding its asymptotic behavior. This constitutes the objective of the next section.   
\subsubsection{Asymptotic Behavior of the (AO) problems (Proof of Theorem \ref{th:hardmargin_condition})}
\label{tech_hard_margin_cond}
A well-known fact is that the hard-margin SVM does not always lead to a finite solution, but it is not clear as to when this happens. In the following, we prove that a careful analysis of the AO problems allow us to provide a rigorous answer to this question. Particularly, we will prove that if the condition in Theorem~\ref{th:hardmargin_condition} holds true, with probability $1$, the hard-margin SVM leads to infinite solution for sufficiently large dimensions $n$ and $p$. The key idea of the proof relies on showing that under the condition  in Theorem~\ref{th:hardmargin_condition}, with probability $1$, $\phi_{r,B,\theta}^{(n)}\geq u_n \theta$, where $u_n$ is a certain sequence. We will show later that this property implies the failure of the hard-margin SVM to provide a finite solution. 
To begin with, we define for fixed $q_0\in\mathbb{R}_{+}$, $\rho\in[-1,1]$ and $\eta\in\mathbb{R}$ function $\hat{D}_H(q_0,\rho,\eta)$ in \eqref{eq:DH_hat}
\begin{figure*}
	\begin{align}
		\hat{D}_H:(q_0,\rho,\eta)\mapsto &\sqrt{\frac{1}{p}\sum_{i\in\mathcal{C}_1}\big( g_i+\frac{1}{q_0\sigma}-\frac{\rho\norm{\boldsymbol{\mu}}_2}{\sigma}-\eta\big)_{+}^2+\frac{1}{p}\sum_{i\in\mathcal{C}_0}\big(g_i+\frac{1}{q_0\sigma}-\rho\frac{\norm{\boldsymbol{\mu}}_2}{\sigma}+\eta\big)_{+}^2}-\left(\frac{1}{\sqrt{p}}\rho\frac{\left|{\bf h}^{T}\boldsymbol{\mu}\right|}{\norm{\boldsymbol{\mu}}_2}+\sqrt{1-\rho^2}\frac{1}{\sqrt{p}}\norm{{\bf h}_{\perp}}_2\right)
\label{eq:DH_hat}
	\end{align}
\hrule
\end{figure*}
We can thus lower-bound $\phi_{r,B,\theta}^{(n)}$ as:
 \begin{align}
	 \phi_{r,B,\theta}^{(n)}\!\!&\geq  \min_{\substack{0\leq q_0\leq r\\-1\leq \rho\leq 1\\ \eta\in\mathbb{R}}} q_0^2+\theta q_0\left(\hat{D}_H(q_0,\rho,\eta)\right)_{+}.
\label{eq:lower_bound}
\end{align}
where $(a)_{+}=\max(a,0)$. 
Function $h_n:q_0\mapsto \hat{D}_H(q_0,\rho,\eta)$ is decreasing in $q_0$.  
We may thus find a lower-bound for it by taking its limit as $q_0\to\infty$. However this would not be helpful, since after replacing this function by this lower-bound, and optimizing over $q_0$, we find that $\tilde{\phi}_{r,B,\theta}\geq 0$, a fact that does not carry a lot of information.  To solve this problem, we need to consider the cases when $q_0$ is in the vicinity of zero, and when $q_0$ is sufficiently  far away from zero. When $q_0$ is very close to zero, in a sense that will be defined, we may expect $h_n(q_0)\geq \frac{C}{q_0}$, and hence $q_0h_n(q_0)\geq C$. This will allow us to prove the sought-for scaling behaviour with respect to $\theta$ of $\phi_{r,B,\theta}$ when $q_0$ is in the vicinity of zero. One can easily see that if  $0\leq q_0\leq q_{U}\triangleq \frac{1}{2\sigma \max_{1\leq i\leq n}\left|g_i\right|+2\norm{\boldsymbol{\mu}}_2}$, then $g_i+\frac{1}{q_0\sigma}-\frac{\rho\norm{\boldsymbol{\mu}_2}}{\sigma}\geq \frac{1}{2q_0\sigma}$, thereby implying that:
\begin{align}
	&\min_{\substack{0\leq q_0\leq q_U\\ \eta\in\mathbb{R}\\ |\rho|\leq 1 }}\theta q_0\Big(\hat{D}_H(q_0,\rho,\eta)\Big)_{+}\!\!\\
	&\geq \min_{\substack{0\leq q_0\leq q_U\\ \eta\in\mathbb{R}\\ |\rho|\leq 1 }}\theta q_0 \Big(\sqrt{\frac{n_1}{p}(\frac{1}{2q_0\sigma}-\eta)_{+}+\frac{n_0}{p}(\frac{1}{2q_0\sigma}+\eta)_{+}^2}-\frac{1}{\sqrt{p}} \frac{|{\bf h}^{T}\boldsymbol{\mu}|}{\norm{\boldsymbol{\mu}}_2}\\
	&-\frac{1}{\sqrt{p}} \norm{{\bf h}_{\perp}}_2\Big)_{+}\nonumber\\
&\stackrel{(a)}{\geq} \theta \frac{1}{\sigma} \sqrt{\frac{n_1n_0}{np}}-\theta q_U \left(\frac{1}{\sqrt{p}} \frac{|{\bf h}^{T}\boldsymbol{\mu}|}{\norm{\boldsymbol{\mu}}_2}+\frac{1}{\sqrt{p}}\norm{{\bf h}_{\perp}}\right)\ \ .\label{eq:first_bound}
\end{align}
where $(a)$ follows from performing the optimization over $\eta\in\mathbb{R}$. Since  $q_U\leq \frac{1}{2\sigma |\max_{1\leq i\leq n}g_i|+2\norm{\boldsymbol{\mu}}_2}$, and  $|\frac{\max_{1\leq i\leq n}g_i}{\sqrt{2\log n}}|\asto 1$, $q_U$ converges to $0$ almost surely. Hence, with probability $1$ as $n$ and $p$ are sufficiently large, $q_U\leq \frac{1}{2\sigma}\sqrt{\frac{n_1n_0}{np}}$. We have thus proved that with probability $1$, for large $n$ and $p$ 
\begin{equation}
	\min_{\substack{0\leq q_0\leq q_U\\ \eta\in\mathbb{R}\\ |\rho|\leq 1 }}\theta q_0\Big(\hat{D}_H(q_0,\rho,\eta)\Big)\!\!\geq \theta\frac{1}{2\sigma}\sqrt{\frac{n_1n_0}{np}}\ \ \label{eq:cond1}. 
\end{equation}
We will now consider the optimization of $\phi_{r,B,\theta}^{(n)}$ when $q_U\leq q_0\leq r$. 
\footnote{Without loss of generality, we  assume that $r\geq q_U$.} Using the fact that function $h_n$ is decreasing in $q_0$, we obtain:
\begin{align}
	\min_{\substack{q_U\leq q_0\leq r\\ |\rho|\leq 1 \\ \eta\in\mathbb{R}}}  \theta q_0\Big(\hat{D}_H(q_0,\rho,\eta)\Big)_{+}&\geq\min_{\substack{q_U\leq q_0\leq r\\ |\rho|\leq 1 \\ \eta\in\mathbb{R}}}  \theta q_0 \Big(\ell_n(\rho,\eta)\Big)_{+}
\end{align}
where $\ell_n:[-1,1]\times\mathbb{R}$ with
\begin{align}
	\ell_n(\rho,\eta)&=\sqrt{\frac{1}{p}\sum_{i\in\mathcal{C}_1}\left(g_i-\frac{\rho\norm{\boldsymbol{\mu}}}{\sigma}-\eta\right)_{+}^2+\frac{1}{p}\sum_{i\in\mathcal{C}_0}\left(g_i-\frac{\rho\norm{\boldsymbol{\mu}}}{\sigma}+\eta\right)_{+}^2}\nonumber\\
	&-\frac{1}{\sqrt{p}}\frac{{\bf h}^{T}\boldsymbol{\mu}}{\norm{\boldsymbol{\mu}}}-\sqrt{\frac{1-\rho^2}{p}}\norm{{\bf h}_{\perp}}.
\end{align}
It is easy to see that $\ell_n$ is  jointly convex function in its arguments $(\rho,\eta)$ and converges almost surely to 
\begin{align*}
	&\overline{\ell}:(\rho,\eta)\mapsto \sqrt{\delta \pi_1\mathbb{E}\big(G-\frac{\rho\mu}{\sigma}-\eta\big)_{+}^2+\delta\pi_0\mathbb{E}\big(G-\frac{\rho\mu}{\sigma}+\eta\big)_{+}^2}\\
	&	-\sqrt{1-\rho^2}
\end{align*}
where $G\sim\mathcal{N}(0,1)$. 
 Since $\lim_{\eta\to\infty} \overline{\ell}(\rho,\eta)=\infty$, using Lemma 11 and Lemma 10 in \cite{thrampoulidis-IT}, we obtain:
$$
\min_{\eta\in\mathbb{R}}\ell_n(\rho,\eta)\asto \min_{\eta\in\mathbb{R}} \overline{\ell}(\rho,\eta)\ \ .
$$
Moreover, for $\rho\in\left[-1,1\right]$, expressing the first order conditions with respect to $\eta$, the optimum $\eta^\star(\rho)$ is a solution to the following equation:
\begin{equation}
\eta=\frac{\frac{\pi_1}{\sqrt{2\pi}}\int_{\frac{\rho\mu}{\sigma}+\eta}^\infty (x-\frac{\rho \mu}{\sigma})Dx+\frac{\pi_0}{\sqrt{2\pi}}\int_{\frac{\rho\mu}{\sigma}-\eta}^\infty (\frac{\rho\mu}{\sigma}-x)Dx}{\frac{\pi_1}{\sqrt{2\pi}}\int_{\frac{\rho\mu}{\sigma}+\eta}^\infty Dx+\frac{\pi_0}{\sqrt{2\pi}}\int_{\frac{\rho\mu}{\sigma}-\eta}^\infty Dx}\label{eq:equation}
\end{equation}
It is easy to see that the solution of  \eqref{eq:equation} is unique. This is because function \begin{align}\eta:\mapsto &\eta\left({\pi_1}\int_{\frac{\rho\mu}{\sigma}+\eta}^\infty Dx+{\pi_0}\int_{\frac{\rho\mu}{\sigma}-\eta}^\infty Dx\right)\nonumber\\
&-{\pi_1}{}\int_{\frac{\rho\mu}{\sigma}+\eta}^\infty (x-\frac{\rho \mu}{\sigma})Dx-{\pi_0}\int_{\frac{\rho\mu}{\sigma}-\eta}^\infty (\frac{\rho\mu}{\sigma}-x)Dx\end{align} is decreasing with limits $\infty$ and $-\infty$ when $\eta\to-\infty$ and $\eta\to\infty$ respectively. 
Using Lemma \ref{lem:convexity} in the Appendix, Function $\rho\mapsto \displaystyle\min_{\eta\in\mathbb{R}}\ell_{n}(\eta,\rho)$ is convex in $\rho$. Since the convergence of convex functions is uniform over compacts, from Theorem 2.1 in  \cite{newey_large_1994}, we have:
\begin{equation}
\min_{\substack{\eta\in\mathbb{R}\\-1\leq \rho\leq 1 }}\ell_n(\rho,\eta)\asto \min_{\substack{\eta\in\mathbb{R}\\ -1\leq \rho\leq 1}} \overline{\ell}(\rho,\eta)\ \ .
\label{eq:conv}
\end{equation}
If Condition \eqref{condition} is satisfied, $\overline{\ell}\triangleq\min_{\substack{\eta\in\mathbb{R}\\ -1\leq\rho\leq 1}}\overline{\ell}(\rho,\eta)>0$, which implies that for all $\epsilon>0$, and sufficiently large $n$  and $p$ we have with probability $1$,
\begin{equation}
	\min_{\substack{q_U\leq q_0\leq r\\ |\rho|\leq 1}} \theta q_0\hat{D}_H(q_0,\rho,\eta) \geq \theta q_U \left(\overline{\ell}-\epsilon\right)\ \ .
\label{eq:second_bound}
\end{equation}
Taking $\epsilon=\frac{\overline{\ell}}{2}$ and combining \eqref{eq:first_bound} and \eqref{eq:second_bound} leads to:
\begin{equation}
 \min_{\substack{q_U\leq q_0\leq r\\ |\rho|\leq 1}} \theta q_0\hat{D}_H(q_0,\rho,\eta)\geq\theta q_U \frac{\overline{\ell}}{2} \label{eq:rB}
\end{equation}
almost surely for enough large $n$ and $p$. Combining \eqref{eq:cond1} and \eqref{eq:rB}, yields
$$
\phi_{r,B,\theta}^{(n)}\geq \theta\min\left(\frac{1}{2\sigma}\sqrt{\frac{n_1n_0}{np}},q_U\frac{\overline{\ell}}{2}\right)
$$
As $q_U$ converges almost surely to zero, for sufficiently large $n$ and $p$, $\min\left(\frac{1}{2\sigma}\sqrt{\frac{n_1n_0}{np}},q_U\frac{\overline{\ell}}{2}\right)=q_U\frac{\overline{\ell}}{2}$. Hence for sufficiently large $n$ and $p$,
\begin{equation}
\phi_{r,B,\theta}^{(n)}\geq \theta q_U\frac{\overline{\ell}}{2}
	\label{eq:phi_rBtheta}
\end{equation}
With the above inequality \eqref{eq:phi_rBtheta} at hand, we are now ready to establish Theorem \ref{th:hardmargin_condition}. First, we shall bring to the reader's attention that the order of magnitude of $n$ and $p$ above which  \eqref{eq:rB} holds is independent of $r,B$ and $\theta$. It entails from this that the set:
\begin{align}
	&\mathcal{E}\triangleq  \left\{\cup_{k=1}^\infty \cap_{m=1}^\infty\Big\{ \phi_{k,k,m}^{(n)} \geq m q_U\frac{\overline{\ell}}{2}\right\} \nonumber\\
	&\ \textnormal{for} \ n  \ \textnormal{and} \ p \ \textnormal{sufficiently large}\Big\}
\label{eq:io}
\end{align}
verifies $\mathbb{P}\left[\mathcal{E}\right]=1$ as the countable intersection and union of events with probability $1$.
Let us now consider the optimal value $\Phi^{(n)}$ of the primary optimization problem and illustrate how the characterization of the auxiliary problem allows to ensure that under the setting of Theorem \ref{th:hardmargin_condition}, $\Phi=\infty$ for $n$ and $p$ sufficiently large. One way to prove this is to show that for all $x>0$, $ \mathbb{P}\left[\Phi^{(n)}\leq x, \ \ \textnormal{for} \ \ n,p \ \ \textnormal{sufficiently large}\right]=0$. From \eqref{eq:writing}, if $\Phi^{(n)}\neq \infty$, for $\epsilon>0$ sufficiently small, there exists $k\in\mathbb{N}$ such that $\Phi^{(n)} \geq \Phi_{k,k}^{(n)}-\epsilon$. Hence, 
\begin{align*}
\mathbb{P}\left[\left\{\Phi^{(n)}\leq x\right\}\cap\left\{\Phi^{(n)}\neq \infty\right\}\right] \leq \mathbb{P}\left[\cup_{k=1}^\infty \left\{\Phi_{k,k}^{(n)}\leq x+\epsilon\right\}\right]\\
\leq \mathbb{P}\left[\cup_{k=1}^\infty \left\{\cap_{m=1}^\infty\left\{\Phi_{k,k,m}^{(n)}\leq x+\epsilon\right\}\right\}\right],
\end{align*} 
For $m\in\mathbb{N}^\star$, the events $\mathcal{E}_k=\left\{\cap_{m=1}^\infty \left\{\Phi_{k,k,m}^{(n)}\leq (x+\epsilon)\right\}\right\}$ forms an increasing sequence of events, thus:
$$
\mathbb{P}\left[\cup_{k=1}^\infty \mathcal{E}_k\right]=\lim_{k\to\infty} \mathbb{P}\left[\mathcal{E}_k\right].
$$
Similarly, as $\Phi_{k,k,m}^{(n)}\geq \Phi_{k,k,m-1}^{(n)}$, for $k\in\mathbb{N}^\star$, the sequence of events, $\mathcal{E}_{k,m}=\left\{\Phi_{k,k,m}^{(n)}\leq x\right\}$ is decreasing, thus:
$$
\mathbb{P}\left[\cap_{m=1}^\infty \mathcal{E}_{k,m}\right]=\lim_{m\to\infty} \mathbb{P}\left[\mathcal{E}_{k,m}\right]
$$
We thus obtain:
$$
\mathbb{P}\left[\Phi^{(n)}\leq x\right] \leq \lim_{k\to\infty} \lim_{m\to\infty} \mathbb{P}\left[\left\{\Phi_{k,k,m}^{(n)}\leq x+\epsilon\right\}\right]
$$
From the CGMT theorem, we have:
\begin{align*}
\!\!\mathbb{P}\left[\Phi_{k,k,m}^{(n)}\leq x+\epsilon\right]\leq & \mathbb{P}\left[\phi_{k,k,m}^{(n)}\leq x+\epsilon\right].
\end{align*}
Hence, 
\begin{align}
	&\!\!	\lim_{k\to\infty}\lim_{m\to\infty} \mathbb{P}\left[\Phi_{k,k,m}^{(n)}\leq x+\epsilon\right]\leq \!\!\lim_{k\to\infty}\lim_{m\to\infty} \mathbb{P}\left[\phi_{k,k,m}^{(n)}\leq x+\epsilon\right]\\
	&=\mathbb{P}\left[\cup_{k=1}^\infty\left\{\cap_{m=1}^\infty \{\phi_{k,k,m}^{(n)}\leq x+\epsilon\}\right\}\right]
\end{align}
Using the fact that $\mathbb{P}[\mathcal{E}]=1$ with $\mathcal{E}$ given by \eqref{eq:io}, the event $A_n=\left\{\cup_{k=1}^\infty\left\{\cap_{m=1}^\infty \{\phi_{k,k,m}^{(n)}\leq x+\epsilon\}\right\}\right\}$ does not occur infinitely often, or in other words $\mathbb{P}\left[A_n, i.o\right] =0$. Since $(A_n)$ are independent, each event being generated by independent vectors ${\bf g}$ and ${\bf h}$ in $\mathbb{R}^{n\times 1}$ and $\mathbb{R}^{p\times 1}$, the converse of Borel-Cantelli lemma implies that $\sum_{n=1}^\infty \mathbb{P}(A_n)<\infty$. Therefore, 
$$
\sum_{n=1}^\infty\mathbb{P}\left[\Phi^{(n)}\leq x\right] <\infty \ \ . 
$$ 
Using Borel-Cantelli Lemma, we deduce that for any $x$, 
$$
\mathbb{P}\left[\Phi^{(n)}\leq x, \text{i.o}\right]=0\ \ .
$$
This implies that $\left\{\Phi^{(n)}=\infty\right\}$ occurs infinitely often. 
\subsubsection{Asymptotic Behavior of the (AO) problems (Proof of Theorem \ref{th:convergence_hard})}
\label{technical_proof_hard_margin_asym}
To begin with, we check first that function $\beta$ has a unique zero $q_0^\star$. Towards this end, note that the $\eta^\star(\rho,q_0)$ minimizing $D_H(q_0,\rho,\eta)$ for fixed $q_0$ and $\rho$ should be a solution of the following equation in $\eta$:
$$
\eta\!\!=\!\!\frac{\pi_1\int_{\frac{\rho\mu}{\sigma}-\frac{1}{q_0}+\eta}^\infty(x-\frac{\rho\mu}{\sigma}+\frac{1}{q_0})Dx-\pi_0 \int_{\frac{\rho\mu}{\sigma}-\frac{1}{q_0}-\eta}^\infty (x-\frac{\rho\mu}{\sigma}+\frac{1}{q_0})Dx}{\pi_1\int_{\frac{\rho\mu}{\sigma}-\frac{1}{q_0}+\eta}^\infty Dx+\pi_0\int_{\frac{\rho\mu}{\sigma}-\frac{1}{q_0}-\eta}Dx}
$$
Such an equation admits a unique solution because function 
\begin{align*}
&\eta\mapsto \eta\left[\pi_1\int_{\frac{\rho\mu}{\sigma}-\frac{1}{q_0}+\eta}^\infty Dx+\pi_0\int_{\frac{\rho\mu}{\sigma}-\frac{1}{q_0}-\eta}^\infty Dx\right]\\
&-\Big[\pi_1\int_{\frac{\rho\mu}{\sigma}-\frac{1}{q_0}+\eta}^\infty(x-\frac{\rho\mu}{\sigma}+\frac{1}{q_0})Dx\\
	&-\pi_0 \int_{\frac{\rho\mu}{\sigma}-\frac{1}{q_0\sigma}-\eta}^\infty (x-\frac{\rho\mu}{\sigma}+\frac{1}{q_0\sigma})Dx\Big]
\end{align*}
is an increasing function with limits $-\infty$ and $\infty$ when $\eta\to-\infty$ and $\eta\to\infty$. Moreover, $(\rho,q_0)\mapsto \eta^\star(\rho,q_0)$ is a continuous function. 
From the Maximum Theorem \cite[Theorem~9.17]{Sundaram}, function $q_0\mapsto \min_{-1\leq \rho\leq 1} D_H(q_0,\rho,\eta^\star(\rho,q_0))$ is continuous.   It tends to $\infty$ as $q_0\to 0^{+}$ and to 
\begin{align}
	&\displaystyle\min_{-1\leq \rho\leq 1}\Big(\frac{\delta\pi_1}{2\pi}\int_{\frac{\rho\mu}{\sigma}+\eta^\star(\rho)}^\infty (x-\frac{\rho\mu}{\sigma}-\eta^\star(\rho))^2Dx\nonumber\\
	&+ \frac{\delta\pi_0}{2\pi}\int_{\frac{\rho\mu}{\sigma}-\eta^\star(\rho)}(x-\frac{\rho\mu}{\sigma}+\eta^\star(\rho))^2 Dx\Big)^{\frac{1}{2}}-\sqrt{1-\rho^2}<0
\end{align}
when $q$ tends to $\infty$. There exists thus $q_0^\star$ such that $\displaystyle \min_{\substack{-1\leq \rho\leq 1\\ \eta\in\mathbb{R}}}D_H(q_0^\star,\rho,\eta)=0$. We will prove that necessarily  such a $q_0^\star$ is unique. Assume that there exists two solutions $q_{01}^\star$ and $q_{02}^\star$ such that $\displaystyle\min_{\substack{-1\leq \rho\leq 1 \\ \eta\in\mathbb{R}}}D_H(q_{01}^\star,\rho,\eta)=\displaystyle\min_{\substack{-1\leq \rho\leq 1 \\ \eta\in\mathbb{R}}}D_H(q_{02}^\star,\rho,\eta)=0$. Let ($\rho_1^\star$, $\rho_2^\star$) and ($\eta_1^\star(\rho_1^\star,q_{01}^\star)$, $\eta_2^\star(\rho_2,q_{02}^\star)$) such that $\displaystyle\min_{\substack{-1\leq \rho\leq 1 \\ \eta\in\mathbb{R}}}D_H(q_{01}^\star,\rho,\eta)=D_H(q_{01}^\star,\rho_1^\star,\eta_1^\star(\rho_1^\star,q_{01}^\star))$ and $\displaystyle\min_{\substack{-1\leq \rho\leq 1 \\ \eta\in\mathbb{R}}}D_H(q_{02}^\star,\rho,\eta)=D_H(q_{02}^\star,\rho_2^\star,\eta_2^\star(\rho_2^\star,q_{02}^\star))$. Hence,
\begin{align}
	0=D_H(q_{02}^\star,\rho_2^\star,\eta^\star(q_{02}^\star,\rho_2^\star))&=D_H(q_{01}^\star,\rho_1^\star,\eta^\star(q_{01}^\star,\rho_1^\star))\\
	&\leq D_H(q_{01}^\star,\rho_2^\star,\eta^\star(q_{02}^\star,\rho_2^\star))
\end{align}
Since for any $\eta\in \mathbb{R}$ and $\rho\in[-1,1]$, $q\mapsto D_H(q,\rho,\eta)$ is decreasing, $q_{01}^\star\geq q_{02}^\star$. The same reasoning leads also to  $q_{01}^\star\leq q_{02}^\star$. Hence $q_{01}^\star=q_{02}^\star$. 
We will prove now that there exists unique $\rho^\star$ and $\eta^\star(\rho^\star,q_0^\star)$ such that:
$$
\min_{\substack{-1\leq \rho\leq 1\\ \eta\in\mathbb{R}}} D_H(q_0^\star,\rho,\eta)=D_H(q_0^\star,\rho^\star,\eta^\star(\rho^\star,q_0^\star))=0.  
$$
Function \begin{align}\phi:(\rho,\eta)\mapsto &\Big(\frac{\delta \pi_1}{\sqrt{2\pi}}\int_{\frac{\rho\mu}{\sigma}-\frac{1}{\sigma q_0^\star}+\eta}^\infty\left(x-\frac{\rho\mu}{\sigma}+\frac{1}{\sigma q_0^\star}-\eta\right)^2Dx\nonumber\\
&+\frac{\delta \pi_0}{\sqrt{2\pi}}\int_{\frac{\rho\mu}{\sigma}-\frac{1}{\sigma q_0^\star}-\eta}^\infty\left(x-\frac{\rho\mu}{\sigma}+\frac{1}{\sigma q_0^\star}+\eta\right)^2Dx\Big)^{\frac{1}{2}}\end{align} is jointly convex in its arguments. Hence, $\rho\mapsto\min_{\eta\in\mathbb{R}} \phi(\rho,\eta)$ is  convex in $[-1,1]$.  As $\rho\mapsto -\sqrt{1-\rho^2}$ is strictly convex in $[-1,1]$, then $\rho\mapsto \min_{\eta\in\mathbb{R}}D_H(q_0^\star,\rho,\eta)$ is strictly  convex in $[-1,1]$. Assume that there exists $\rho^\star$ and $\tilde{\rho}^\star$ in $[-1,1]$ such that:
\begin{align}
	\min_{\eta\in\mathbb{R}}D_H(q_0^\star,\rho^\star,\eta)&=\min_{\eta\in\mathbb{R}}D_H(q_0^\star,\tilde{\rho}^\star,\eta)\\
	&=\min_{\substack{-1\leq \rho\leq 1\\ \eta\in\mathbb{R}}} D_H(q_0^\star,\rho,\eta)=0.
\end{align}
Let $\lambda\in(0,1)$. Assume $\rho^\star\neq \tilde{\rho}^\star$. Then 
\begin{align}
	&\min_{\eta\in\mathbb{R}}D_H(q_0^\star,\lambda\rho^\star+(1-\lambda)\tilde{\rho}^\star,\eta)\\
	&< \lambda\min_{\eta\in\mathbb{R}}D_H(q_0^\star,\rho^\star,\eta)+(1-\lambda) \min_{\eta\in\mathbb{R}}D_H(q_0^\star,\tilde{\rho}^\star,\eta)= 0
\end{align}
We obtain thus a contradiction, since $0=\min_{\substack{\eta\in\mathbb{R}}\\ \rho\in[-1,1]}D_H(q_0^\star,\rho,\eta)$. Hence the uniqueness of the minimizer $\rho^\star$.  
Combining all the above results shows the uniqueness of $q_0^\star, \eta^\star$ and $\rho^\star$. With the uniqueness of these parameters at hand, we will now proceed to the proof of the convergence result. Towards this goal, it suffices to prove  the following convergences for any $\epsilon>0$ and $k$ sufficiently large,
\begin{align}
& \mathbb{P}\left[\lim_{m\to\infty} \phi_{k,k,m}\leq(q_0^\star)^2-\epsilon,i.o\right]=0,  \label{eq:lower_bound_H} \\
&  \mathbb{P}\left[\lim_{m\to\infty} \phi_{k,k,m}\geq(q_0^\star)^2+\epsilon,i.o\right]=0. \label{eq:upper_bound_H}
\end{align}
that establish respectively a high-probability lower bound and high-probability upper bound on $\lim_{m\to\infty} \phi_{k,k,m}$. 
Assume that \eqref{eq:lower_bound_H} and \eqref{eq:upper_bound_H} hold true. 
We will prove that they translate into $\mathbb{P}\left[\Phi^{(n)}\leq (q_0^\star)^2-\epsilon,i.o\right]=0$ and $\mathbb{P}\left[\Phi^{(n)}\geq (q_0^\star)^2+\epsilon,i.o\right]=0$, the combination of both of which leads to $\lim_{n\to\infty}\Phi^{(n)}\to (q_0^\star)^2$ almost surely. 

Let us start by proving $\mathbb{P}\left[\Phi^{(n)}\leq (q_0^\star)^2-\epsilon,i.o\right]=0$.
For $\epsilon>0$ sufficiently small, there exists $k\in\mathbb{N}$ such that $\Phi^{(n)} \geq \Phi_{k,k}^{(n)}-\frac{\epsilon}{2}$. Hence,
\begin{align*}
	&\mathbb{P}\left[\Phi^{(n)}\leq (q_0^\star)^2-\epsilon\right] \leq \mathbb{P}\left[\cup_{k=1}^\infty \left\{\Phi_{k,k}^{(n)}\leq (q_0^\star)^2-\frac{\epsilon}{2}\right\}\right]\\
&\leq \mathbb{P}\left[\cup_{k=1}^\infty \cap_{m=1}^\infty \left\{\Phi_{k,k,m}^{(n)}\leq (q_0^\star)^2-\frac{\epsilon}{2}\right\}\right]\\
&\stackrel{(a)}{=}\lim_{k\to\infty}\lim_{m\to\infty} \mathbb{P}\left[\left\{\Phi_{k,k,m}^{(n)}\leq (q_0^\star)^2-\frac{\epsilon}{2}\right\}\right]
\end{align*}
where $(a)$ follows from the fact that the sequence of events $\left\{\cap_{m=1}^\infty \Phi_{k,k,m}^{(n)}\leq (q_0^\star)^2-\frac{\epsilon}{2}\right\}_{k\in\mathbb{N}^\star}$ forms an increasing sequence of events, while $\left\{\Phi_{k,k,m}^{(n)}\leq  (q_0^\star)^2-\frac{\epsilon}{2}\right\}_{m\in\mathbb{N}^\star}$ forms a decreasing sequence of events. 
Using the CGMT Theorem, we have:
$$
\mathbb{P}\left[\left\{\Phi_{k,k,m}^{(n)}\leq (q_0^\star)^2-\frac{\epsilon}{2}\right\}\right]\leq 2\mathbb{P}\left[\left\{\phi_{k,k,m}^{(n)}\leq (q_0^\star)^2-\frac{\epsilon}{2}\right\}\right]
$$
Hence
\begin{align}
	&\lim_{k\to\infty}\lim_{m\to\infty} \mathbb{P}\left[\left\{\Phi_{k,k,m}^{(n)}\leq (q_0^\star)^2-\frac{\epsilon}{2}\right\}\right]\nonumber\\
	&\leq\lim_{k\to\infty}\lim_{m\to\infty} \mathbb{P}\left[\left\{\phi_{k,k,m}^{(n)}\leq (q_0^\star)^2-\frac{\epsilon}{2}\right\}\right]\nonumber\\
	&= \mathbb{P}\left[\cup_{k=1}^\infty \cap_{m=1}^\infty \left\{\phi_{k,k,m}^{(n)}\leq (q_0^\star)^2-\frac{\epsilon}{2}\right\}\right]\nonumber
\end{align}
Let $k_0$ be an integer chosen such that $k_0> q_0^\star$. Hence, 
\begin{align*}
	&\mathbb{P}\left[\cup_{k=1}^\infty \cap_{m=1}^\infty \left\{\phi_{k,k,m}^{(n)}\leq (q_0^\star)^2-\frac{\epsilon}{2}\right\}\right]\nonumber\\
	&\leq \mathbb{P}\left[\cup_{k=k_0}^{\infty}\cap_{m=1}^\infty \left\{\phi_{k,k,m}^{(n)}\leq (q_0^\star)^2-\frac{\epsilon}{2}\right\}\right]\nonumber\\
	&=\mathbb{P}\left[\cup_{k=k_0}^\infty\forall m, \phi_{k,k,m}^{(n)}\leq(q_0^\star)^2-\frac{\epsilon}{2} \right]\nonumber\\
&\leq\mathbb{P}\left[\cup_{k=k_0}^\infty\lim_{m\to\infty} \phi_{k,k,m}^{(n)}\leq(q_0^\star)^2-\frac{\epsilon}{2} \right]
\end{align*}
It entails from \eqref{eq:lower_bound_H} that the set $B_n^{(AO)}\triangleq\cup_{k=1}^\infty\cap_{m=1}^\infty  \left\{\phi_{k_0,k_0,m}^{(n)}\leq(q_0^\star)^2-\frac{\epsilon}{2}\right\}$ does not occur infinitely often, or in other words $\mathbb{P}\left[B_n^{(AO)},i.o\right]=0.$ Since $(B_n^{(AO)})$ are independent, from the converse of Borel-Cantelli Lemma, we have:
$$
\sum_{n=1}^\infty\mathbb{P}(B_n^{(AO})<\infty.
$$
Hence, we have $B_n^{(P)}\triangleq\cup_{k=1}^\infty\cap_{m=1}^\infty  \left\{\Phi_{k,k,m}^{(n)}\leq(q_0^\star)^2-\frac{\epsilon}{2}\right\}$ satisfies $\sum_{n=1}^\infty\mathbb{P}\left[B_n^{(P)}\right]<\infty$, implying that the set $\left\{\Phi^{(n)}\leq(q_0^\star)^2-\frac{\epsilon}{2}\right\}$ does not occur infintely often. 

We will now consider proving $\mathbb{P}\left[\Phi^{(n)}\geq (q_0^\star)^2+\epsilon,i.o\right]=0$. From the fact that  for all $k\in\mathbb{N}$,  $\Phi^{(n)}\leq  \Phi_{k,k}^{(n)}$, we thus bound $\mathbb{P}\left[\Phi^{(n)}\geq (q_0^\star)^2+\epsilon\right]$ as:
\begin{align*}
	&\mathbb{P}\left[\Phi^{(n)}\geq (q_0^\star)^2+\epsilon\right]\\
	&\leq \mathbb{P}\left[\cap_{k=1}^\infty\left\{\Phi_{k,k}^{(n)}\geq  (q_0^\star)^2+\epsilon\right\}\right]
\end{align*}
For each $k\in\mathbb{N}^\star$, there exists $m$ such that: 
$$
\Phi_{k,k}^{(n)}\leq \Phi_{k,k,m}^{(n)}-\frac{\epsilon}{2}.
$$
Hence,
\begin{align*}
	\mathbb{P}\left[\Phi^{(n)}\geq (q_0^\star)^2+\epsilon\right]&\leq \mathbb{P}\left[\cap_{k=1}^\infty\cup_{m=1}^\infty\left\{\Phi_{k,k}^{(n)}\geq  (q_0^\star)^2+\epsilon\right\}\right]\\
	&=\lim_{k\to\infty}\lim_{m\to\infty} \mathbb{P}\left[\left\{\Phi_{k,k,m}^{(n)}\geq  (q_0^\star)^2+\frac{\epsilon}{2}\right\}\right]
\end{align*}
Using the CGMT theorem, we have:
\begin{align*}
	&\lim_{k\to\infty}\lim_{m\to\infty} \mathbb{P}\left[\left\{\Phi_{k,k,m}^{(n)}\geq  (q_0^\star)^2+\frac{\epsilon}{2}\right\}\right]\\
	&\leq  2\lim_{k\to\infty}\lim_{m\to\infty} \mathbb{P}\left[\left\{\phi_{k,k,m}^{(n)}\geq  (q_0^\star)^2+\frac{\epsilon}{2}\right\}\right]\\
	&=2\mathbb{P}\left[\cap_{k=1}^\infty\cup_{m=1}^\infty\left\{\phi_{k,k,m}^{(n)}\geq  (q_0^\star)^2+\frac{\epsilon}{2}\right\}\right]
\end{align*}
Let $k_0$ be an integer such that $k_0> q_0^\star$. Hence, 
\begin{align*}
	&\mathbb{P}\left[\cap_{k=1}^\infty\cup_{m=1}^\infty\left\{\phi_{k,k,m}^{(n)}\geq  (q_0^\star)^2+\epsilon\right\}\right]\\
	&\leq \mathbb{P}\left[\cup_{m=1}^\infty\left\{\phi_{k_0,k_0,m}^{(n)}\geq  (q_0^\star)^2+\epsilon\right\}\right]\\
	&\leq \mathbb{P}\left[\lim_{m\to\infty} \phi_{k_0,k_0,m}\geq (q_0^\star)^2+\frac{\epsilon}{2}\right]
\end{align*}
With this at hand, we can in a similar way as before invoke the Converse of Borel-Cantelli Lemma to prove that $\left\{\Phi^{(n)}\geq (q_0^\star)^2+\frac{\epsilon}{2}\right\}$ does not occur infinitely often.

So far, we have thus proven that establishing \eqref{eq:lower_bound_H} and \eqref{eq:upper_bound_H} leads to proving that $\Phi^{(n)}\to (q_0^\star)^2 $ almost surely. 
We will now proceed to the proof of \eqref{eq:lower_bound_H} and \eqref{eq:upper_bound_H}.
From \eqref{eq:lower_bound} and the discussion following it, we have:
\begin{align*}
	& \mathbb{P}\left[\lim_{m\to\infty} \phi_{k,k,m}\leq(q_0^\star)^2-\epsilon\right] \\
	&\leq \mathbb{P}[\min_{\substack{q_U\leq q_0\leq k\\ \hat{D}_H(q_0,\rho,\eta)\leq 0\\ -1\leq \rho \leq 1\\ \eta\in\mathbb{R}} } q_0^2\leq (q_0^\star)^2-\epsilon]\\
&=\mathbb{P}[\min_{{\substack{q_U\leq q_0\leq k\\ -1\leq \rho \leq 1\\{{\scriptscriptstyle\displaystyle\min_{\eta\in\mathbb{R}} \hat{D}_H(q_0,\rho,\eta)\leq 0 }}}}} q_0^2\leq (q_0^\star)^2-\epsilon]
\end{align*}
Function $\eta\mapsto \hat{D}_H(q_0,\rho,\eta)$ is  convex in $\eta$ and conveges pointwise to $D_H(q_0,\rho,\eta)$. Since $\lim_{\eta\to\infty} D_H(q_0,\rho,\eta)=\infty$ and $\lim_{\eta\to-\infty} D_H(q_0,\rho,\eta)=\infty$ , from Lemma 10 in \cite{thrampoulidis-IT}, we have:
\begin{equation}
\min_{\eta\in\mathbb{R}}  \hat{D}_H(q_0,\rho,\eta) \to  \min_{\eta\in\mathbb{R}}{D}_H(q_0,\rho,\eta)
\label{eq:conver}
\end{equation}
Now, Function $(q_0,\rho)\mapsto \min_{\eta\in\mathbb{R}}  \hat{D}_H(q_0,\rho,\eta)$ is convex in its arguments, and converges pointwise from \eqref{eq:conver} to $(q_0,\rho)\mapsto \min_{\eta\in\mathbb{R}}D_H(q_0,\rho,\eta)$. As the pointwise convergence of convex functions implies uniform convergence in compact sets, we have:
$$
\sup_{\substack{-1\leq \rho \leq 1\\ q_U\leq q_0\leq k}}\left|\min_{\eta\in\mathbb{R}}\hat{D}_H(q_0,\rho,\eta) -  \min_{\eta\in\mathbb{R}}{D}_H(q_0,\rho,\eta)\right|\asto 0
$$
For all $\delta$ sufficiently small, we can choose $n$, and $p$ sufficiently large such that for all $q_0\in[q_U,k]$ and $\rho\in[-1,1]$, 
$$
\min_{\eta\in\mathbb{R}}D_H(q_0,\rho,\eta)-\delta \leq \min_{\eta\in\mathbb{R}}\hat{D}_H(q_0,\rho,\eta)\leq \min_{\eta\in\mathbb{R}}D_H(q_0,\rho,\eta)+\delta 
$$
We thus have:
\begin{align*}
	&\mathbb{P}\left[\lim_{m\to\infty} \phi_{k,k,m}\leq(q_0^\star)^2-\epsilon\right] \\
	&\leq\mathbb{P}[\min_{{\substack{q_U\leq q_0\leq k\\ -1\leq \rho \leq 1\\{{\scriptscriptstyle\displaystyle\min_{\eta\in\mathbb{R}} {D}_H(q_0,\rho,\eta)\leq \delta }}}}} q_0^2\leq (q_0^\star)^2-\epsilon]\\
&\stackrel{(a)}{=}\mathbb{P}[\min_{{\substack{q_U\leq q_0\leq k\\{{\scriptscriptstyle\displaystyle\min_{\substack{\eta\in\mathbb{R}\\ -1\leq \rho \leq 1}} {D}_H(q_0,\rho,\eta)\leq \delta }}}}} q_0^2\leq (q_0^\star)^2-\epsilon]
\end{align*}
Before going further, it is noteworthy to mention that the right-hand side event is casted in the form of a determinstic statement that does not involve any random variables. It suffices that to check that for $\delta$ sufficiently small,  this statement is false. This can be easily checked using Lemma \ref{lem:equality} which enables to show that there exists $\delta_0$ such that for all $\delta\leq \delta_0$,
$$
\min_{{\substack{q_U\leq q_0\leq k\\{{\scriptscriptstyle\displaystyle\min_{\substack{-1\leq \rho\leq 1\\\eta\in\mathbb{R}}} {D}_H(q_0,\rho,\eta)\leq \delta }}}}} q_0^2\geq\min_{{\substack{q_U\leq q_0\leq k\\{{\scriptscriptstyle\displaystyle\min_{\substack{-1\leq \rho\leq 1\\\eta\in\mathbb{R}}} {D}_H(q_0,\rho,\eta)\leq 0}}}}} q_0^2 -\frac{\epsilon}{2}$$
This proves that :
$$
 \mathbb{P}\left[\lim_{m\to\infty} \phi_{k,k,m}\leq(q_0^\star)^2-\epsilon,i.o\right]=0.
$$
We will now proceed to the proof of \eqref{eq:upper_bound_H} for $k$ chosen such that $k^2\geq \eta^\star$ and $k\geq q_0^\star$.  To begin with, we recall that:
\begin{align}
	\lim_{m\to\infty}\phi_{k,k,m}^{(n)}&= \lim_{m\to\infty} \min_{\substack{0\leq q_0\leq k,  \  -1\leq \rho\leq 1\\ -kq_0\leq \eta\leq kq_0}} q_0^2+mq_0\hat{D}_H(q_0,\rho,\eta) \\
	&\leq \lim_{m\to\infty} \min_{\substack{0\leq q_0\leq k,  \  -1\leq \rho\leq 1\\ -kq_0\leq \eta\leq kq_0}} q_0^2+mk\hat{D}_H(q_0,\rho,\eta)
\end{align}
Due to the condition \eqref{eq:cond_hard_margin}, we know that the set  $$\left\{(q_0,\rho,\eta)\in([0,k]\times [-1,1]\times (-kq_0,kq_0)) \ |  \ \hat{D}_H(q_0,\rho,\eta)\leq 0\right\}$$ is almost surely non-empty. Hence, with probability $1$, for sufficiently large $n$ and $p$,
$$
\lim_{m\to\infty}\phi_{k,k,m}^{(n)} =\min_{\substack{0\leq q_0\leq k\\ -1\leq \rho\leq 1 \\  \hat{D}_H(q_0,\rho,\eta)\leq 0\\ -kq_0\leq \eta\leq kq_0}} q_0^2
$$
Function $(q_0,\rho,\eta)\mapsto \hat{D}_H(q_0,\rho,\eta)$ is jointly convex in its arguments and converges pointwise to $(q_0,\rho,\eta)\mapsto {D}_H(q_0,\rho,\eta)$. Hence it converges uniformly to  $(q_0,\rho,\eta)\mapsto {D}_H(q_0,\rho,\eta)$ over compact sets. Thus for all $\delta$ sufficiently small, we can select $n$ and $p$ sufficiently large such that for all $0\leq q_0 \leq k$, $-1\leq \rho\leq 1$ and $-kq_0\leq \eta\leq kq_0$, 
$$
D_H(q_0,\rho,\eta)-\delta \leq \hat{D}_H(q_0,\rho,\eta)\leq D_H(q_0,\rho,\eta)+\delta
$$
Using the above inequality, we thus obtain with probability $1$ for sufficiently large $n$ and $p$:
$$
\lim_{m\to\infty}\phi_{k,k,m}^{(n)} \leq\min_{\substack{0\leq q_0\leq k\\ -1\leq \rho\leq 1 \\  {D}_H(q_0,\rho,\eta)\leq -\delta\\ -kq_0\leq \eta\leq kq_0}} q_0^2
$$
Similar to the proof of \eqref{eq:lower_bound_H}, invoking Lemma \ref{lem:equality}, we show that there exists $\delta_0$ such that for all $\delta\leq \delta_0$, 
$$
\lim_{m\to\infty}\phi_{k,k,m}^{(n)} \leq \min_{\substack{0\leq q_0\leq k\\ -1\leq \rho\leq 1 \\  {D}_H(q_0,\rho,\eta)\leq 0\\ -kq_0\leq \eta\leq kq_0}} q_0^2 +\frac{\epsilon}{2}
$$
We can easily check that $(q_0^\star,\rho^\star,\eta^\star)$ are in the constraint set of the above optimization problem. Hence, with probability $1$ for sufficiently large $n$ and $p$:
$$
\lim_{m\to\infty}\phi_{k,k,m}^{(n)}\leq (q_0^\star)^2 +\frac{\epsilon}{2}
$$
Hence, 
$$
\mathbb{P}\left[\lim_{m\to\infty}\phi_{k,k,m}^{(n)}\geq (q_0^\star)^2+\epsilon,i.o.\right]=0.
$$
\subsection{Soft-margin SVM}
\subsubsection{Identification of the primary optimization (PO) and auxiliary optimization (AO) problems }
To begin with, we introduce the Lagrangian associated with the soft-margin problem:
\begin{align}
	&L_{S}({\bf w},\left\{\xi_i\right\}_{i=1}^n,\left\{\lambda_i\right\}_{i=1}^n)={\bf w}^{T}{\bf w}+\frac{\tilde{\tau}}{p} \sum_{i=1}^n\xi_i\nonumber\\
&+\sum_{i=1}^n\lambda_i(1-\xi_i-y_i((y_i\boldsymbol{\mu}+\sigma{\bf z}_i)^{T}{\bf w})+b))
\end{align}
Define $\boldsymbol{\lambda}=\left\{\lambda_i\right\}_{i=1}^n$, $\boldsymbol{\xi}=\left\{\xi_i\right\}_{i=1}^n$ and ${\bf Z}=\left[-y_1{\bf z}_1,\dots,-y_n{\bf z}_n\right]^{T}$. 
We need thus to solve the following problem:
\begin{align*}
	&\min_{\substack{{\bf w}},b}\min_{\boldsymbol{\xi}\geq 0} \max_{\boldsymbol{\lambda}\geq 0}{\bf w}^{T}{\bf w}+\sum_{i=1}^n\xi_i(\frac{\tilde{\tau}}{p}-\lambda_i)+\boldsymbol{\lambda}^{T}\boldsymbol{1}-\boldsymbol{\lambda}^{T}{\bf 1}\boldsymbol{\mu}^{T}{\bf w} \\
	&+\sigma\boldsymbol{\lambda}^{T}{\bf Z}{\bf w}-\boldsymbol{\lambda}^{T}({\bf j}_1-{\bf j}_0)b\\
&=\min_{{\bf w},b}\min_{\boldsymbol{\xi}\geq 0} \max_{\boldsymbol{\lambda}\geq 0} {\bf w}^{T}{\bf w}+\sum_{i=1}^n\xi_i(\frac{\tilde{\tau}}{p}-\lambda_i)+\boldsymbol{\lambda}^{T}\boldsymbol{1}-\boldsymbol{\lambda}^{T}{\bf 1}\boldsymbol{\mu}^{T}{\bf w}\\
	&+\sigma\boldsymbol{\lambda}^{T}{\bf Z}{\bf w}-\boldsymbol{\lambda}^{T}({\bf j}_1-{\bf j}_0)b
\end{align*}
Let $\xi^\star$ be the optimum. Then, from the first order conditions, we have, for all $\boldsymbol{\xi}\geq 0$
$$
\frac{\partial L_S({\bf w},\left\{\xi_i\right\}_{i=1}^n,\left\{\lambda_i\right\}_{i=1}^n)}{\partial \xi_j}(\xi_j-\xi_j^\star)=(\frac{\tilde{\tau}}{p}-\lambda_j)(\xi_j-\xi_j^\star)\geq 0
$$
For this condition to always hold for all $\boldsymbol{\xi}\geq 0$, we need that $(\frac{\tilde{\tau}}{p}-\lambda_j)\geq 0$ and  $(\frac{\tilde{\tau}}{p}-\lambda_j)\xi_j^\star=0$.  
Hence, the problem becomes:
\begin{align*}
	\min_{{\bf w},b}\min_{\boldsymbol{\xi}\geq 0}\max_{0\leq \boldsymbol{\lambda}\leq \frac{\tilde{\tau}}{p}}& {\bf w}^{T}{\bf w} +\boldsymbol{\lambda}^{T}\boldsymbol{1}-\boldsymbol{\lambda}^{T}{\bf 1}\boldsymbol{\mu}^{T}{\bf w}\\
	&+\sigma\boldsymbol{\lambda}^{T}{\bf Z}{\bf w}-\boldsymbol{\lambda}^{T}({\bf j}_1-{\bf j}_0)b
\end{align*}
Let us consider the change of variable $\tilde{\lambda}_i=\sigma\sqrt{p}\lambda_i$. Then, the above problem can be written as:
\begin{align}
	\min_{{\bf w},b} \max_{0\leq \tilde{\boldsymbol{\lambda}}\leq \frac{\tilde{\tau}}{\sqrt{p}}}& {\bf w}^{T}{\bf w} + \frac{1}{\sqrt{p}\sigma}\tilde{\boldsymbol{\lambda}}^{T}\boldsymbol{1}-\frac{1}{\sqrt{p}\sigma}\tilde{\boldsymbol{\lambda}}^{T}{\bf 1}\boldsymbol{\mu}^{T}{\bf w}\nonumber\\
	&+\frac{1}{\sqrt{p}}\tilde{\boldsymbol{\lambda}}^{T}{\bf Z}{\bf w}-\frac{1}{\sqrt{p}\sigma}\tilde{\boldsymbol{\lambda}}^{T}({\bf j}_1-{\bf j}_0)b
\end{align}
We need to prove that we can assume that there exists a constant $C_w$ and $C_b$ such that $\|{\bf w}\|_2\leq C$ and $b<C_b$. From the first order optimality conditions, we have:
$$
{\bf w}^\star=\frac{1}{2}\frac{1}{\sqrt{p}\sigma} \tilde{\boldsymbol{\lambda}}^{T}{\bf 1}\boldsymbol{\mu} -\frac{1}{2}\frac{1}{\sqrt{p}}{\bf Z}\tilde{\boldsymbol{\lambda}}
$$ 
Hence,
$$
\|{\bf w}^\star\| \leq \frac{1}{2\sigma}\|\boldsymbol{\mu}\|\|\tilde{\boldsymbol{\lambda}}^{T}\frac{\bf 1}{\sqrt{p}}\| +  \frac{1}{2}\|\frac{1}{\sqrt{p}}{\bf Z}\tilde{\boldsymbol{\lambda}}\|
$$
As $\|\frac{1}{\sqrt{p}}{\bf Z}\tilde{\boldsymbol{\lambda}}\|\leq \|\frac{1}{\sqrt{p}}{\bf Z}\| \| \tilde{\boldsymbol{\lambda}}\|_2 $ and  $\|\frac{1}{\sqrt{p}}{\bf Z}\|$ is almost surely bounded from results of random matrix theory, we conclude that we can assume without changing the analysis that there exists a constant $C_w$ such that $\|{\bf w}\|\leq C_w$. 
Similarly, from the first order optimality conditions, 
$$
\frac{1}{\sqrt{p}\sigma}{\bf 1}-\frac{1}{\sqrt{p}\sigma}\boldsymbol{\mu}^{T}{\bf w}+\frac{1}{\sqrt{p}}{\bf Zw}-\frac{1}{\sqrt{p}\sigma}({\bf j}_1-{\bf j}_0)b=0.
$$
which gives:
\begin{align}
	|b|&=\frac{\norm{\frac{1}{\sqrt{p}}{\bf Zw}-\frac{1}{\sqrt{p}\sigma}{\bf 1}\boldsymbol{\mu}^{T}{\bf w}+\frac{1}{\sqrt{p}\sigma}{\bf 1}}_2}{\norm{\frac{1}{\sqrt{p}\sigma}({\bf j}_1-{\bf j}_0)}_2}\\
	&=\frac{\sigma\sqrt{p}}{\sqrt{n}} \norm{\frac{1}{\sqrt{p}}{\bf Zw}-\frac{1}{\sqrt{p}\sigma}{\bf 1}\boldsymbol{\mu}^{T}{\bf w}+\frac{1}{\sqrt{p}\sigma}{\bf 1}}_2
\end{align}
Hence, there exists $C_b$ such that $|b|\leq C_b$.
We will thus consider from now on solving the following primary optimization problem:
\begin{align*}
	\Phi_S^{(n)}&\triangleq\min_{\substack{{\bf w}\\ \|{\bf w}\|\leq C_w}}\min_{|b|\leq C_b} \max_{0\leq \tilde{\boldsymbol{\lambda}}\leq \frac{\tilde{\tau}}{\sqrt{p}}} {\bf w}^{T}{\bf w} + \frac{1}{\sqrt{p}\sigma}\tilde{\boldsymbol{\lambda}}^{T}\boldsymbol{1}-\frac{1}{\sqrt{p}\sigma}\tilde{\boldsymbol{\lambda}}^{T}{\bf 1}\boldsymbol{\mu}^{T}{\bf w} \\
	&+\frac{1}{\sqrt{p}}\tilde{\boldsymbol{\lambda}}^{T}{\bf Z}{\bf w}-\frac{1}{\sqrt{p}\sigma}\tilde{\boldsymbol{\lambda}}^{T}({\bf j}_1-{\bf j}_0)b
\end{align*}
It is important to mention that constants $C_b$ and $C_w$ can be set to any finite constants, the values of which can be as large as desired. More details on how these constants need to be set will be provided in the next section. 

Instead of analyzing the optimization problem above, we will analyze a simpler auxiliary optimization problem that is tightly related to the (PO) via the CGMT.
Having identified the (PO), it is easy to write the corresponding (AO) problem as:
\begin{align}
	\phi_{S}^{(n)}&\triangleq\min_{\substack{{\bf w}\\ \|{\bf w}\|\leq C}} \min_{|b|\leq C_b}\max_{0\leq {\bf u}\leq \frac{\tilde{\tau}}{\sqrt{p}}} \frac{1}{\sqrt{p}}\|{\bf w}\|_2{\bf g}^{T}{\bf u} - \frac{1}{\sqrt{p}}\|{\bf u}\|_2 {\bf h}^{T}{\bf w} \\
	&+ {\bf w}^{T}{\bf w} +\frac{1}{\sqrt{p}\sigma}{\bf u}^{T}{\bf 1}-\frac{1}{\sqrt{p}\sigma}{\bf u}^{T}{\bf 1}\boldsymbol{\mu}^{T}{\bf w}-\frac{1}{\sqrt{p}\sigma}{\bf u}^{T}({\bf j}_1-{\bf j}_0)b
\end{align}
\subsubsection{Simplification of the (AO) problem}
In the same way as in the hard-margin SVM, we start by decomposing ${\bf w}$ as
$$
{\bf w}=\alpha_1\frac{\boldsymbol{\mu}}{\|\boldsymbol{\mu}\|} +\alpha_2 {\bf w}_{\perp}
$$
where ${\bf w}_{\perp}$ is orthogonal to $\boldsymbol{\mu}$ and $\|{\bf w}_{\perp}\|=1$.
With this decomposition at hand, the (AO) problem becomes:
\begin{align}
	\phi_{S}^{(n)}&=\min_{\substack{{\bf w}=\alpha_1\frac{\boldsymbol{\mu}}{\norm{\boldsymbol{\mu}}_2}+\alpha_2 {\bf w}_{\perp}\\ \alpha_1^2+\alpha_2^2\leq C_w^2}}\min_{|b|\leq C_b}\max_{0\leq {\bf u}\leq \frac{\tilde{\tau}}{\sqrt{p}}} \frac{1}{\sqrt{p}}\sqrt{\alpha_1^2+\alpha_2^2}{\bf g}^{T}{\bf u}\nonumber\\
	&-\frac{1}{\sqrt{p}}\norm{{\bf u}}_2\left(\alpha_1 \frac{{\bf h}^{T}\boldsymbol{\mu}}{\norm{\boldsymbol{\mu}}}+\alpha_2 {\bf h}^{T}{\bf w}_{\perp}\right)+{\bf w}^{T}{\bf w}\nonumber \\
	&+\frac{1}{\sqrt{p}\sigma}{\bf u}^{T}{\bf 1}-\frac{1}{\sqrt{p}\sigma}{\bf u}^{T}{\bf 1}\boldsymbol{\mu}^{T}{\bf w}-\frac{1}{\sqrt{p}\sigma}{\bf u}^{T}({\bf j}_1-{\bf j}_0)b\nonumber
\end{align}
Using lemma~\ref{lem:prop}, we thus have:
\begin{align}
	\phi_{S}^{(n)}&=\min_{ \alpha_1^2+\alpha_2^2\leq C_w^2}\min_{|b|\leq C_b}\max_{0\leq {\bf u}\leq \frac{\tilde{\tau}}{\sqrt{p}}} \frac{1}{\sqrt{p}}\sqrt{\alpha_1^2+\alpha_2^2}{\bf g}^{T}{\bf u}\nonumber\\
	&-\frac{1}{\sqrt{p}}\norm{{\bf u}}_2\left(\alpha_1 \frac{{\bf h}^{T}\boldsymbol{\mu}}{\norm{\boldsymbol{\mu}}}+|\alpha_2| \norm{{\bf h}_{\perp}}_2\right)+q_0^2\nonumber \\
	&+\frac{1}{\sqrt{p}\sigma}{\bf u}^{T}{\bf 1}-\frac{1}{\sqrt{p}\sigma}{\bf u}^{T}{\bf 1}\alpha_1\norm{\boldsymbol{\mu}}_2-\frac{1}{\sqrt{p}\sigma}{\bf u}^{T}({\bf j}_1-{\bf j}_0)b\nonumber
\end{align}
Performing the change of variable $q_0=\sqrt{\alpha_1^2+\alpha_2^2}$, we thus obtain:
\begin{align*}
	\tilde{\phi}_{S}^{(n)}&=\min_{\substack{0\leq q_0\leq C_w\\ |\alpha_1|\leq q_0}}\min_{|b|\leq C_b}\max_{0\leq {\bf u}\leq \frac{\tilde{\tau}}{\sqrt{p}}} \frac{1}{\sqrt{p}}q_0{\bf g}^{T}{\bf u}\\
	&-\frac{1}{\sqrt{p}}\norm{{\bf u}}_2\left(\alpha_1 \frac{{\bf h}^{T}\boldsymbol{\mu}}{\norm{\boldsymbol{\mu}}}+\sqrt{q_0^2-\alpha_1^2} \norm{{\bf h}_{\perp}}_2\right)+q_0^2\nonumber \\
	&+\frac{1}{\sqrt{p}\sigma}{\bf u}^{T}{\bf 1}-\frac{1}{\sqrt{p}\sigma}{\bf u}^{T}{\bf 1}\alpha_1\norm{\boldsymbol{\mu}}_2-\frac{1}{\sqrt{p}\sigma}{\bf u}^{T}({\bf j}_1-{\bf j}_0)b\nonumber
\end{align*}
\subsubsection{Asymptotic behavior of the (AO) problem}
Consider now the function $\tilde{\phi}_{S}^{(n)}$ given by:
\begin{align*}
	\tilde{\phi}_{S}^{(n)}&=\min_{\substack{0\leq q_0\leq C_w\\ |\alpha_1|\leq q_0}}\min_{|b|\leq C_b}\max_{0\leq {\bf u}\leq \frac{\tilde{\tau}}{\sqrt{p}}} \frac{1}{\sqrt{p}}q_0{\bf g}^{T}{\bf u}-\norm{{\bf u}}_2\sqrt{q_0^2-\alpha_1^2} +q_0^2\\
	&+\frac{1}{\sqrt{p}\sigma}{\bf u}^{T}{\bf 1}-\frac{1}{\sqrt{p}\sigma}{\bf u}^{T}{\bf 1}\alpha_1\norm{\boldsymbol{\mu}}_2-\frac{1}{\sqrt{p}\sigma}{\bf u}^{T}({\bf j}_1-{\bf j}_0)b\nonumber\\
&\stackrel{(a)}{=}\inf_{0< q_0\leq C_w}\min_{ 0\leq\alpha_1\leq q_0}\min_{|b|\leq C_b}\max_{0\leq {\bf u}\leq \frac{\tilde{\tau}}{\sqrt{p}}} \frac{1}{\sqrt{p}}q_0{\bf g}^{T}{\bf u}-\norm{{\bf u}}_2\sqrt{q_0^2-\alpha_1^2} \\
	&+q_0^2+\frac{1}{\sqrt{p}\sigma}{\bf u}^{T}{\bf 1}-\frac{1}{\sqrt{p}\sigma}{\bf u}^{T}{\bf 1}\alpha_1\norm{\boldsymbol{\mu}}_2-\frac{1}{\sqrt{p}\sigma}{\bf u}^{T}({\bf j}_1-{\bf j}_0)b\nonumber
\end{align*}
where (a) follows from Lemma \ref{lem:prop}. We note also that in $(a)$, we replaced $\min$ by $\inf$, since in (a), the minimization is now on the open set $(0,C_w]$. This is allowed due to continuity with respect to $q_0$ which can be proven using the Maximum Theorem {\cite[Theorem~9.17]{Sundaram}}. 
Obviously $\phi_{S}^{(n)}-\tilde{\phi}_{S}^{(n)}\to 0$ almost surely since $\alpha_1 \frac{{\bf h}^{T}\boldsymbol{\mu}}{\norm{\boldsymbol{\mu}}}$ and $\sqrt{q_0^2-\alpha_1^2}(\norm{{\bf h}_{\perp}}-1)$ converges uniformly to zero on the set over which the optimization is performed. With this approximation performed, we removed the randomnes induced by vector ${\bf h}$. The randomness of $\tilde{\phi}_{S}^{(n)}$ is now only due to vector ${\bf g}$. We are now ready to study the asymptotic behavior of $\tilde{\phi}_{S}^{(n)}$. 
For the moment, we assume given that the uniqueness of the solution to \eqref{min_max_soft_margin}, the proof of which is given in section \ref{sec:uniqueness}. We assume that $C_w$ and $C_b$ are taken such that $C_w> q_0^\star$ and $C_b> \frac{\eta^\star}{\sigma q_0^\star}$ and proceed with the following steps. First, we prove that for any $\epsilon>0$ and sufficiently large $n$ and $p$
\begin{equation}
\tilde{\phi}_S^{(n)}\geq \sup_{\xi> 0}\mathcal{D}_{S,\tilde{\tau}}(q_0^\star,\rho^\star,\eta^\star,\xi)-\epsilon=\inf_{\substack{q_0> 0 \\ \eta\in\mathbb{R}\\ 0}}\min_{0\leq \rho\leq 1}\sup_{\xi> 0}\mathcal{D}_{S,\tilde{\tau}}(q_0,\rho,\eta,\xi)-\epsilon
\label{eq:lower_bound_soft}
\end{equation}
where $q_0^\star$, $\eta^\star$ and $\rho^\star$ the unique solutions to \eqref{min_max_soft_margin}. In a similar way, we prove that:
\begin{equation}
\tilde{\phi}_S^{(n)}\leq \inf_{\substack{q_0> 0 \\ \eta\in\mathbb{R}\\ 0\leq \rho\leq 1}}\sup_{\xi>0}\mathcal{D}_{S,\tilde{\tau}}(q_0,\rho,\eta,\xi)+\epsilon
\label{eq:upper_bound_soft}
\end{equation}
Combining \eqref{eq:lower_bound_soft} and \eqref{eq:upper_bound_soft} yields:
$$
\tilde{\phi}_S^{(n)}-\inf_{\substack{q_0> 0 \\ \eta\in\mathbb{R}\\ 0\leq \rho\leq 1}}\sup_{\xi> 0}\mathcal{D}_{S,\tilde{\tau}}(q_0,\rho,\eta,\xi)\asto 0. 
$$
We start by lower-bounding $\tilde{\phi}_{S}^{(n)}$ as:
\begin{align*}
\tilde{\phi}_{S}^{(n)}&\geq \inf_{\substack{q_0> 0\\ 0\leq \alpha_1\leq q_0}}\inf_{b\in\mathbb{R}}\max_{0\leq {\bf u}\leq \frac{\tilde{\tau}}{\sqrt{p}}} \frac{1}{\sqrt{p}}q_0{\bf g}^{T}{\bf u}-\norm{{\bf u}}_2\sqrt{q_0^2-\alpha_1^2} +q_0^2\\
	&+\frac{1}{\sqrt{p}\sigma}{\bf u}^{T}{\bf 1}-\frac{1}{\sqrt{p}\sigma}{\bf u}^{T}{\bf 1}\alpha_1\norm{\boldsymbol{\mu}}_2-\frac{1}{\sqrt{p}\sigma}{\bf u}^{T}({\bf j}_1-{\bf j}_0)b
\end{align*}
Performing the change of variable $\tilde{\bf u}=q_0{\bf u}$, $\rho=\frac{\alpha_1}{q_0}$ and $\eta=\frac{b}{\sigma q_0}$, we thus obtain:
\begin{align*}
\tilde{\phi}_{S}^{(n)}&\geq\inf_{\substack{q_0> 0}}\min_{0\leq \rho\leq 1}\inf_{\eta\in\mathbb{R}}\max_{0\leq \tilde{\bf u}\leq q_0\frac{\tilde{\tau}}{\sqrt{p}}}\frac{1}{\sqrt{p}} {\bf g}^{T}\tilde{\bf u}- \norm{\tilde{\bf u}}_2 \sqrt{1-\rho^2} \\
	&+\frac{1}{\sqrt{p}\sigma q_0}\tilde{\bf u}^{T}{\bf 1}-\frac{1}{\sqrt{p}\sigma}\tilde{\bf u}^{T}{\bf 1}\rho \norm{\boldsymbol{\mu}}_2 -\frac{1}{\sqrt{p}}\tilde{\bf u}^{T}({\bf j}_1-{\bf j}_0)\eta
\end{align*}
Using Lemma \ref{lem:opt}, we perform  optimization over $\tilde{\bf u}$ thus yielding:
 \begin{align}
&\tilde{\phi}_{S}^{(n)}\geq \inf_{q_0> 0}\min_{0\leq \rho\leq 1}\min_{\eta\in\mathbb{R}}\sup_{\xi> 0} \hat{D}_{S,\tilde{\tau}}(q_0,\rho,\eta,\xi)
\label{eq:phi_lower}
\end{align}
where
\begin{equation}
\hat{D}_{S,\tilde{\tau}}(q_0,\rho,\eta,\xi)\!\!=\begin{cases*} q_0^2+q_0\hat{R}_{\tilde{\tau}}(\frac{1}{q_0},\rho,\eta,\xi), &\ \ \ \text{If} \ \ $0\leq\rho< 1$\\
q_0^2+q_0\tilde{R}_{\tilde{\tau}}(\frac{1}{q_0},\eta),&\  \  \  \text{If} \  $\rho\!=\!1$
\end{cases*}
\end{equation}
with  $\hat{R}_{\tilde{\tau}}(x,\rho,\eta,\xi)$ and $\tilde{R}_{\tilde{\tau}}(x,\eta)$ given by:
\begin{align}
	&\hat{R}_{\tilde{\tau}}(x,\rho,\eta,\xi)=\left\{\displaystyle\frac{1}{p}\sum_{i\in\mathcal{C}_1} \left(g_i\tilde{\tau}+\frac{\tilde{\tau}x}{\sigma}-\tilde{\tau}\rho\norm{\boldsymbol{\mu}}_2-\tilde{\tau}\eta-\frac{\tilde{\tau}^2}{2\xi}\right)_{+}\right. \nonumber\\
	&\left.+\frac{1}{p}\sum_{i\in\mathcal{C}_0} \left(g_i\tilde{\tau}+\frac{\tilde{\tau}x}{\sigma}-\frac{\tilde{\tau}}{\sigma}\rho\norm{\boldsymbol{\mu}}+\tilde{\tau}\eta-\frac{\tilde{\tau}^2}{2\xi}\right)_{+}\right.\nonumber\\
&+\frac{\xi}{2}\frac{1}{p}\sum_{i\in\mathcal{C}_1}\left(g_i+\frac{x}{\sigma }-\frac{1}{\sigma}\rho\norm{\boldsymbol{\mu}}_2-\eta\right)^2{\bf 1}_{\left\{0\leq (g_i+\frac{x}{\sigma }-\frac{1}{\sigma}\rho\norm{\boldsymbol{\mu}}_2-\eta)\leq \frac{\tilde{\tau}}{\xi}\right\}}\nonumber\\
&\left.+\frac{\xi}{2}\frac{1}{p}\sum_{i\in\mathcal{C}_0}\left(g_i+\frac{x}{\sigma }-\frac{1}{\sigma}\rho\norm{\boldsymbol{\mu}}_2+\eta\right)^2{\bf 1}_{\left\{0\leq (g_i+\frac{x}{\sigma }-\frac{1}{\sigma}\rho\norm{\boldsymbol{\mu}}_2+\eta)\leq \frac{\tilde{\tau}}{\xi}\right\}}\right. \nonumber\\
	&-\frac{\xi}{2}(1-\rho^2)\Big\}
\end{align}
and
\begin{align*}
	\tilde{R}_{\tilde{\tau}}(x,\eta)&=\left\{\displaystyle\frac{1}{p}\sum_{i\in\mathcal{C}_1} \left(g_i\tilde{\tau}+\frac{\tilde{\tau}x}{\sigma }-\frac{\tilde{\tau}}{\sigma}\norm{\boldsymbol{\mu}}_2-\tilde{\tau}\eta\right)_{+}\right.\\
	&\left.+\frac{1}{p}\sum_{i\in\mathcal{C}_0} \left(g_i\tilde{\tau}+\frac{\tilde{\tau}x}{\sigma }-\frac{\tilde{\tau}}{\sigma}\norm{\boldsymbol{\mu}}_2+\tilde{\tau}\eta\right)_{+}\right\},  
\end{align*}
The point-wise convergence of $\hat{D}_{S,\tilde{\tau}}(q_0,\rho,\eta,\xi)$ to $D_{S,\tilde{\tau}} (q_0,\rho,\eta,\xi)$ when $0\leq \rho<1$ can be shown using the strong law of large numbers. To transfer this convergence to the minimax of $\hat{D}_{S,\tilde{\tau}}(q_0,\rho,\eta,\xi)$, we will repeatedly invoke Lemma 10 in \cite{thrampoulidis-IT}. 
\begin{enumerate}
\item Proof of $\sup_{\xi> 0} \hat{D}_{S,\tilde{\tau}}(q_0,\rho,\eta,\xi)\to \sup_{\xi> 0}{D}_{S,\tilde{\tau}}(q_0,\rho,\eta,\xi)$.

Consider now function $\kappa:(x,\rho,\eta)\mapsto \max_{0\leq \tilde{\bf u}\leq \frac{\tilde{\tau}}{\sqrt{p}}} {\bf g}^{T}\tilde{\bf u}-\norm{\tilde{\bf u}}_2\sqrt{1-\rho^2} +\frac{x}{\sqrt{p}\sigma} \tilde{\bf u}^{T}{\bf 1} -\frac{1}{\sqrt{p}}\tilde{\bf u}^{T}{\bf 1}\rho\norm{\boldsymbol{\mu}}_2$ defined on $\mathbb{R}_{+}\times [0,1]\times \mathbb{R}$. 
It coincides thus with the supremum over an infinitely set of jointly convex functions in $(x,\rho,\eta)$ indexed by ${\tilde{\bf u}}$ and as such is jointly convex with respect to its arguments. It is easy to see using Lemma \ref{lem:opt} that:
\begin{equation}
\kappa(x,\rho,\eta)=\begin{cases*}
\sup_{\xi> 0}\hat{R}_{\tilde{\tau}}(x,\rho,\eta,\xi), & \text{If} $\rho\neq 1$\\
\tilde{R}_{\tilde{\tau}}(x,\eta), & \text{If} $\rho=1$.
\end{cases*}
\end{equation}
From Lemma \ref{lem:opt}, for a given $(x,\rho,\eta)$ in $\mathbb{R}_{+}\times [0,1)\times \mathbb{R}$,  function $\xi\mapsto \hat{R}_{\tilde{\tau}}(x,\rho,\eta,\xi)$ is thus concave.
For $0\leq \rho<1$, $(x,\rho,\eta,\xi)\mapsto \hat{R}_{\tilde{\tau}}(x,\rho,\eta,\xi)$ converges pointwise to $R_{\tilde{\tau}}(x,\rho,\eta,\xi)$ given by: 
\begin{align*}
&R_{\tilde{\tau}}(x,\rho,\eta,\xi)=\frac{\pi_1\delta}{\sqrt{2\pi}}\int_{\frac{\tilde{\tau}}{\xi}+\eta+\frac{\rho\mu}{\sigma}-\frac{x}{\sigma}}^{\infty} \tilde{\tau}\left(t+\frac{x}{\sigma}-\frac{\rho\mu}{\sigma}-\eta-\frac{\tilde{\tau}}{2\xi}\right)Dt\\
&+ \frac{\pi_0\delta}{\sqrt{2\pi}}\int_{\frac{\tilde{\tau}}{\xi}-\eta+\frac{\rho\mu}{\sigma}-\frac{x}{\sigma}}^{\infty} \tilde{\tau}\left(t+\frac{x}{\sigma}-\frac{\rho\mu}{\sigma}+\eta-\frac{\tilde{\tau}}{2\xi}\right)Dt-\frac{\xi}{2}(1-\rho^2)\\
&+\frac{\xi\pi_1\delta}{2\sqrt{2\pi}}\int_{-\frac{x}{\sigma}+\frac{\rho\mu}{\sigma}+\eta}^{-\frac{x}{\sigma}+\frac{\rho\mu}{\sigma}+\eta+\frac{\tilde{\tau}}{\xi}} \left(t+\frac{x}{\sigma}-\frac{\rho\mu}{\sigma}-\eta\right)^2Dt\\
&+\frac{\xi\pi_0\delta}{2\sqrt{2\pi}}\int_{-\frac{x}{\sigma}+\frac{\rho\mu}{\sigma}-\eta}^{-\frac{x}{\sigma}+\frac{\rho\mu}{\sigma}-\eta+\frac{\tilde{\tau}}{\xi}} \left(t+\frac{x}{\sigma}-\frac{\rho\mu}{\sigma}+\eta\right)^2Dt
\end{align*}
When $0\leq \rho< 1$,   $\lim_{\xi\to\infty}  R_{\tilde{\tau}}(x,\rho,\eta,\xi)=-\infty$. It entails thus  from Lemma 10 in \cite{thrampoulidis-IT}\footnote{Lemma 10 in \cite{thrampoulidis-IT} holds for convergence in probability but the generalization to almost sure convergence is straightforward.} that for  $0\leq \rho<1$,
$$
\sup_{\xi> 0} \hat{R}_{\tilde{\tau}}(x,\rho,\eta,\xi)\asto \sup_{\xi> 0} R(x,\rho,\eta,\xi), 
 $$
 From the maximum theorem \cite[Theorem~9.17]{Sundaram},  function $\rho\mapsto\kappa(x,\rho,\eta)$ is continuous. Therefore, $$\lim_{\rho\uparrow 1} \sup_{\xi> 0}\hat{R}_{\tilde{\tau}}(x,\rho,\eta,\xi) = \tilde{R}_{\tilde{\tau}}(x,\eta).$$ Moreover, $\tilde{R}_{\tilde{\tau}}(x,\eta)$ converges pointwise to $\overline{R}_{\tilde{\tau}}(x,\eta)$ given by:
		\begin{align}
			\overline{R}_{\tilde{\tau}}(x,\eta)&=\frac{\pi_1\delta}{\sqrt{2\pi}}\int_{\eta+\frac{\mu}{\sigma}-\frac{x}{\sigma}}^\infty \tilde{\tau}\left(t+\frac{x}{\sigma}-\frac{\mu}{\sigma}-\eta\right)Dt\\
			&+\frac{\pi_0\delta}{\sqrt{2\pi}}\int_{-\eta+\frac{\mu}{\sigma}-\frac{x}{\sigma}}^\infty \tilde{\tau}\left(t+\frac{x}{\sigma}-\frac{\mu}{\sigma}+\eta\right)\exp\left(-\frac{t^2}{2}\right)dt
		\end{align}
Now, for a given pair $(x,\eta)$ in $\mathbb{R}_{+}\times \mathbb{R}$, function $\rho \mapsto \sup_{\xi> 0}\hat{R}_{\tilde{\tau}}(x,\rho,\eta,\xi)$ is convex in $\rho\in[0,1)$ and converges to $\sup_{\xi > 0} R_{\tilde{\tau}}(x,\rho,\eta,\xi)$. The convergence is thus uniform on the interval $\rho\in\left[0,1\right)$, and as such:
$$
\lim_{\rho\uparrow 1} \sup_{\xi> 0} \hat{R}_{\tilde{\tau}}(x,\rho,\eta,\xi)\asto \lim_{\rho\uparrow 1} \sup_{\xi> 0} R_{\tilde{\tau}}(x,\rho,\eta,\xi)=\overline{R}_{\tilde{\tau}}(x,\eta)
$$
For $\rho=1$, function $R_{\tilde{\tau}}(x,1,\eta,\xi)$ is well defined. It is easy to see that it is increasing with respect to $\xi$, and as such $\sup_{\xi> 0} R_{\tilde{\tau}}(x,1,\eta,\xi)=\lim_{\xi\to\infty} R_{\tilde{\tau}}(x,1,\eta,\xi)$, which coincides as expected with $\overline{R}_{\tilde{\tau}}(x,\eta)$.
As a consequence, we have for all $\rho\in[0,1]$, $q_0>0$ and $\eta\in\mathbb{R}$
$$
\sup_{\xi> 0}\hat{D}_{S,\tilde{\tau}}(q_0,\rho,\eta,\xi)\asto \sup_{\xi> 0} D_{S,\tilde{\tau}}(q_0,\rho,\eta,\xi)$$
\item Proof of $\inf_{\eta\in\mathbb{R}}\sup_{\xi> 0} \hat{D}_{S,\tilde{\tau}}(x,\rho,\eta,\xi)\asto \inf_{\eta\in\mathbb{R}}\sup_{\xi> 0} {D}_{S,\tilde{\tau}}(x,\rho,\eta,\xi)$

Consider for a given $(x,\rho)\in \mathbb{R}_{>0}\times [0,1]$, function $\eta\mapsto \hat{R}_{\tilde{\tau}}^{x,\rho}(\eta)$ defined as:\begin{equation}\hat{R}_{\tilde{\tau}}^{x,\rho}(\eta)\mapsto \begin{cases}\sup_{\xi> 0} \hat{R}_{\tilde{\tau}}(x,\rho,\eta,\xi),& \text{If} \ \ \rho\neq 1 \\ 
\tilde{R}_{\tilde{\tau}}(x,\eta), &\text{If} \ \ \rho=1
\end{cases}
\end{equation}
and function $\eta\mapsto  {R}_{\tilde{\tau}}^{x,\rho}(\eta)$ given by:
\begin{equation}{R}_{\tilde{\tau}}^{x,\rho}(\eta)=\sup_{\xi> 0}R_{\tilde{\tau}}(x,\rho,\eta,\xi)
\end{equation}
Function $\eta\mapsto \hat{R}_{\tilde{\tau}}^{x,\rho}(\eta)$ converges pointwise to $\eta\mapsto{R}_{\tilde{\tau}}^{x,\rho}(\eta)$. In order to transfer this convergence into that of the minimum over $\eta\in\mathbb{R}$, we need to check that $\lim_{|\eta|\to\infty}  {R}_{\tilde{\tau}}^{x,\rho}(\eta)=\infty$. To this end, it suffices to observe that for any $\xi>0$, $\lim_{|\eta|\to\infty} R_{\tilde{\tau}}(x,\rho,\eta,\xi)=\infty$$. $
Hence,
\begin{align*}
\lim_{|\eta|\to\infty}	\sup_{\xi> 0}R_{\tilde{\tau}}(x,\rho,\eta,\xi)\to\infty. 
\end{align*}
With this result at hand, we can apply Lemma 10 in \cite{thrampoulidis-IT} to ensure that:
$$
\inf_{\eta\in\mathbb{R}} \hat{R}^{x,\rho}(\eta)\asto \inf_{\eta\in\mathbb{R}} R^{x,\rho}(\eta)
$$
As a consequence for any $q_0>0$ and $\rho\in[0,1]$, 
$$
\inf_{\eta\in\mathbb{R}}\sup_{\xi> 0} \hat{D}_{S,\tilde{\tau}}(q_0,\rho,\eta,\xi)\asto \inf_{\eta\in\mathbb{R}}\sup_{\xi> 0} {D}_{S,\tilde{\tau}}(q_0,\rho,\eta,\xi)
$$
\item Proof of $\min_{0\leq \rho\leq 1}\inf_{\eta\in\mathbb{R}}\sup_{\xi>0}\hat{D}_{S,\tilde{\tau}}(q_0,\rho,\eta,\xi)\asto \min_{0\leq \rho\leq 1}\inf_{\eta\in\mathbb{R}}\sup_{\xi>0} D_{S,\tilde{\tau}}(q_0,\rho,\eta,\xi)$.
Based on Lemma~\ref{lem:convexity}, function $\rho\mapsto \inf_{\eta\in\mathbb{R}} \hat{R}^{x,\rho}(\eta)$ is convex in $\rho$. It converges pointwise to $\rho \mapsto  \inf_{\eta\in\mathbb{R}} {R}^{x,\rho}(\eta)$. The convergence is uniform on $\rho\in[0,1]$, and as such:
$$
\min_{0\leq \rho\leq 1} \inf_{\eta\in\mathbb{R}} \hat{R}^{x,\rho}(\eta) \asto \min_{0\leq \rho\leq 1} \inf_{\eta\in\mathbb{R}}{R}^{x,\rho}(\eta).
$$
\item  Proof of $\inf_{q_0>0}\min_{0\leq \rho\leq 1}\inf_{\eta\in\mathbb{R}}\sup_{\xi>0}\hat{D}_{S,\tilde{\tau}}(q_0,\rho,\eta,\xi)\asto \inf_{q_0>0}\min_{0\leq \rho\leq 1}\inf_{\eta\in\mathbb{R}}\sup_{\xi>0}D_{S,\tilde{\tau}}(q_0,\rho,\eta,\xi)$.

Similarly, based on Lemma~\ref{lem:convexity}, $x\mapsto \min_{0\leq \rho\leq 1} \inf_{\eta\in\mathbb{R}} \hat{R}^{x,\rho}(\eta)$ is convex. Hence, its perspective function $q_0\mapsto q_0   \min_{0\leq \rho\leq 1} \inf_{\eta\in\mathbb{R}} \hat{R}^{\frac{x}{q_0},\rho}(\eta)$ is convex for any $x>0$. Particularly, $q_0\mapsto q_0   \min_{0\leq \rho\leq 1} \inf_{\eta\in\mathbb{R}} \hat{R}^{\frac{1}{q_0},\rho}(\eta)$ is convex. Noting that $  \min_{0\leq \rho\leq 1} \inf_{\eta\in\mathbb{R}}\sup_{\xi>0} \hat{D}(q_0,\rho,\eta,\xi)= q_0^2+q_0\min_{0\leq \rho\leq 1} \inf_{\eta\in\mathbb{R}}\hat{R}^{\frac{1}{q_0},\rho}(\eta)$, we conclude that $q_0\mapsto  \min_{0\leq \rho\leq 1} \inf_{\eta\in\mathbb{R}}\sup_{\xi>0} \hat{D}(q_0,\rho,\eta,\xi)$ is convex. As it tends to infinity as $q_0\to\infty$, we thus have from lemma 10 in \cite{thrampoulidis-IT} that 
\begin{align}
	&\inf_{q_0>0}\min_{0\leq \rho\leq 1}\inf_{\eta\in\mathbb{R}}\sup_{\xi>0}\hat{D}_{S,\tilde{\tau}}(q_0,\rho,\eta,\xi)\nonumber\\
	&\asto \inf_{q_0>0}\min_{0\leq \rho\leq 1}\inf_{\eta\in\mathbb{R}}\sup_{\xi>0}D_{S,\tilde{\tau}}(q_0,\rho,\eta,\xi)
\label{eq:usef}
\end{align}
\end{enumerate}
Recall now that:
$$
\tilde{\phi}_{S}^{(n)}\geq \inf_{q_0>0}\min_{0\leq \rho\leq 1}\inf_{\eta\in\mathbb{R}}\sup_{\xi>0}\hat{D}_{S,\tilde{\tau}}(q_0,\rho,\eta,\xi)
$$
The proof of \eqref{eq:lower_bound_soft} follows directly from \eqref{eq:usef}.
To prove \eqref{eq:upper_bound_soft}, it suffices to observe that for $C_w>q_0^\star$ and $C_b >\frac{\eta^\star}{\sigma q_0^\star}$, 
$$
\tilde{\phi}_{S}^{(n)} \leq \sup_{\xi>0}\hat{D}_{S,\tilde{\tau}}(q_0^\star,\rho^\star,\eta^\star,\xi)
$$
Noting that \begin{align}&\sup_{\xi>0}\hat{D}_{S,\tilde{\tau}}(q_0^\star,\rho^\star,\eta^\star,\xi)\\
&\asto \sup_{\xi>0}{D}_{S,\tilde{\tau}}(q_0^\star,\rho^\star,\eta^\star,\xi)= \inf_{q_0>0}\min_{0\leq \rho\leq 1}\inf_{\eta\in\mathbb{R}}\sup_{\xi>0}{D}_{S,\tilde{\tau}}(q_0,\rho,\eta,\xi),\end{align} yields \eqref{eq:upper_bound_soft}.
\subsubsection{Proof of the uniqueness of the solution to \eqref{min_max_soft_margin}}
\label{sec:uniqueness}
The solution to \eqref{min_max_soft_margin} should solves the following optimization problem. 
\begin{align}
	&\inf_{\substack {q_0> 0\\ \eta\in\mathbb{R}}} \min_{0\leq\rho\leq 1}\sup_{\xi > 0} \mathcal{D}_{S,\tilde{\tau}} (q_0,\rho,\eta,\xi)\\
	&=\inf_{q_0>0} q_0^2+q_0\min_{0\leq\rho\leq 1}\inf_{\eta\in\mathbb{R}} \sup_{\xi > 0}R_{\tilde{\tau}}(\frac{1}{q_0},\rho,\eta,\xi)
\end{align}
Obviously, the objective function is strictly convex in $q_0$, hence the uniqueness of the solution $q_0^\star$. Assume that at optimum, the maximum over $\xi$ is not  at the limit $\xi\to0$.  Under such a setting, it is easy to check by investigating the derivative of $\mathcal{D}$ with respect to $q_0$  that $\frac{\partial}{\partial \alpha}\left|\right._{q_0\downarrow 0}=-\infty$. Hence $q_0^\star\neq 0$. Assume now that the supremum over $\xi$ is attained at $\xi^\star>0$. 
Considering that $\xi^\star >0$, we can prove through simple calculations that   the Jacobian of the function  $(\rho,\eta)\mapsto \sup_{\xi>0} R_{\tilde{\tau}}(\frac{1}{q_0^\star},\rho,\eta,\xi)$ is positive definite. Hence function  $(\rho,\eta)\mapsto \sup_{\xi>0} R_{\tilde{\tau}}(\frac{1}{q_0^\star},\rho,\eta,\xi)$ is strictly convex, thus showing the uniqueness of $\rho^\star$ and $\eta^\star$ provided that $\xi^\star>0$. 
Assume now that the optimum is at the limit $\xi\to\infty$. From the previous section, this happens only if $\rho^\star=1$, because the function $\xi\mapsto R_{\tilde{\tau}}(\frac{1}{q_0^\star},\rho,\eta,\xi)$ is level-bounded for $0\leq \rho<1$. For all other values of $\rho$, the optimum is attained at $\xi^\star>0$. To prove that for each $\eta\in\mathbb{R}$,  $\rho\mapsto \sup_{\xi>0} R_{\tilde{\tau}}(\frac{1}{q_0^\star},\rho,\eta,\xi)$ is strictly convex, let $\eta\in\mathbb{R}$ and $\rho_1$ and $\rho_2$ be in $[0,1]$ such that $\rho_1\neq \rho_2$. Let $\theta\in(0,1)$. Then, $\rho_{\theta}:=\theta \rho_1+(1-\theta)\rho_2 \in[0,1)$. For any $\xi\in\mathbb{R}$, by strict convexity of function $\rho\mapsto R_{\tilde{\tau}}(\frac{1}{q_0^\star},\rho,\eta,\xi)$, we have:
\begin{align*}
	&R_{\tilde{\tau}}(\frac{1}{q_0^\star},\rho_\theta,\eta,\xi) < \theta R_{\tilde{\tau}}(\frac{1}{q_0^\star},\rho_1,\eta,\xi)+(1-\theta)R_{\tilde{\tau}}(\frac{1}{q_0^\star},\rho_2,\eta,\xi)\\
&\leq \theta \sup_{\xi>0} R_{\tilde{\tau}}(\frac{1}{q_0^\star},\rho_1,\eta,\xi)+(1-\theta)\sup_{\xi>0}R_{\tilde{\tau}}(\frac{1}{q_0^\star},\rho_2,\eta,\xi)
\end{align*}
The above equation applies for any $\xi>0$. It thus applies for $\xi^\star$ such that maximizes $R_{\tilde{\tau}}(\frac{1}{q_0^\star},\rho_\theta,\eta,\xi)$. Such $\xi^\star$ is strictly positive since $\rho_{\theta}\neq 1$.
This proves that   $\rho\mapsto \sup_{\xi>0} R_{\tilde{\tau}}(\frac{1}{q_0^\star},\rho,\eta,\xi)$ is strictly convex for each $\eta\in\mathbb{R}$, hence the uniqueness of $\rho^\star$. In both cases $\rho^\star=1$ or $\rho^\star\neq 1$, it is easy to check that $\eta\mapsto \sup_{\xi>0}R_{\tilde{\tau}}(\frac{1}{q_0^\star},\rho^\star,\eta,\xi)$ is strictly convex in $\eta$. Hence, the uniqueness of $\eta^\star$.   

It remains thus to show that the supremum over $\xi>0$ could not be attained in the limit $\xi\to 0$. Consider extending function ${R}_{\tilde{\tau}}$ to $\xi=0$ by setting $R_{\tilde{\tau}}(\frac{1}{q_0},\rho,\eta,0)=\lim_{\xi\downarrow 0}R_{\tilde{\tau}}(\frac{1}{q_0},\rho,\eta,\xi)=0$. Then,
\begin{equation}
\inf_{q_0>0} \inf_{\eta\in\mathbb{R}}\min_{0\leq\rho\leq 1}\mathcal{D}_{S,\tilde{\tau}}(q_0,\rho,\eta,0)=\inf_{q_0>0} q_0^2=0. 
\label{eq:contradict}
\end{equation}
which shows that in this case that the optimum with respect to $q_0$ is at the limit $q_0\to 0$. On the other hand, it is easy to show that for any $\rho\in[0,1]$, $\eta\in\mathbb{R}$ and $\xi>0$
\begin{align}
\lim_{q_0\downarrow 0}\mathcal{D}_{S,\tilde{\tau}}(q_0,\rho,\eta,\xi)=\frac{\tilde{\tau}\pi_1\delta}{\sqrt{2\pi}}\int_{-\infty}^\infty \frac{1}{\sigma} Dt+\frac{\tilde{\tau}\pi_0\delta}{\sqrt{2\pi}}\int_{-\infty}^\infty \frac{1}{\sigma}\label{eq:contradiction}
\end{align}
 thus raising a contradiction with \eqref{eq:contradict}.  Therefore at optimum the supremum over $\xi$ could not be attained in the limit $\xi\to 0$. 
\subsubsection{Concluding}
The asymptotic analysis of the (AO) problem led to proving that the optimal cost associated with the (AO) converges to
$$
{\phi}_S^{(n)}-\inf_{q_0>0}\inf_{\eta\in\mathbb{R}}\min_{0\leq \rho\leq 1}\sup_{\xi>0} \mathcal{D}_{S,\tilde{\tau}}(q_0,\rho,\eta,\xi)\asto 0. 
$$
Let $\epsilon>0$. Consider the set $\mathcal{S}_{\epsilon}:=\left\{|{\norm{\bf w}}_2-q_0^\star|>\epsilon\right\}$. Let $\phi_{\mathcal{S}_\epsilon}^{(n)}$ be the optimal cost of the (AO) problem when the minimization is constrained on $\mathcal{S}_\epsilon$. The same arguments of our analysis lead to:
$$
\phi_{\mathcal{S}_\epsilon}^{(n)}\asto \inf_{\substack{q_0>0\\ |q_0-q_0^\star|>\epsilon}}\inf_{\eta\in\mathbb{R}}\min_{0\leq \rho\leq 1}\sup_{\xi>0} \mathcal{D}_{S,\tilde{\tau}}(q_0,\rho,\eta,\xi)
$$
Since $q_0^\star$ is the unique minimizer, we thus have:
\begin{align}
	&\inf_{\substack{q_0>0\\ |q_0-q_0^\star|>\epsilon}}\inf_{\eta\in\mathbb{R}}\min_{0\leq \rho\leq 1}\sup_{\xi>0} \mathcal{D}_{S,\tilde{\tau}}(q_0,\rho,\eta,\xi)\\
	&> \inf_{q_0>0}\inf_{\eta\in\mathbb{R}}\min_{0\leq \rho\leq 1}\sup_{\xi>0} \mathcal{D}_{S,\tilde{\tau}}(q_0,\rho,\eta,\xi)
\end{align}
Using Theorem \ref{th:new_CGMT}, we have ${\norm{\bf w}}_2-q_0^\star\asto 0$. Similarly we can leverage the uniqueness of the minimizers $\rho^\star$ and $\eta^\star$ to establish the remaining convergences in Theorem~\ref{th:soft_margin}.  
\subsection{ Main differences with the previous CGMT based analysis}
\label{sec:diff}
When it comes to use the CGMT to analyze the asymptotic behavior of a certain min-max optimization problem, many difficulties are often encountered.
These primarily concern the following situations: 1) The sets over which optimization of the primiary problem is performed are non-compact, while compactness is a key assumption in the proof of the CGMT, 2) The objective in the auxiliary problem is not convex-concave which makes flipping the order of the min-max no longer a permissible operation, 3) The CGMT has thus far been employed to prove convergence in probability which is weaker than almost sure convergence, 4) The objective of the primary problem may not have a finite solution, the use of the CGMT  to characterize such scenarios becomes delicate. All these situations have been more or less encountered in previous works, but either they have been handled in a very complicated fashion or have been partially studied. 
In this work, we develop new tools to handle all these situations. For the reader convenience, we pinpoint for each situation the corresponding new key lemmas or sections  that were used and show that they do not only allow for a much simpler treatment than the approaches pursued in previous works but also they allow to prove stronger results never established before by the CGMT framework.
\begin{itemize}
\item Lack of the compactness assumption: In many situations, the primary problem involves non-compact sets. Non-compactness on the set over which we perform minimization is handled by approaching  the original problem by the limit of a sequence of (PO) problems with compact sets. With each of these (PO) problems, we associate an equivalent (AO) problem and apply the CGMT. If the problem of feasible and the solution is finite, the asymptotic optimal cost of the (AO) should remain the same  for sufficiently large compact sets.   We can thus prove that this stable limit correspond also to the asymptotic optimal cost of the (PO). This approach used in section \ref{sec:hard_margin} to analyze the hard-margin SVM is general and may be applied to any other problem that cannot be easily written in the form of an optimization problems  over compact sets.  
\item The objective in (AO) problems are  not directly amenable to convex-concave problems: If the objective function of the auxiliary problem is not convex-concave in the arguments over which minimization and maximization is performed, we are not allowed in general to flip the order of the min-max, posing a major difficulty towards simplifying the (AO) problem.  
To overcome such an issue, the work in \cite{thrampoulidis-IT} shows that under some conditions, the order of the min-max can be flipped. The provided arguments invoke basically an asymptotic equivalence between the min-max problem and its dual problem (in which the order of the min max is flipped) that they prove using the CGMT. In this work, we handled this issue in a very different way. Particularly, we noted that in many cases, we do not need to flip the order of the min-max, although the final result may appear as if such an operation were actually performed. More specifically, we noted that the optimal variables can be shown to lie in a smaller set in which the objective function of the auxiliary problem is convex-concave. The key Lemma that allows to perform such a treatment is  lemma \ref{lem:prop}. 
\item Proof of almost sure convergence results : The CGMT framework provides lower and upper bounds of the (PO) problem with high probability.  If this lower bound and upper bounds can be made as small as desired, the convergence in probability of the optimal cost of the (PO) problem holds. At first sight, it may appear that the CGMT allows only to prove asymptotic results that hold in probability and not in the almost sure sense. We prove in this work that convergence in the almost sure sense also holds. The main  ingredient is that the probability bounds hold for fixed dimensions and involve a sequence of independent events. The  converse Borel Cantelli Lemma is thus invoked to transfer the comparison of probability bounds into almost sure convergence results. This approach is used in the proof of Theorem \ref{th:new_CGMT}, extending on the original CGMT.  
\item Unboundedness of the objective function: If the constraints of the (PO) are not feasible, the optimization is performed over an empty set and as such the optimal cost is infinite. Characterizing the feasibility condition   of the (PO)  involves identifying the necessary and sufficient condition upon which the  optimal cost of the (PO) is finite. The CGMT allows at first sight to only characterize a sufficient condition of feasibility that guarantees  the boundedness of the optimal cost of the (PO). To the authors' knowledge, the only work that dealt with a similar question using the CGMT concerns solving the phase retrieval problem and has only focused on deriving sufficient conditions for the compactness of the optimal cost of the (PO) while failing to show their necessity \cite{oussama}. In this work, this issue is encountered with the hard-margin SVM, which is not always feasible. It was handled by approaching the original (PO) by the limit of a sequence of (PO), and proving that their optimal cost increase unboundedly. The corresponding techniques can be found in the proof Theorem \ref{th:hardmargin_condition}
 We believe that they may be of individual interest and can target other applications that go beyond the topic of the present work.            
\end{itemize}  

\section{Conclusion}
This paper presents an asymptotically sharp characterization of the performance of the hard-margin and soft-margin SVM. Our analysis builds upon the recently developed CGMT framework, which was mainly used before in the study of high-dimensional regression problems. Considering its use for the analysis of SVM poses  technical challenges, which have been handled through a new promising technical approach. This approach not only allowed for an easier use of the CGMT but also enabled to obtain stronger almost sure convergence results.
 We believe that the developed tools lay the groundwork to facilitate and pave the way towards the use of the CGMT to general optimization based-classifiers such as logistic regression, Adaboost, for which an explicit formulation is not available. 

\appendices
\section{Technical Lemmas}
This appendix gathers some important lemmas that are extensively used when optimizing the auxiliary problem. The following Lemma, whose proof is not complicated, is fundamental to simplify the optimization of the auxiliary problem. As shown above, it allowed in some cases to avoid the necessity of flipping  the order of the min-max when solving  min-max optimization problems.
\begin{lemma}
Let $d_1$ and $d_2$ be two strictly positive integers. Let $X\times Y$ be two non-empty sets in $\mathbb{R}_{d_1}\times \mathbb{R}_{d_2}$. Let $F:X\times Y\to \mathbb{R}$ be a given real-valued  function. Assume there exists $\tilde{X}\subset X$ such that for all ${\bf x}\in X$ there exists $\tilde{\bf x}\in \tilde{X}$ such that:
\begin{equation}
\forall {\bf y}\in Y, \ \ F({\bf x},{\bf y})\geq F(\tilde{\bf x},{\bf y}).
\label{eq:prop}
\end{equation}
Then
$$
\min_{{\bf x}\in X} \max_{{\bf y}\in Y} F({\bf x},{\bf y}) = \min_{\tilde{\bf x}\in \tilde{X}} \max_{{\bf y}\in Y} F(\tilde{\bf x},{\bf y})
$$
\label{lem:prop}
In particular, if $\tilde{\bf X}=\{\tilde{\bf x}\}$, then:
$$
\min_{{\bf x}\in X} \max_{{\bf y}\in Y} F({\bf x},{\bf y}) = \max_{{\bf y}\in Y} F(\tilde{\bf x},{\bf y})
$$
\end{lemma}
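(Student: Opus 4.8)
The plan is to prove the two displayed equalities in Lemma~\ref{lem:prop} by establishing inequalities in both directions; the second (particular) case follows immediately by taking $\tilde X=\{\tilde{\bf x}\}$, so it suffices to handle the general statement.

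\textbf{Step 1: The trivial direction.} Since $\tilde X\subset X$, restricting the outer minimization to $\tilde X$ can only increase the value, so
$$
\min_{{\bf x}\in X}\max_{{\bf y}\in Y}F({\bf x},{\bf y})\ \le\ \min_{\tilde{\bf x}\in\tilde X}\max_{{\bf y}\in Y}F(\tilde{\bf x},{\bf y}).
$$
This requires no hypothesis beyond $\tilde X\subseteq X$.

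\textbf{Step 2: The reverse direction using the domination hypothesis.} Fix an arbitrary ${\bf x}\in X$. By \eqref{eq:prop} there is $\tilde{\bf x}\in\tilde X$ with $F({\bf x},{\bf y})\ge F(\tilde{\bf x},{\bf y})$ for \emph{every} ${\bf y}\in Y$. Taking the supremum over ${\bf y}\in Y$ on both sides (the pointwise inequality is preserved by $\sup$) gives
$$
\max_{{\bf y}\in Y}F({\bf x},{\bf y})\ \ge\ \max_{{\bf y}\in Y}F(\tilde{\bf x},{\bf y})\ \ge\ \min_{\tilde{\bf x}'\in\tilde X}\max_{{\bf y}\in Y}F(\tilde{\bf x}',{\bf y}).
$$
Since this holds for every ${\bf x}\in X$, we may take the infimum over ${\bf x}\in X$ on the left, obtaining
$$
\min_{{\bf x}\in X}\max_{{\bf y}\in Y}F({\bf x},{\bf y})\ \ge\ \min_{\tilde{\bf x}\in\tilde X}\max_{{\bf y}\in Y}F(\tilde{\bf x},{\bf y}).
$$
Combining with Step~1 yields the claimed equality. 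For the particular case, when $\tilde X=\{\tilde{\bf x}\}$ the right-hand side is simply $\max_{{\bf y}\in Y}F(\tilde{\bf x},{\bf y})$.

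\textbf{On obstacles.} There is essentially no deep obstacle here; the only point demanding a small amount of care is notational/measure-theoretic housekeeping: the statement writes $\min$ and $\max$, but nothing guarantees these extrema are attained, so the argument should be read with $\inf$/$\sup$ in mind (the preservation of a pointwise inequality under $\sup$, and the monotonicity of $\inf$, both hold regardless of attainment). One should also note that the inequality in \eqref{eq:prop} must hold for \emph{all} ${\bf y}\in Y$ simultaneously with the \emph{same} $\tilde{\bf x}$ — this uniformity in ${\bf y}$ is exactly what licenses passing to the supremum in Step~2, and it is the crux of why replacing ${\bf w}_\perp$ by $\mathrm{sign}(\alpha_2){\bf h}_\perp/\norm{{\bf h}_\perp}_2$ in the (AO) simplification does not amount to an illegitimate swap of $\min$ and $\max$.
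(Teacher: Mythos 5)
Your proof is correct and follows essentially the same two-inequality argument as the paper: the trivial bound from $\tilde X\subseteq X$, and the reverse bound obtained by dominating $\max_{{\bf y}}F({\bf x},\cdot)$ via the uniform-in-${\bf y}$ hypothesis \eqref{eq:prop} and then minimizing over ${\bf x}$. Your added remark about reading $\min/\max$ as $\inf/\sup$ when attainment is not guaranteed is a fair observation but does not alter the argument.
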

\begin{proof}
It is easy to see that $
\displaystyle\min_{{\bf x}\in X} \max_{{\bf y}\in Y} F({\bf x},{\bf y})$ is upper-bounded by:
$$
\min_{{\bf x}\in X} \max_{{\bf y}\in Y} F({\bf x},{\bf y}) \leq \min_{\tilde{\bf x}\in \tilde{X}} \max_{{\bf y}\in Y} F(\tilde{\bf x},{\bf y})
$$
To prove the lower-bound, we will exploit the property described in \eqref{eq:prop}. Let ${\bf x}\in X$ and let $\tilde{\bf x}({\bf x})$ is such that  for all $y\in Y$, 
$$
F({\bf x},{\bf y})\geq F(\tilde{\bf x}({\bf x}),{\bf y})
$$
Hence
$$
\max_{{\bf y}\in Y}F({\bf x},{\bf y})\geq \max_{{\bf y}\in Y} F(\tilde{\bf x}({\bf x}),{\bf y}) \geq \min_{\tilde{\bf x}\in \tilde{X}} \max_{{\bf y}\in Y} F(\tilde{\bf x},{\bf y})
$$
which proves that:
$$
\min_{{\bf x}\in X}\max_{{\bf y}\in Y}F({\bf x},{\bf y})\geq \min_{\tilde{\bf x}\in \tilde{X}} \max_{{\bf y}\in Y} F(\tilde{\bf x},{\bf y})
$$
\end{proof}
\begin{lemma}
\label{lem:equality}
Let ${d}\in\mathbb{N}^\star$. Let ${S}_x$ be a compact non-empty set in $\mathbb{R}^{d}$. Let $f$ and $c$ be two continuous functions over $S_x$ such that the set $\left\{c(x)\leq 0\right\}$ is non-empty. Then:
$$
\min_{\substack{{\bf x}\in{S}_x \\ c({\bf x})\leq 0}}  f({\bf x})\\
=\sup_{\delta \geq 0} \min_{\substack{{\bf x}\in{S}_x \\ c({\bf x})\leq \delta}}  f({\bf x})= \inf_{\delta>0} \min_{\substack{{\bf x}\in{S}_x \\ c({\bf x})\leq -\delta}}  f({\bf x}) 
$$ 
\end{lemma}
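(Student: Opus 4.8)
The plan is to introduce the value function $m(\delta):=\min_{x\in S_x,\,c(x)\le\delta}f(x)$, which is well defined and finite for every $\delta\ge 0$: the constraint set $\{x\in S_x:c(x)\le\delta\}$ is a closed subset of the compact set $S_x$, hence compact, and it is non-empty because it contains $\{c\le 0\}\ne\emptyset$; so the continuous function $f$ attains its minimum there by Weierstrass. The first observation is that $m$ is non-increasing, since enlarging $\delta$ enlarges the feasible set and can only decrease the minimum. Consequently $\sup_{\delta\ge 0}m(\delta)=m(0)=\min_{x\in S_x,\,c(x)\le 0}f(x)$, which is exactly the first asserted equality; no further work is needed there.

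For the second equality I would first rewrite $\inf_{\delta>0}m(-\delta)$ by interchanging the two infima: a point $x$ with $c(x)<0$ is feasible for the inner problem as soon as $0<\delta\le -c(x)$, whence $\inf_{\delta>0}\min_{x\in S_x,\,c(x)\le-\delta}f(x)=\inf_{x\in S_x,\,c(x)<0}f(x)$; conversely each inner feasible set $\{c\le-\delta\}$ sits inside $\{c<0\}$, so the two quantities coincide. The inequality $\inf_{c<0}f\ge\min_{c\le 0}f$ is immediate from $\{c<0\}\subseteq\{c\le 0\}$, so it remains to show $\inf_{c<0}f\le\min_{c\le 0}f$. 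Let $x^\star\in S_x$ attain $\min_{c\le 0}f$. If $c(x^\star)<0$ we are done at once, since $x^\star$ itself is admissible in $\inf_{c<0}f$. The remaining case $c(x^\star)=0$ is the heart of the argument: one produces a sequence $x_k\to x^\star$ with $c(x_k)<0$ and, by continuity of $f$, $f(x_k)\to f(x^\star)$, which yields $\inf_{c<0}f\le f(x^\star)=\min_{c\le 0}f$.

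The main obstacle is precisely this boundary case: it requires every minimizer of $f$ over $\{c\le 0\}$ lying on $\{c=0\}$ to belong to the closure of $\{c<0\}$, a Slater-type condition. This holds in all the situations where the lemma is invoked in the paper: there $S_x$ is a compact interval (or a convex set), the constraint function is continuous and, by the feasibility condition \eqref{eq:cond_hard_margin}, takes strictly negative values, and in fact $q_0\mapsto\min_{\rho,\eta}D_H(q_0,\rho,\eta)$ is monotone, so the sublevel set $\{c\le 0\}$ is of the form $[q_0^\star,k]$ with $\{c<0\}=(q_0^\star,k]$ and the minimizer $q_0^\star$ of $q_0^2$ sits in $\overline{\{c<0\}}$. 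Under this proviso the approximation step is routine—take $x_k$ on a continuous path in $S_x$ joining $x^\star$ to a point where $c<0$, close enough to $x^\star$ that $c(x_k)<0$, and use continuity of $f$. Assembling the three ingredients—monotonicity of $m$, the interchange of infima, and the approximation—gives the full chain of equalities.
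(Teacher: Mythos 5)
Your handling of the first equality is correct and coincides in substance with the paper's: since $\delta=0$ is admissible in the supremum and the value function $m(\delta)=\min_{x\in S_x,\,c(x)\le\delta}f(x)$ is non-increasing, the supremum equals $m(0)$. (The paper's additional $\epsilon$-argument also uses only monotonicity, so your shorter version loses nothing. Be aware, though, that what the applications of the lemma actually require is the stronger statement $\sup_{\delta>0}m(\delta)=m(0)$, i.e.\ right-continuity of $m$ at $0$; that version genuinely needs compactness of $S_x$ and continuity of $c$ and $f$ --- take minimizers $x_{\delta_k}$ with $\delta_k\downarrow 0$, extract a convergent subsequence, and pass to the limit --- and neither your argument nor the paper's establishes it.)

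For the second equality your route is genuinely different from the paper's, which gives no argument at all beyond "following along the same lines." Your reduction $\inf_{\delta>0}m(-\delta)=\inf_{\{c<0\}}f$ is correct, and your diagnosis is the key point: the remaining identity $\inf_{\{c<0\}}f=\min_{\{c\le0\}}f$ does \emph{not} follow from the stated hypotheses. The lemma as written is in fact false: take $S_x=[-1,1]$, $c(x)=x^2$, $f(x)=x$; then $\{c\le0\}=\{0\}$ is non-empty, but $\{c\le-\delta\}$ is empty for every $\delta>0$, so the right-most expression is vacuous (or $+\infty$) while the left-hand side equals $0$. The Slater-type condition you isolate (some minimizer of $f$ on $\{c\le0\}$ lies in the closure of $\{c<0\}$) is exactly the missing hypothesis, and you are right that it holds where the lemma is invoked: monotonicity of $q_0\mapsto\min_{\rho,\eta}D_H(q_0,\rho,\eta)$ together with condition \eqref{eq:cond_hard_margin} makes the strict sublevel set an interval whose closure contains the constrained minimizer $q_0^\star$. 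So your proposal is not a proof of the lemma as literally stated --- no such proof exists --- but it is a correct proof of the corrected statement, and it exposes a gap that the paper's own proof passes over silently.
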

\begin{proof}
We will prove only the first equality, the second one following along the same lines. 
Obviously, the following inequality holds true,
$$
\min_{\substack{{\bf x}\in{S}_x \\ c({\bf x})\leq 0}}  f({\bf x})\leq \overline{f}\triangleq \sup_{\delta \geq 0}\min_{\substack{{\bf x}\in{S}_x \\ c({\bf x})\leq \delta}}  f({\bf x})
$$
 for all $\delta \geq 0$, 
$$
\min_{\substack{{\bf x}\in{S}_x \\ c({\bf x})\leq 0}}  f({\bf x})\geq  \min_{\substack{{\bf x}\in{S}_x \\ c({\bf x})\leq \delta}}  f({\bf x})
$$
Hence, 
$$
\min_{\substack{{\bf x}\in{S}_x \\ c({\bf x})\leq 0}}  f({\bf x})\geq \sup_{\delta\geq 0} \min_{\substack{{\bf x}\in{S}_x \\ c({\bf x})\leq \delta}}  f({\bf x})
$$
Let $\overline{f}=\displaystyle\sup_{\delta\geq 0} \min_{\substack{{\bf x}\in{S}_x \\ c({\bf x})\leq \delta}}  f({\bf x})$. Let $\epsilon>0$. There exists $\delta_\epsilon$  and ${\bf x}_\epsilon^\star$ such that $c({\bf x}_\epsilon)\leq \delta_\epsilon$,   $f({\bf x}_\epsilon^\star)=\displaystyle\min_{\substack{{\bf x}\in{S}_x \\ c({\bf x})\leq \delta^\epsilon}}  f({\bf x})$, and
$$
\overline{f}\leq f({\bf x}_\epsilon^\star)+\epsilon
$$
Now for all $m\in\mathbb{N}^\star$ such that $m\geq m_0\triangleq \ceil[\big]{\frac{1}{\delta^\epsilon}} $,  define ${\bf x}_m^\star$ such that $f({\bf x}_m^\star)= \displaystyle\min_{\substack{{\bf x}\in{S}_x \\ c({\bf x})\leq \frac{1}{m}}}  f({\bf x})$. Clearly, $f({\bf x}_m^\star)\geq f({\bf x}_\epsilon^\star)$. Hence, 
\begin{equation}
\overline{f}\leq f({\bf x}_m^\star)+\epsilon\leq \min_{\substack{{\bf x}\in{S}_x \\ c({\bf x})\leq 0}}  f({\bf x})+\epsilon
\label{eq:later}
\end{equation}
Assume that $\overline{f}>  \min_{\substack{{\bf x}\in{S}_x \\ c({\bf x})\leq 0}} f({\bf x})$. Then, taking $\epsilon=\frac{1}{2}\left(\overline{f}-\displaystyle\min_{\substack{{\bf x}\in{S}_x \\ c({\bf x})\leq 0}}f({\bf x})\right)$, \eqref{eq:later} leads to a contradiction.  
\end{proof}
\begin{lemma}
Let $X$ and $Y$ be two convex sets. Let $f:X\times Y\to\mathbb{R}$ be a jointly convex function in $X\times Y$. Assume that $\forall y\in Y$, $\inf_{x\in X}f(x,y)>-\infty$. Then:
$g:y\mapsto \inf_{x\in X} f(x,y)$   is convex in $Y$. 
\label{lem:convexity}
\end{lemma}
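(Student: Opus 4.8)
The plan is to verify the defining convexity inequality for $g$ directly, using an $\epsilon$-approximation to circumvent the fact that the infimum defining $g$ need not be attained. First I would fix $y_0,y_1\in Y$ and $\lambda\in[0,1]$, and set $y_\lambda=\lambda y_0+(1-\lambda)y_1$, which lies in $Y$ since $Y$ is convex. The goal is to show $g(y_\lambda)\le \lambda g(y_0)+(1-\lambda)g(y_1)$. Note that the right-hand side is a genuine real number: the hypothesis $\inf_{x\in X}f(x,y)>-\infty$ guarantees $g(y)\in\mathbb{R}$ for every $y\in Y$ (if $X=\emptyset$ the statement is vacuous, as then $g\equiv+\infty$ is trivially convex), so the inequality is meaningful.

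Next, given an arbitrary $\epsilon>0$, I would invoke the definition of the infimum to pick $x_0,x_1\in X$ with $f(x_0,y_0)\le g(y_0)+\epsilon$ and $f(x_1,y_1)\le g(y_1)+\epsilon$. Since $X$ is convex, $x_\lambda:=\lambda x_0+(1-\lambda)x_1\in X$, and the joint convexity of $f$ on $X\times Y$ yields
$$
g(y_\lambda)\le f(x_\lambda,y_\lambda)\le \lambda f(x_0,y_0)+(1-\lambda)f(x_1,y_1)\le \lambda g(y_0)+(1-\lambda)g(y_1)+\epsilon .
$$
Letting $\epsilon\downarrow 0$ gives $g(y_\lambda)\le \lambda g(y_0)+(1-\lambda)g(y_1)$, which is exactly convexity of $g$ on $Y$.

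The only point that requires any attention is that the infimum in the definition of $g$ may fail to be attained, and this is precisely what the $\epsilon$-argument handles; there is no genuine obstacle. The finiteness assumption $\inf_{x\in X}f(x,y)>-\infty$ enters only to ensure that $g$ is $\mathbb{R}$-valued, so that the convexity inequality does not degenerate; everything else follows from the convexity of $X$, $Y$, and $f$ together with the definition of the infimum.
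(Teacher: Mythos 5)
Your proof is correct and is exactly the standard $\epsilon$-approximate-minimizer argument for partial minimization of a jointly convex function; the paper itself gives no proof, only a citation to Boyd and Vandenberghe, and your argument is precisely the one found in that reference, so nothing further is needed.
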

\begin{proof}
See \cite{boyd}
\end{proof}
\begin{lemma}
Let ${\bf a}=\left[a_1,\dots,a_n\right]^{T}$ be a vector in $\mathbb{R}^{n\times 1}$ and $\theta$ be a positive scalar.  Then:
$$
	\max_{\substack{{\bf u}\geq 0\\ {\norm{\bf u}}_2=\theta}}  {\bf a}^{T}{\bf u}= \theta\sqrt{\sum_{i=1}^n (a_i)_{+}^2}
$$
\label{lem:opt_norm}
\end{lemma}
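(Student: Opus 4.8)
The plan is to prove the identity by a standard two-sided argument: a Cauchy--Schwarz type upper bound on ${\bf a}^{T}{\bf u}$ valid for every feasible ${\bf u}$, matched by an explicit feasible maximizer.

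First I would derive the upper bound. Since every feasible ${\bf u}$ satisfies ${\bf u}\geq 0$, each summand obeys $a_i u_i\leq (a_i)_{+}u_i$ (with equality when $a_i\geq 0$, and $a_i u_i\leq 0=(a_i)_{+}u_i$ when $a_i<0$), so that ${\bf a}^{T}{\bf u}=\sum_{i=1}^n a_i u_i\leq \sum_{i=1}^n (a_i)_{+}u_i$. Applying Cauchy--Schwarz to the right-hand side and using $\norm{{\bf u}}_2=\theta$ gives
$$
{\bf a}^{T}{\bf u}\leq \sqrt{\sum_{i=1}^n (a_i)_{+}^2}\;\norm{{\bf u}}_2=\theta\sqrt{\sum_{i=1}^n (a_i)_{+}^2},
$$
and taking the supremum over the feasible set yields one direction of the claimed identity.

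For the reverse inequality it suffices to exhibit a feasible point achieving this value. In the non-degenerate case $\sum_{i=1}^n (a_i)_{+}^2>0$ --- which is the only situation arising in our applications of this lemma, where ${\bf a}$ always carries a Gaussian summand and hence has a strictly positive entry almost surely --- I would take
$$
{\bf u}^{\star}\triangleq \frac{\theta}{\sqrt{\sum_{j=1}^n (a_j)_{+}^2}}\big((a_1)_{+},\dots,(a_n)_{+}\big)^{T},
$$
which is nonnegative and has Euclidean norm exactly $\theta$, hence feasible. Using the elementary identity $a_i (a_i)_{+}=(a_i)_{+}^2$ one obtains ${\bf a}^{T}{\bf u}^{\star}=\theta\sqrt{\sum_{i=1}^n (a_i)_{+}^2}$, matching the upper bound. (When ${\bf a}\leq 0$ has a zero entry, ${\bf u}=\theta{\bf e}_j$ placed at that coordinate attains the value $0$; the complementary sub-case in which ${\bf a}$ is strictly negative entrywise never occurs in our use of the lemma.)

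Since the statement is an elementary convex-optimization fact, there is no genuine obstacle here; the only point worth a word of care is the degenerate case $(a_i)_{+}=0$ for all $i$, which is irrelevant to every place where the lemma is invoked.
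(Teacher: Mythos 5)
Your proof is correct. Note that the paper states Lemma~\ref{lem:opt_norm} without any proof, so there is nothing to compare against; your two-sided argument (the pointwise bound $a_iu_i\leq (a_i)_{+}u_i$ for ${\bf u}\geq 0$ followed by Cauchy--Schwarz, matched by the explicit maximizer proportional to $((a_1)_{+},\dots,(a_n)_{+})^{T}$) is the canonical one and cleanly fills that gap. Your parenthetical remark about the degenerate case is also a genuine (if harmless) observation about the lemma itself: when every $a_i<0$ strictly, the left-hand side equals $\theta\max_i a_i<0$ while the right-hand side is $0$, so the identity as stated fails there; since the vector ${\bf a}$ to which the lemma is applied in \eqref{eq:phi_2} contains an independent Gaussian component, this case is almost surely excluded, exactly as you say.
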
 
\begin{lemma}
\label{lem:opt}
Let ${\bf a}\in\mathbb{R}^{n\times 1}$. Let $\tilde{\beta}$ and $\tau$ be  positive scalars. Then,
if $\tilde{\beta}=0$,
$$
	\max_{0\leq {\bf u}\leq \tau} {\bf u}^{T}{\bf a}-\tilde{\beta}\norm{{\bf u}}_2=\max_{0\leq {\bf u}\leq \tau} {\bf u}^{T}{\bf a}=\sum_{i=1}^n \tau (a_i)_{+} 
$$
If $\tilde{\beta}\neq 0$, then:
\begin{align}
	&\max_{0\leq{\bf u}\leq \tau} {\bf u}^{T}{\bf a}-\tilde{\beta}\|{\bf u}\|_2=\sup_{\xi\geq 0} \sum_{i=1}^n(a_i\tau-\tau^2\frac{1}{2\xi}) {\bf 1}_{\left\{{a_i\xi}\geq \tau\right\}}\nonumber\\
	&+\sum_{i=1}^n \frac{a_i^2\xi}{2}{\bf 1}_{\left\{0\leq {a_i\xi}< \tau\right\}}-\frac{\tilde{\beta}^2\xi}{2}
\label{eq:formula}
\end{align}
Moreover,   function $\xi\mapsto \sum_{i=1}^n(a_i\tau -\tau^{2}\frac{1}{2\xi}){\bf 1}_{\left\{a_i\xi\geq \tau\right\}}+\sum_{i=1}^n\frac{{a}_i^2\xi}{2}{\bf 1}_{\left\{0\leq a_i\xi \leq \tau\right\}} -\frac{\tilde{\beta}^2\xi}{2}$ is concave in $\xi$ when $\xi\in(0,\infty)$. 
\end{lemma}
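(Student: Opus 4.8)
The plan is to break the coupling created by the Euclidean norm with a single scalar dual variable, and then to carry out the maximization over ${\bf u}$ coordinate by coordinate. When $\tilde\beta=0$ nothing couples the coordinates: the objective ${\bf u}^{T}{\bf a}$ is separable, so $\max_{0\le{\bf u}\le\tau}{\bf u}^{T}{\bf a}=\sum_{i=1}^{n}\max_{0\le u_i\le\tau}a_iu_i=\sum_{i=1}^{n}\tau(a_i)_{+}$, which is the asserted identity. For the rest I assume $\tilde\beta>0$, and I would first record the elementary variational identity
\[
\tilde\beta\|{\bf u}\|_2=\inf_{\xi>0}\Bigl(\frac{\tilde\beta^{2}\xi}{2}+\frac{\|{\bf u}\|_2^{2}}{2\xi}\Bigr),
\]
valid for every ${\bf u}$ by the arithmetic--geometric--mean inequality, the infimum being attained at $\xi=\|{\bf u}\|_2/\tilde\beta$ when ${\bf u}\neq 0$ and approached as $\xi\downarrow 0$ when ${\bf u}=0$. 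Writing $-\tilde\beta\|{\bf u}\|_2=\sup_{\xi>0}\bigl(-\tilde\beta^{2}\xi/2-\|{\bf u}\|_2^{2}/(2\xi)\bigr)$, substituting into the objective, and exchanging the two suprema (licit, since both are suprema), one gets
\[
\max_{0\le{\bf u}\le\tau}{\bf u}^{T}{\bf a}-\tilde\beta\|{\bf u}\|_2
=\sup_{\xi>0}\Bigl(-\frac{\tilde\beta^{2}\xi}{2}+\max_{0\le{\bf u}\le\tau}\sum_{i=1}^{n}\Bigl(a_iu_i-\frac{u_i^{2}}{2\xi}\Bigr)\Bigr).
\]

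For fixed $\xi>0$ the inner maximization is separable, and each coordinate asks to maximize the concave quadratic $u\mapsto a_iu-u^{2}/(2\xi)$ over $[0,\tau]$, whose unconstrained maximizer is $a_i\xi$. A three-way case split --- $a_i\le 0$ (optimum $u=0$, value $0$), $0<a_i\xi\le\tau$ (interior optimum, value $a_i^{2}\xi/2$), and $a_i\xi>\tau$ (boundary optimum $u=\tau$, value $a_i\tau-\tau^{2}/(2\xi)$) --- reproduces exactly the summand appearing in \eqref{eq:formula}; the value at $\xi=0$ in the statement is the $\xi\downarrow 0$ limit, which is $0$ and is consistent since then ${\bf u}=0$ is optimal, and when every $a_i\le 0$ both sides of \eqref{eq:formula} equal $0$ so nothing is lost by allowing $\xi=0$ in the supremum.

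It remains to prove that $G(\xi):=\sum_{i=1}^{n}\max_{0\le u\le\tau}\bigl(a_iu-u^{2}/(2\xi)\bigr)-\tilde\beta^{2}\xi/2$ is concave on $(0,\infty)$. The step I would use is that $f(u,\xi)=a_iu-u^{2}/(2\xi)$ is \emph{jointly} concave on $\mathbb{R}\times(0,\infty)$: its Hessian has entries $f_{uu}=-1/\xi$, $f_{u\xi}=u/\xi^{2}$, $f_{\xi\xi}=-u^{2}/\xi^{3}$, so $f_{uu}\le 0$ and $f_{uu}f_{\xi\xi}-f_{u\xi}^{2}=0$, i.e.\ the Hessian is negative semidefinite. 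Applying Lemma~\ref{lem:convexity} to $-f$ (whose partial infimum over the compact interval $[0,\tau]$ is finite) shows that $\xi\mapsto\max_{0\le u\le\tau}f(u,\xi)$ is concave, so each summand of $G$ is concave in $\xi$; adding the affine term $-\tilde\beta^{2}\xi/2$ keeps $G$ concave. As a cross-check, the $i$th summand is affine with slope $a_i^{2}/2$ on $(0,\tau/a_i]$, equals the concave function $a_i\tau-\tau^{2}/(2\xi)$ on $[\tau/a_i,\infty)$, and the two pieces agree in value and in first derivative at $\xi=\tau/a_i$.

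The main obstacle is precisely this concavity claim: the tempting one-liner ``$G$ is a sum of pointwise maxima of functions concave in $\xi$, hence concave'' is false, since a pointwise supremum of concave functions is generically convex; the correct mechanism is the joint concavity of $(u,\xi)\mapsto a_iu-u^{2}/(2\xi)$, which is what actually licenses the partial-maximization step. Everything else --- the AM--GM identity, the coordinatewise case analysis, and the bookkeeping around ${\bf u}=0$ and $\xi\downarrow 0$ --- is routine.
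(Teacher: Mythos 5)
Your proof is correct and follows essentially the same route as the paper's: the variational identity $\tilde\beta\norm{{\bf u}}_2=\inf_{\xi>0}\bigl(\tilde\beta^{2}\xi/2+\norm{{\bf u}}_2^{2}/(2\xi)\bigr)$ (the paper writes it in the variable $\chi=\tilde\beta\xi$ and changes variables afterwards), exchange of suprema, coordinatewise three-case maximization, and concavity via joint concavity of $(u,\xi)\mapsto a_iu-u^{2}/(2\xi)$ followed by partial maximization. The only cosmetic difference is that you certify joint concavity by a Hessian computation whereas the paper identifies $u^{2}/(2\xi)$ as the perspective function of $x\mapsto x^{2}$; your explicit treatment of the $\tilde\beta=0$ and $\xi\downarrow 0$ boundary cases is a small bonus the paper omits.
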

\begin{proof}
Using the fact that:
$$
\|{\bf u}\|=\inf_{\chi>0 } \frac{\chi}{2}+\frac{\|{\bf u}\|_2^2}{2\chi}
$$
we obtain:
\begin{align}
&\max_{0\leq{\bf u}\leq \tau} {\bf u}^{T}{\bf a}-\tilde{\beta}\|{\bf u}\|_2=\max_{0\leq{\bf u}\leq \tau}\sup_{\chi\geq 0}  {\bf u}^{T}{\bf a}-\tilde{\beta}\left[\frac{\chi}{2}+\frac{\|{\bf u}\|_2^2}{2\chi}\right]\\
&=\sup_{\chi>0 } \max_{0\leq{\bf u}\leq \tau} \sum_{i=1}^n a_iu_i -\frac{\tilde{\beta}}{2\chi}u_i^2-\tilde{\beta}\frac{\chi}{2}
\label{eq:maxu}
\end{align}
Function ${x}\mapsto a_i x -\frac{\tilde{\beta}}{2\chi}x^2$ is increasing on $(-\infty,\frac{a_i\chi}{\tilde{\beta}})$ and decreasing on  $(\frac{a_i\chi}{\tilde{\beta}},\infty)$ taking its maximum at ${x}^\star=\frac{a_i\chi}{\tilde{\beta}}$. Hence, 
$$
\max_{0\leq x\leq \tau} a_i x -\frac{\tilde{\beta}}{2\chi}x^2 =\begin{cases}
0 & {\textnormal{if}}\ \ \frac{a_i\chi}{\tilde{\beta}}<0\\
a_i\tau - \tau^2\frac{\tilde{\beta}}{2\chi} & \textnormal{if} \ \ 0\leq \tau< \frac{a_i\chi}{\tilde{\beta}}\\
\frac{a_i^2\chi}{2\tilde{\beta}}  & \textnormal{if} \ \  0\leq \frac{a_i\chi}{\tilde{\beta}}\leq \tau\end{cases}
$$
Hence,
	\begin{align*}
		&\max_{0\leq{\bf u}\leq \tau} {\bf u}^{T}{\bf a}-\tilde{\beta}\|{\bf u}\|_2=\sup_{\chi> 0} \sum_{i=1}^n(a_i\tau-\tau^2\frac{\tilde{\beta}}{2\chi}) {\bf 1}_{\left\{\frac{a_i\chi}{\tilde{\beta}}\geq \tau\right\}}\\
		&+\sum_{i=1}^n \frac{a_i^2\chi}{2\tilde{\beta}}{\bf 1}_{\left\{0\leq \frac{a_i\chi}{\tilde{\beta}}< \tau\right\}}-\frac{\tilde{\beta}\chi}{2}
	\end{align*}
Performing the change of variable $\xi:\triangleq \frac{\chi}{\tilde{\beta}}$ yields \eqref{eq:formula}. 

We will now proceed to proving the concavity of function $\varphi_i:\xi\mapsto (a_i\tau -\tau^{2}\frac{1}{2\xi}){\bf 1}_{\left\{a_i\xi\geq \tau\right\}}+\frac{{a}_i^2\xi}{2}{\bf 1}_{\left\{0\leq a_i\xi < \tau\right\}} -\frac{\tilde{\beta}^2\xi}{2}$. To this end, note that: 
$$
\varphi_i(\xi)=\max_{0\leq u_i\leq \tau} a_iu_i-\frac{u_i^2}{2\xi} -\tilde{\beta}^2\frac{\xi}{2}.
$$
Function $(\xi,u_i)\mapsto \frac{u_i^2}{2\xi}$ is jointly convex in $\mathbb{R}_{>0}\times [0,\tau]$ since it is the perspective function of $x\mapsto x^2$. Hence, $(\xi,u_i)$ is jointly concave in  $\mathbb{R}_{>0}\times [0,\tau]$. Using Lemma \ref{lem:convexity}, we thus get that $\xi\mapsto \varphi_i(\xi)$ is concave in $\mathbb{R}_{>0}$

\end{proof}

\bibliographystyle{IEEEtran}
\bibliography{IEEEabrv,myref}

\end{document}